\newcommand{\numberOfIslaLitmusTests}{119}
\newcommand{\numberOfpKVMTests}{14}
\newif\ifworkingdraft
\newif\iflinenum
\newif\ifvmsaTestIndexStandalone
\newif\ifinternaldiscussion
\newif\ifTODO
\renewcommand{\l@section}{\@dottedtocline{1}{1.5em}{2.6em}}
\renewcommand{\l@subsection}{\@dottedtocline{2}{4.0em}{3.6em}}
\renewcommand{\l@subsubsection}{\@dottedtocline{3}{7.4em}{4.5em}}
\renewcommand{\l@paragraph}{\@dottedtocline{3}{7.4em}{4.5em}}
\newtheorem{theorem}{Theorem}
\newtheorem{lemma}{Lemma}
\newtheorem{definition}{Definition}
\newenvironment{proof}{\smallskip\noindent\textbf{Proof:}}{\hfill$\square$}
\newcounter{rgq}
\definecolor[named]{ACMBlue}{cmyk}{1,0.1,0,0.1}
\definecolor[named]{ACMYellow}{cmyk}{0,0.16,1,0}
\definecolor[named]{ACMOrange}{cmyk}{0,0.42,1,0.01}
\definecolor[named]{ACMRed}{cmyk}{0,0.90,0.86,0}
\definecolor[named]{ACMLightBlue}{cmyk}{0.49,0.01,0,0}
\definecolor[named]{ACMGreen}{cmyk}{0.20,0,1,0.19}
\definecolor[named]{ACMPurple}{cmyk}{0.55,1,0,0.15}
\definecolor[named]{ACMDarkBlue}{cmyk}{1,0.58,0,0.21}
\newcommand{\TODO}[1]{{\color{blue}TODO: #1}}
\newcommand{\marginTODO}[1]{\marginpar{\color{blue}TODO: #1}}
\newcommand\todojp[1]{\textcolor{cyan}{TODO:jp: #1}}
\newcommand{\TODO}[1]{}
\newcommand{\marginTODO}[1]{}
\newcommand\todojp[1]{}
\newcommand{\testpara}[1]{\paragraph{#1}: }
\newcommand{\testPara}[1]{\paragraph{#1}: }
\newcommand{\mysubsubsection}[1]{\paragraph*{\upshape\textbf{#1}:} }
\newcommand{\PKVMTEST}[1]{\texttt{#1}}
\definecolor{IslaInitialState}{rgb}{0.92,1,1}
\definecolor{Forbid}{rgb}{1,0.92,0.92}
\definecolor{darkblue}{rgb}{0,0.0,0.5}
\lstdefinelanguage{none}{
  identifierstyle=
}
\lstdefinelanguage{cat}{
  basicstyle=\linespread{0.8}\ttfamily,
  keywordstyle=\bfseries,
  morekeywords={acyclic, let, irreflexive, as, empty},
  escapechar=@,
  morecomment=[f][\color{DarkGreen}]{//},
  morecomment=[n][\color{DarkGreen}]{(*}{*)},
  moredelim=**[is][\color{red}]{>}{<},
  moredelim=**[is][\color{gray}]{[*}{*]},
  showstringspaces=false,
  keepspaces=true,
  breaklines=true,
  mathescape
}
\lstdefinelanguage{smt2}{
  basicstyle=\linespread{0.8}\ttfamily,
  keywordstyle=\bfseries,
  escapechar=@,
  morecomment=[f][\color{DarkGreen}]{//},
  morecomment=[n][\color{DarkGreen}]{(*}{*)},
  moredelim=**[is][\color{red}]{>}{<},
  moredelim=**[is][\color{gray}]{[*}{*]},
  showstringspaces=false,
  keepspaces=true,
  breaklines=true,
  mathescape
}
\lstdefinelanguage{AArch64}{
  basicstyle=\ttfamily\small,
  keywordstyle=[1]\bfseries,
  morekeywords = [1]{str, dmb, dsb, tlbi, ldr, isb, mov, mrs, msr, eret, add, cbz, and, cmp, csel, eor, orr, svc, lsr, sub, cbnz,
                  STR, STLR, DMB, DSB, TLBI, LDR, ISB, MOV, MRS, MSR, ERET, ADD, CBZ, AND, CMP, CSEL, EOR, ORR, SVC, LSR, SUB, CBNZ, LDAR},
  keywordstyle=[2]{\color{darkblue}},
  morekeywords=[2]{X0, X1, X2, X3, X4, X5, X6, X7, X8, X9, X10, X11, X12, X13, X14, X15, X16, X17, X18, X19, X20, X21, X22, X23},
  morecomment = [l]{//},
  commentstyle=\sffamily\itshape\footnotesize,
}
\lstdefinelanguage{IslaLitmusName}{
  basicstyle=\rmfamily
}
\lstdefinelanguage{IslaLitmusExp}{
  keywordstyle=\bfseries,
  basicstyle=\ttfamily\footnotesize,
  morekeywords = {pte2, pte3, desc2, desc3, pa, extz, exts, bvlshr, bvshl, ttbr, mkdesc2, mkdesc3, page},
  alsodigit={0x,0b},
}
\lstdefinelanguage{IslaPageTableSetup}{
  morekeywords = {with, code, identity, s1table, s2table, physical, intermediate, virtual, option},
  keywordstyle=\bfseries,
  basicstyle=\ttfamily\footnotesize,
  literate= {|->}{$\mapsto$}1,
  mathescape=true
}
\newcommand{\asm}[1]{\texttt{#1}}
\newcommand{\cc}[1]{\lstinline{#1}}
\newcommand{\cat}[1]{\texttt{#1}}
\newcommand{\herd}[1]{\texttt{#1}}
\newcommand{\herdevent}[1]{\textbf{\texttt{#1}}}
\renewcommand\paragraph{\@startsection{paragraph}{4}{\z@}%
                       {-12\p@ \@plus -4\p@ \@minus -4\p@}%
                       {-0.5em \@plus -0.22em \@minus -0.1em}%
                       {\normalfont\normalsize\bfseries}}
\renewcommand\subparagraph{\@startsection{subparagraph}{5}{\z@}%
                       {-12\p@ \@plus -4\p@ \@minus -4\p@}%
                       {-0.5em \@plus -0.22em \@minus -0.1em}%
                       {\normalfont\normalsize\itshape}}
\newcommand{\Mysubsection}[1]{\subsection{%SUBSECTION:
#1}}
\newcommand{\Mysubsubsection}[1]{\subsubsection{%SUBSUBSECTION:
#1}}
\newenvironment{TESTGROUP}[1]{%
#1-shaped tests%
}{}
\newcommand{\TESTPARAGRAPH}[1]{\paragraph{%TESTPARAGRAPH
Test: #1}{\label{test:#1}\global\def\currenttestname{#1}}}
\newcommand{\TEST}[1]{\hyperref[{test:#1}]{%TEST:
#1}}
\newcommand{\testlisting}[1]{%
\ifvmsaTestIndexStandalone
\IfFileExists{vmsa-test-index/tests/tex/#1.tex}%
{\input{vmsa-test-index/tests/tex/#1.tex}}%
{\TODO{file not found tests/tex/#1.tex}}%
\else
\IfFileExists{vmsa-test-index/tests/tex/#1.tex}%
{\input{vmsa-test-index/tests/tex/#1.tex}}%
{\TODO{file not found vmsa-test-index/tests/tex/#1.tex}}%
\fi
}
\newcommand{\testdiagram}[2]{%
\ifvmsaTestIndexStandalone
\IfFileExists{vmsa-test-index/model_results/diagrams/pdfs/#1/#1#2.pdf}%
{\includegraphics[width=0.99\textwidth]{{vmsa-test-index/model_results/diagrams/pdfs/#1/#1#2}.pdf}}%
{\TODO{vmsa-test-index/model\_results/diagrams/pdfs/#1/#1#2.pdf}}%
\else
\IfFileExists{vmsa-test-index/model_results/diagrams/pdfs/#1/#1#2.pdf}%
{\includegraphics[width=0.99\textwidth]{{vmsa-test-index/model_results/diagrams/pdfs/#1/#1#2}.pdf}}%
{\TODO{vmsa-test-index/model\_results/diagrams/pdfs/#1/#1#2.pdf}}%
\fi
}
\newcommand{\testresult}[2]{%
\ifvmsaTestIndexStandalone
\IfFileExists{vmsa-test-index/model_results/results_#1/#2.result}%
{\lstinputlisting{vmsa-test-index/model_results/results_#1/#2.result}}%
%{\TODO{file not found: model\_results/results\_#1/#2.result}}%
{no result for #2}
\else
\IfFileExists{vmsa-test-index/model_results/results_#1/#2.result}%
{\lstinputlisting{vmsa-test-index/model_results/results_#1/#2.result}}%
%{\TODO{file not found: vmsa-test-index/model\_results/results\_#1/#2.result}}%
{no result for #2}
\fi
}
\newcommand{\testfigfor}[2]{%
\vspace*{2mm}
\hspace*{-20mm}\adjustbox{max width=\textwidth}{\testlisting{#1}}

{\begin{center}\testdiagram{#1}{#2}\end{center}}

\begin{tabular}{l l}
Model & {\hfill Result \hfill} \\ \hline
Base & \testresult{strong}{#1} \\
ETS & \testresult{ETS}{#1} 
\end{tabular}
}
\newcommand{\testfignumbered}[1]{%
\testfigfor{\currenttestname}{_allow_#1}
}
\newcommand{\testfig}{%
\testfigfor{\currenttestname}{}
}
\newcommand{\testref}[1]{\ref{test:#1}}
\newcommand{\testrefs}[1]{\marginpar{#1}}
\newcommand{\TODOLATER}[1]{}
\newif\ifwiptests
  \providecommand\BibTeX{{%
    \normalfont B\kern-0.5em{\scshape i\kern-0.25em b}\kern-0.8em\TeX}}}
\begin{document}
\sloppy
\newcommand{\mytitle}{Relaxed virtual memory in Armv8-A (extended version)}
\title{\mytitle%
}

\author{
  Ben Simner$^{1}$ \ 
  Alasdair Armstrong$^{1}$ \ 
  Jean Pichon-Pharabod$^{2}$\\
  Christopher Pulte$^{1}$ \ 
  {Richard Grisenthwaite}$^{3}$ \ 
  Peter Sewell$^{1}$
  \\
  \ \\
{}$^{1}$  University of Cambridge, UK   \url{first.last@cl.cam.ac.uk}\\
{}$^{2}$  Aarhus University, Denmark   \url{jean.pichon@cs.au.dk}\\
{}$^{3}$    {Arm Ltd., UK}
\\
\ \\
\ifworkingdraft {\color{red}\ -- Draft --}\fi
}

\maketitle

\begin{abstract}
  \normalsize
  
  Virtual memory is an essential mechanism for enforcing security boundaries,
but its relaxed-memory concurrency semantics has not previously been investigated in detail.  The concurrent systems code managing virtual memory has been left on an entirely informal basis, and 
OS and hypervisor verification has had to make major simplifying assumptions.
  
\hspace*{3mm}  We explore the design space for relaxed virtual memory semantics in the \mbox{Armv8-A} architecture, to support future system-software verification. We identify many design questions, in discussion with Arm; develop a test suite, including use cases from the pKVM production hypervisor under development by Google; delimit the design space with axiomatic-style concurrency models; prove that under simple stable configurations our architectural model collapses to previous ``user'' models; develop tooling to compute allowed behaviours in the model integrated with the full Armv8-A ISA semantics; and develop a hardware test harness.
  
\hspace*{3mm}  This lays out some of the main issues in relaxed virtual memory  bringing these security-critical systems phenomena into the domain of programming-language semantics and verification with foundational architecture semantics.%

\hspace*{3mm}  This document is an extended version of a paper in ESOP 2022, with additional explanation and examples in the main body, and appendices detailing our litmus tests, models, proofs, and test results. 

  \end{abstract}

\newpage
\fancyhead[EOL]{\nouppercase{\leftmark}}
\tableofcontents
\newpage
\fancyhead[EOL]{\raisebox{-0.0\baselineskip}{\parbox{0.7\textwidth}{\nouppercase{\leftmark}\\\nouppercase{\rightmark}}}}

\section{Introduction}\label{intro}

Computing relies on virtual memory to enforce security boundaries:
hypervisors and operating systems manage mappings from virtual to
physical addresses to restrict access 
to physical memory and memory-mapped devices,
and thereby to ensure that processes and virtual machines cannot interfere with each other, or with the parent OS or hypervisor.
In
a world with endemic use of memory-unsafe languages for critical infrastructure, and of hardware that does not enforce fine-grained protection, virtual memory is one of the few mechanisms one has to enforce strong security guarantees.
This has driven interest in hypervisors and virtual machines, and it
provides a compelling motivation for verification of the OS-kernel and hypervisor code that manages virtual memory to provide security.

However, any such verification requires a semantics for the protection mechanisms provided by the underlying hardware architecture. %
There are two major challenges in establishing such a semantics.
First, there is its \emph{sequential intricacy}: virtual memory is one of the most complex aspects of a modern
general-purpose
architecture.
For 64-bit Armv8-A (AArch64) it is described in a 166-page chapter of the prose reference manual~\cite[Ch.D5]{G.a} and includes a host of features and options.
Second, and more fundamentally, there is its \emph{relaxed memory behaviour}.
Hardware implementations of virtual memory use in-memory representations of the virtual-to-physical address mappings, represented as hierarchical page tables.
For performance, there are dedicated cache structures for commonly used mapping data, in Translation Lookaside Buffers (TLBs).  
Translations are used often -- a single load instruction might need 40 or more page-table entries to translate its fetch and access addresses -- but they are changed only rarely, and by systems code not user code. Architectures therefore require manual management of TLB caching, e.g.~with specific instructions to invalidate old TLB entries that should no longer be used, instead of providing the simpler coherent memory abstraction that they do for normal accesses.
All this gives rise to new relaxed-memory effects, with subtle constraints determining when translations are required or forbidden to read from specific writes to the page tables, and systems code has to handle these appropriately to provide the desired virtual-memory abstraction and its security properties. 

Previous work has developed hand-written sequential semantics for some aspects of address translation in Arm~\cite{%
DBLP:conf/lpar/SyedaK17,DBLP:conf/itp/SyedaK18,DBLP:phd/basesearch/Syeda19,DBLP:journals/jar/SyedaK20,%
Li2021,%
DBLP:journals/jcs/GuancialeNDB16,%
DBLP:phd/basesearch/Kolanski11%
}
and x86~%
\cite{%
GoelPhD,DBLP:books/sp/17/GoelHK17,%
DBLP:conf/birthday/DegenbaevPS09,%
DBLP:journals/jar/TewsVW09%
}, 
but these are at best lightly validated formalisations, and 
there is no well-validated relaxed-memory concurrency semantics of virtual memory. 
In the absence of that (and of proof techniques above it), previous OS and hypervisor verification work,
e.g.~on seL4, CertiKOS, KCore, Hyper-V, the PROSPER hypervisor, and SeKVM~\cite{seL4-the-proof,Klein_AEMSKH_14,DBLP:conf/osdi/GuSCWKSC16,Li2021,DBLP:conf/vstte/AlkassarCKP12,DBLP:journals/jcs/GuancialeNDB16,DBLP:conf/uss/LiLGNH21,Gu2021}
has had to make major simplifying assumptions,
either assuming correctness of TLB management and a single-threaded setting (seL4), or assuming sequentially consistent concurrency with one of those hand-written sequential semantics, or assuming an extended notion of data-race-freedom %
(we return to the related work in \S\ref{related}).
We explore the design space for Armv8-A relaxed virtual memory semantics,
to support future systems-software verification.  We contribute:
\begin{itemize}
\item A description of the current Arm architectural intent as we understand it,
      and a set of design questions and issues arising from its relaxed virtual memory semantics~(\S\ref{questions}).
\item A relaxed virtual memory test suite, comprising of a set of hand-written litmus tests
      which illustrate the aforementioned design questions and capture key use cases from
      pKVM, a production hypervisor under development by Google~(\S\ref{sec:tests}).  
\item An axiomatic-style concurrency model for relaxed virtual memory in Armv8~(\S\ref{sec:models}),
      which to the best of our knowledge and ability captures the architectural intent described
      in \S\ref{questions}.
      We also define a weaker model, motivated by the properties pKVM relies on. %

\item  We prove that,
for stable injective page-tables, the first model collapses to the previous Armv8-A user-mode concurrency model~(\S\ref{proof}).

\item  We extend our Isla tool~\cite{isla-cav}, enabling it to compute the allowed behaviours of virtual memory litmus tests with respect to arbitrary axiomatic models,
       using the authoritative Arm ASL definition of the intra-instruction semantics including pagetable walks~(\S\ref{isla}).

\item  We develop a test harness that lets us run virtual-memory litmus tests bare-metal, albeit currently only for Stage 1 tests, and report results from running these on hardware~(\S\ref{harness}).
\end{itemize}

We begin in \S\ref{background} with an informal introduction to virtual memory in a simple sequential setting, to make this as self-contained as possible, but familiarity with virtual memory from a systems perspective, and with previous work on user-space relaxed memory, will be helpful. %

Mainstream industrial architecture specifications evolve over many years, balancing
hardware-implementation and systems-software concerns.
Experience with ``user'' relaxed-memory concurrency has shown that the process of developing rigorous semantics for arbitrary code provides a useful third input into this process, 
leading one to ask questions which help clarify the architectural intent.
The architects, hardware designers, and system-software authors typically have a deep understanding of the area, but there is usually not, \emph{a priori}, a well-understood informal specification that just needs to be formalised; instead that needs to be iteratively and collaboratively developed.
{Our} \S\ref{questions} {is based on
detailed discussion with the Arm Chief Architect
(a co-author of this paper);}
 the current Arm prose documentation~\cite{G.a};
 discussion with the pKVM development team;
and our experimental testing. 
{To the best of our knowledge, our models provide a reasonable basis for software development and for verification,  but this paper is
surely not the last word on the subject,
and it does not give an authoritative definition of the Armv8-A architecture.}
The history of relaxed-memory models shows that it typically takes multiple years, and gradual refinement of models, to converge on something reasonably stable for a production architecture or language, and even then they continue to change as new knowledge or features arise; with hindsight, few are definitive. 
Our goal here is rather to lay out some of the main issues,
bringing this security-critical systems code into the domain of programming-language semantics and verification, above foundational architecture semantics.
This document is an extended version of a paper in ESOP 2022~\cite{esop2022}, with additional explanation and examples in the main body, and appendices detailing our
litmus tests (\ref{app:vmsa}), %
models (\ref{app:models}),
proofs (\ref{app:modelsrelation}), and test results (\ref{app:results}). 
Further details are at \url{https://www.cl.cam.ac.uk/users/pes20/RelaxedVM-Arm/}.

\subsubsection*{Scope and non-goals}
\label{sec:scope}
Our scope is Armv8-A virtual memory for the 64-bit (AArch64) architecture,
aiming especially to support aspects relevant to hypervisors such as pKVM.
Accordingly, we consider translation with multiple stages (for both hypervisor and OS), multiple levels, and the full Armv8-A intra-instruction semantics and translation walk behaviour (as defined by Arm in ASL and auto-translated to Sail~\cite{sail-popl2019}).
Our models cover the Armv8-A ETS option as work in progress.
We discuss some mixed-size aspects, but our models do not currently cover them. 
To keep things manageable,
we do not consider hardware management of access flags or dirty bits, conflict aborts, \textsc{feat\_bbm, feat\_cnp, feat\_xs}, the interactions between virtual memory and instruction-fetch, or all the relaxed behaviour of exceptions, and we handle only some of the many varieties of the TLBI instruction.
We focus on the specification of the architecturally allowed envelope of functional behaviour, not on side-channel phenomena.
We include some experimental testing, as a sanity check of our models, but our principal goal is to capture the architectural intent, and our principal validation is from discussion with Arm.  %
Many of the issues should also be relevant to other architectures, but here we address only Armv8-A. 

\section{Background: A crash course on virtual memory}\label{background}

\subsection{Virtualising addressing}
In conventional computer systems, the underlying memory is indexed by \emph{physical addresses} (PAs), as are memory-mapped devices. 
For a small microcontroller running trusted code, accessing resources directly via physical addresses may suffice. 
Larger systems rely heavily on virtual addressing: they interpose one or more layers of indirection between \emph{virtual addresses} (VAs) used by instructions and the underlying physical addresses.   This lets them:
\begin{enumerate}
\item partition resources among different programs, giving each access only to those it needs;
\item provide convenient numeric ranges of virtual addresses to each program; and
\item dynamically extend and change the mapping from virtual to physical addresses, e.g.~to support copy-on-write or %
swapping, or shared buffers.
\end{enumerate}
A simple system might have many processes managed by an operating system, each of which (including the OS) has a partial function that gives the physical address and permissions for the virtual addresses it can use, roughly:
\[\begin{array}{l}
\texttt{translate} : \texttt{VirtualAddress} \rightharpoonup \texttt{PhysicalAddress} \times 2^{\{\texttt{Read}, \texttt{Write}, \texttt{Execute}\}}
\end{array}\]
Typically each process would have access to a subset of the physical addresses (the range of its translate function), disjoint from those of the other processes and from that of the OS, while the OS would have sole access to its own working memory and also access to that of the processes.
This is implemented with a combination of hardware and system software. 
The hardware memory management unit (MMU) automatically translates virtual
to physical addresses when doing an access needed to execute an instruction.  If the function is undefined, the instruction traps with a page fault; if it is defined but does not have the appropriate accesses, it traps with a permission fault; and if it is defined with the right permissions, the hardware performs the required access using the resulting physical address. 
The OS has to set up the translate functions, ensure that the appropriate function is used when switching to a new process, and handle those faults.
In general translation functions are not necessarily injective,
and includes not just access permissions (which can moreover vary between exception levels),
but also additional fields for
cacheability, shareability, security, contiguity, and other aspects which we elide for simplicity here.

\subsection{The translation-table walk}
The current translate function for execution is determined by a system register, a \emph{translation table base register} or \texttt{TTBR}, that contains the physical address of a lookup-tree data structure in memory.
The details of this structure are (in Armv8-A) highly configurable, e.g.~for different page sizes, controlled by various system registers.  
In a common configuration used by Linux, it maps 4096-byte pages and has a tree up to four levels (0--3) deep.
We assume this configuration for the remainder of this section.

Each node in the tree is a 4096-byte block of memory
made up of 512 64-bit entries (called ``descriptors'' by Arm).

These descriptors are of various types, either: \emph{invalid}, indicating that this part of the domain is unmapped; 
a \emph{block} or \emph{page} descriptor, defining a fixed-size mapping to a range of output addresses;
or a \emph{table} descriptor which points
to another level of table for this part of the domain.

The least significant two bits of the descriptor define what type the descriptor is,
and the other bits are partitioned into various fields depending on the type:
\begin{itemize}
  \item Output address (OA): the page the final output (IPA or PA) address is in.
  \item Table pointer: a 4k-aligned pointer to the next-level translation table.
  \item Attrs: encoding of the access permissions, memory attributes, shareability, access bits and dirty flags.
\end{itemize}

\mysubsubsection{Invalid descriptors}

\begin{center}
\begin{tikzpicture}[x=-\linewidth/74,y=2.5ex,every path/.style={draw=black,semithick}]
  \node at (63-.5,1.5) {63};
  \node at (33-.5,1.5) {\ldots};
  \node at (1+.5,1.5) {1};
  \node at (0+.5,1.5) {0};

  \draw (1,0) rectangle (63,1);
  \node at (33,0.5) {ignored};

  \draw (0,0) rectangle (1,1);
  \node at (0+.5,0.5) {0};
\end{tikzpicture}
\end{center}

\mysubsubsection{Block or page descriptors}

\begin{center}
\begin{tikzpicture}[x=-\linewidth/74,y=2.5ex,every path/.style={draw=black,semithick}]
  \node at (63-.5,1.5) {63};
  \node at (50+1,1.5) {50};
  \node at (47,1.5) {47};
  \node at (24+.5,1.5) {n};
  \node at (22-.5,1.5) {(n-1)};
  \node at (13,1.5) {12};
  \node at (11,1.5) {11};
  \node at (2+.5,1.5) {2};
  \node at (1+.5,1.5) {1};
  \node at (0+.5,1.5) {0};

  \draw (0,0) rectangle (1,1);
  \node at (0+.5,0.5) {1};

  \draw (1,0) rectangle (2,1);
  \node at (1+.5,0.5) {x};

  \draw (2,0) rectangle (12,1);
  \node at (6+.5,0.5) {attrs};

  \draw (12,0) rectangle (24,1);
  \node at (17+.5,0.5) {ignored};

  \draw (24,0) rectangle (48,1);
  \node at (36+.5,0.5) {output address};

  \draw (48,0) rectangle (49,1);
  \draw (49,0) rectangle (50,1);
  \node at (48+.5,0.5) {0};
  \node at (49+.5,0.5) {0};

  \draw (50,0) rectangle (64,1);
  \node at (56+.5,0.5) {attrs};
\end{tikzpicture}
\end{center}

Here \emph{n} depends on how deep in the table this entry is:
for a level~1 block descriptor $n==30$,
for a level~2 it is 21,
and for level~3 it is 12.
Note that bit 1 should be set when at level~3 (a page descriptor),
otherwise it is 0 for block descriptors.

\mysubsubsection{Table descriptors}

\begin{center}
  \begin{tikzpicture}[x=-\linewidth/74,y=2.5ex,every path/.style={draw=black,semithick}]
    \node at (63-.5,1.5) {63};
    \node at (50+1,1.5) {50};
    \node at (47,1.5) {47};
    \node at (13,1.5) {12};
    \node at (11,1.5) {11};
    \node at (2+.5,1.5) {2};
    \node at (1+.5,1.5) {1};
    \node at (0+.5,1.5) {0};

    \draw (0,0) rectangle (1,1);
    \node at (0+.5,0.5) {1};

    \draw (1,0) rectangle (2,1);
    \node at (1+.5,0.5) {1};

    \draw (2,0) rectangle (12,1);
    \node at (6+.5,0.5) {attrs};

    \draw (12,0) rectangle (48,1);
    \node at (30+.5,0.5) {table pointer};

    \draw (48,0) rectangle (49,1);
    \draw (49,0) rectangle (50,1);
    \node at (48+.5,0.5) {0};
    \node at (49+.5,0.5) {0};

    \draw (50,0) rectangle (64,1);
    \node at (56+.5,0.5) {attrs};
  \end{tikzpicture}
  \end{center}
Table descriptors are allowed only at levels 0--2.

\mysubsubsection{Arm's translation-table walk}

The sequential behaviour of Arm's \emph{translation-table walk} function is fully defined in the Arm ASL language.

The hardware walker first splits up the input virtual address into chunks:
the upper 16 bits are typically ignored;
fields \emph{a-d} are used for indexing into the tables; and
field \emph{e} is added to the final result to get the physical address.

\begin{center}
  \begin{tikzpicture}[x=-\linewidth/74,y=2.5ex,every path/.style={draw=black,semithick}]
    \draw [decorate,decoration={brace,amplitude=10pt}]
      (63+.5,2.5) -- (0-.5,2.5) node [black,midway,yshift=0.5cm,xshift=-0.1cm] {VA};
  
    \node at (63-.5,1.5) {63};
    \node at (48+1,1.5) {48};
  
    \node at (47-.5,1.5) {47};
    \node at (39+1,1.5) {39};
  
    \node at (38-.5,1.5) {38};
    \node at (30+1,1.5) {30};
  
    \node at (29-.5,1.5) {29};
    \node at (21+1,1.5) {21};
  
    \node at (20-.5,1.5) {20};
    \node at (12+1,1.5) {12};
  
    \node at (11-.5,1.5) {11};
    \node at (0+.5,1.5) {0};

    \draw (0,0) rectangle (12,1);
    \node at (5+.5,0.5) {e};

    \draw (12,0) rectangle (21,1);
    \node at (16+.5,0.5) {d};

    \draw (21,0) rectangle (30,1);
    \node at (25+.5,0.5) {c};

    \draw (30,0) rectangle (39,1);
    \node at (34+.5,0.5) {b};

    \draw (39,0) rectangle (48,1);
    \node at (43+.5,0.5) {a};

    \draw (48,0) rectangle (63,1);
    \node at (55,0.5) {ignored};
  \end{tikzpicture}
\end{center}

A pointer to the initial level of translation is obtained by reading the relevant translation table base register (\asm{TTBR}).
The fields \asm{a}-\asm{d} are then used to indirect into each table in turn,
until a block (or page) mapping is found.

Each level of the tree maps a different size block.
e.g. in a common configuration
each level~1 entry maps a 1GiB region,
each level~2 a 2MiB region,
and each level~3 a single 4KiB page.

\begin{center}  
  \begin{tikzpicture}[x=\linewidth/74,y=2.5ex,every path/.style={draw=black,semithick}]
      \def\boxstartoffsx{12}
      \def\boxstartoffsy{1}
      \def\boxoffsx{15}
      \def\boxoffsy{2}
      \def\boxwidth{9}
      \def\boxheight{6}
  
      \newcommand{\tablebox}[1]{%
      (\boxstartoffsx+\boxoffsx*#1, \boxstartoffsy+\boxoffsy*#1) rectangle (\boxstartoffsx+\boxoffsx*#1+\boxwidth, \boxstartoffsy+\boxoffsy*#1+\boxheight) node[xshift=-22,yshift=5] {Level #1}%
      }
  
      \newcommand{\tableentry}[3]{%
      (\boxstartoffsx+\boxoffsx*#1, \boxstartoffsy+\boxoffsy*#1+1.5+#2) rectangle (\boxstartoffsx+\boxoffsx*#1+\boxwidth, \boxstartoffsy+\boxoffsy*#1+2.5+#2)  node[xshift=-25,yshift=-5] {#3}%
      }
      \newcommand{\tableentryarrow}[4]{%
      (\boxstartoffsx+\boxoffsx*#1+\boxwidth, \boxstartoffsy+\boxoffsy*#1+\boxoffsy+#2) -- (\boxstartoffsx+\boxoffsx*#1+\boxoffsx+#3, \boxstartoffsy+\boxoffsy*#1+\boxoffsy+#2) node[yshift=5] {#4}%
      }
      \newcommand{\tableentryoffsetarrow}[2]{%
      (\boxstartoffsx+\boxoffsx*#1-1, \boxstartoffsy+\boxoffsy*#1) -- (\boxstartoffsx+\boxoffsx*#1-1, \boxstartoffsy+\boxoffsy*#1+1.5) node[yshift=-7.5,xshift=-5] {#2}%
      }
  
      \draw (0, 0) rectangle (8,2);
      \node at (4, 1) {\textsc{ttbr}};
  
      \foreach \i in {0,1,2,3}
          \draw \tablebox{\i};
  
      \draw[->] (8,1) -- (12,1);
  
      \draw \tableentry{0}{0}{table};
      \draw \tableentry{1}{0}{table};
      \draw \tableentry{2}{0}{table};
      \draw \tableentry{3}{0}{\emph{y}: page};
  
      \draw[->] \tableentryarrow{0}{0}{0}{};
      \draw[->] \tableentryarrow{1}{0}{0}{};
      \draw[->] \tableentryarrow{2}{0}{0}{};
      \draw[->] \tableentryarrow{3}{0}{-4}{\tiny 4KiB};
  
      \draw \tableentry{1}{2}{\emph{x}: block};
      \draw[->] \tableentryarrow{1}{2}{-4}{\tiny 1GiB};
  
      \draw[<->] \tableentryoffsetarrow{0}{a};
      \draw[<->] \tableentryoffsetarrow{1}{b};
      \draw[<->] \tableentryoffsetarrow{2}{c};
      \draw[<->] \tableentryoffsetarrow{3}{d};
  \end{tikzpicture}
\end{center}

The final physical address is the output address field of the page or block mapping,
with the remaining bits of the VA appended.

For example, if the VA was translated using the 4KiB page entry \emph{y} from above:
\begin{center}
  \begin{tikzpicture}[x=-\linewidth/74,y=2.5ex,every path/.style={draw=black,semithick}]
    \draw [decorate,decoration={brace,amplitude=10pt}]
      (63+.5,2.5) -- (0-.5,2.5) node [black,midway,yshift=0.5cm,xshift=-0.1cm] {PA from \emph{y}};
  
    \node at (63-.5,1.5) {63};
    \node at (48+1,1.5) {48};

    \node at (11-.5,1.5) {11};
    \node at (0+.5,1.5) {0};

    \draw (0,0) rectangle (12,1);
    \node at (5+.5,0.5) {e};

    \draw (12,0) rectangle (48,1);
    \node at (29,0.5) {OA from page descriptor};

    \draw (48,0) rectangle (63,1);
    \node at (55,0.5) {ignored};
  \end{tikzpicture}
\end{center}

\vspace*{0.5cm}
\noindent
Or for the 1GiB level 1 block entry \emph{x}:
\vspace*{-0.5cm}
\begin{center}
  \begin{tikzpicture}[x=-\linewidth/74,y=2.5ex,every path/.style={draw=black,semithick}]
  \draw [decorate,decoration={brace,amplitude=10pt}]
  (63+.5,2.5) -- (0-.5,2.5) node [black,midway,yshift=0.5cm,xshift=-0.1cm] {PA from \emph{x}};

  \node at (63-.5,1.5) {63};
  \node at (48+1,1.5) {48};

  \node at (47-.5,1.5) {47};
  \node at (30+1,1.5) {30};

  \node at (29-.5,1.5) {29};
  \node at (21+1,1.5) {21};

  \node at (20-.5,1.5) {20};
  \node at (12+1,1.5) {12};

  \node at (11-.5,1.5) {11};
  \node at (0+.5,1.5) {0};

  \draw (0,0) rectangle (12,1);
  \node at (5+.5,0.5) {e};

  \draw (12,0) rectangle (21,1);
  \node at (16+.5,0.5) {d};

  \draw (21,0) rectangle (30,1);
  \node at (25+.5,0.5) {c};

  \draw (30,0) rectangle (48,1);
  \node at (39,0.5) {OA from block};

  \draw (48,0) rectangle (63,1);
  \node at (55,0.5) {ignored};
 \end{tikzpicture}
\end{center}

Note that,
as mentioned before,
the architecture is highly configurable
and the above diagrams give just a common configuration.
Various system registers allows the user to configure:
the size of the input addresses (e.g. 48 bit, 52 bit);
the number of translation table levels (e.g. 3, 4, 5);
the size of a page (e.g. 4K, 16K, 64K),
and much more that we elide here for brevity.

\subsection{Multiple stages of translation}
The above suffices for an operating system isolating multiple processes from each other, but one often wants to isolate multiple operating systems (or other guests), managed by a hypervisor.
To support this, the architecture provides a second layer of indirection: instead of going straight from virtual to physical addresses, with a single \emph{stage} of mapping controlled by the OS,
one can have two stages, with the OS managing a Stage~1 table which maps virtual addresses to an \emph{intermediate physical addresses} (IPAs),
composed with a hypervisor-managed Stage~2 table, mapping IPAs to PAs.
The full translation composes the two, intersecting %
their permissions.
\[ \begin{array}{l}
\texttt{translate\_stage1} : \texttt{VirtualAddress} \rightharpoonup \texttt{IPA} \times 2^{\{\texttt{Read}, \texttt{Write}, \texttt{Execute}\}}\\
\texttt{translate\_stage2} : \texttt{IPA} \rightharpoonup \texttt{PhysicalAddress} \times 2^{\{\texttt{Read}, \texttt{Write}, \texttt{Execute}\}}
\end{array}
\]
Armv8-A has various \emph{exception levels} (ELs), determined by the \asm{PSTATE.CurrentEL} register, 
including EL0 (for user processes), EL1 (for OSs or other
guests), and EL2 (for a hypervisor).
These each have associated translation-table base registers:
\begin{itemize}
  \item \asm{TTBR0\_EL1}: contains a pointer (IPA) to the Stage~1 table for EL1\&0, lower VA range (process addresses), producing IPAs, controlled by OS at EL1
  \item \asm{TTBR1\_EL1}: contains a pointer (IPA) to the Stage~1 table for EL1\&0, upper VA range (OS kernel addresses), producing IPAs, controlled by OS at EL1
  \item \asm{VTTBR\_EL2}: contains a pointer (PA) to the Stage~2 table (second stage for IPAs translated at EL1\&0), producing PAs, controlled by hypervisor at EL2
  \item \asm{TTBR0\_EL2}: contains a pointer (PA) to the single-stage table for EL2 (hypervisor's own addresses), producing PAs, controlled by hypervisor at EL2
\end{itemize}
Each hardware thread has its own base registers (and other system registers), and so different hardware threads can be using different address spaces (for example, for different processes)
at the same time.

\subsection{Caching translations in TLBs}
A naive hardware implementation of address translation would need many translation memory reads -- with four levels, 
up to 24 with both stages enabled, for every instruction-fetch, read, or write.
This would have unacceptable performance, so processors have
specialised caches for translation-table wal k reads called
\emph{translation lookaside buffers} (or TLBs).
Under normal operation the TLBs are invisible to user code, %
but systems code has to manage them explicitly,
to change which translation table is currently in use (e.g.~when context switching),
or to make changes to the tables for one process or guest.
Without correct management a TLB could hold incorrect (stale) data,
breaking the protection that the address translation is intended to provide.

The architecture supports explicit TLB maintenance with various
flavours of the \asm{TLBI} instruction
(TLB invalidate),
to invalidate old entries for specific ranges of virtual or intermediate-physical addresses,
or even whole ASIDs or VMIDs at once.
The \emph{memory management unit} (MMU) is responsible for performing these translations.
It does this by looking at the TLB and, if the TLB does not contain an entry for the given address
(called a \emph{miss}),
it performs the translation table walk function as described earlier and caches the result in the TLB (a \emph{fill}).

TLB maintenance and TLB misses are expensive,
and one would not want the cost of TLB invalidation on every context switch,
so the architecture provides
\emph{address space identifiers} (ASIDs). 
The translation table base registers include an ASID in addition to the table base address, 
and when translation data is cached in a TLB it is tagged with the current ASID,
giving the illusion of separate TLBs per ASID, and
allowing switching from one to another without TLB maintenance.
Eventually the system will need to reclaim and reuse a previously used ASID,
and then TLB maintenance is required to clean that ASID's old entries.
There are similar identifiers for Stage~2 intermediate physical memory,
known as virtual-machine identifiers or VMIDs.

\section{Concurrency architecture design questions}\label{questions}

Now we will
introduce the main concurrency architecture design questions that arise for Armv8-A virtual memory, within the scope laid out in the introduction.
As usual, the architecture has to define an envelope of behaviour that provides the guarantees needed by software, while admitting the relaxed behaviour of the microarchitectural techniques necessary for performance.
That means we have to discuss both, including just enough microarchitecture to understand the possible programmer-visible behaviour, before we abstract it in the semantic models we give in \S\ref{sec:models}.
The discussion includes points of several kinds:
some that are clear in the current Arm documentation,
some where Arm have a change in flight,
some that are not documented but where the semantics is (after discussion) obviously constrained by existing hardware or software practice,
and some where there is a tentative Arm intent but it is not yet fixed upon;
our modelling raised a number of questions of the latter two.
To make this as coherent as possible, we discuss all these in a logical order, laying out the design principles. %
We have developed a suite comprised of \numberOfIslaLitmusTests{} hand-written Isla-compatible virtual-memory litmus tests that illustrate the issues,
but to keep this concise we just give the main ideas here.  For ease of reference, we give the actual tests in App.~\ref{app:vmsa}, with links in the margin. 
As a sample, we explain one pKVM test in detail in \S\ref{sec:pkvm}. %

\subsection{Coherence with respect to physical or virtual addresses}\label{subsec:questions_coherence}

For normal memory accesses, the most fundamental guarantee that architectures provide is \emph{coherence}:  in any execution, for each memory location, there is a total order of the accesses to that location, consistent with the program order of each thread, with reads reading from the most recent write in that order. 
Hardware implementations provide this, despite their elaborate cache hierarchies and out-of-order pipelines, by %
coherent cache protocols and pipeline hazard checking, identifying and restarting instructions when possible coherence violations are detected. 
Previous work on relaxed-memory semantics for architectures has
taken virtual addresses as primitive, implicitly considering only execution with well-formed, constant, and injective address translation mappings. 

Now, we have to consider whether coherence is with respect to virtual or physical addresses, for non-injective mappings.
For Arm, coherence is w.r.t.~physical addresses~\cite[D5.11.1 (p2812)]{G.a}. \testrefs{\testref{CoRR0.alias+po}\\\testref{CoRR2.alias+po}}%
This means %
that if two virtual addresses alias to the same physical address, then (still assuming well-formed and constant translation):
a load from one virtual address cannot ignore a program-order (\herd{po}) previous store to the other; and \testrefs{\testref{CoWR.alias}}%
a load from one virtual address can have its value forwarded from a store to the other,   and similarly on a speculative branch. \testrefs{\testref{MP.alias3+rfi-data+dmb}\\\testref{PPOCA.alias}}%

\subsection{Relaxed behaviour from TLB caching}\label{subsec:questions:tlb}

There are two main aspects of the concurrency semantics of virtual memory:
the relaxed behaviour arising directly from TLB caching,
and the relaxed behaviour of the \emph{not-from-TLB} (\emph{non-TLB}) memory accesses for translation reads that read from memory or by forwarding from \herd{po}-previous writes, and that might supply TLB cache fills.  We discuss them in this and the following subsection respectively.

\mysubsubsection{What can be cached}\label{subsubsec:questions:tlb:what_cached}
The MMU can cache information from successful translations, and also from translations that result in permission faults, but it is architecturally forbidden from caching information from attempted translations that result in translation faults.%
This ensures that the handlers of those faults do not need to do TLB maintenance to remove the faulting entry~\cite[D5.8.1 (p2780)]{G.a}, and makes the potential behaviour for page-table updates from invalid-to-valid and
valid-to-any
 quite different, as we shall see.

TLB implementations might cache any combination of individual page-table entries and partial or complete translations, e.g.~from the virtual address and context to the physical address of the last-level page.  Conceptually, however, we can simply view a TLB as containing a set of cached page-table-entry writes (i.e., writes that have been read from for a translation), including at least:
\begin{itemize}
\item the context information of the translation: the VMID, ASID
 (or a ``global indicator''),
,     and the originating exception level;
\item the virtual address, intermediate physical address, and/or physical address of the translation;
\item the translation stage and level at which the write was used;
\item the system register values used in the translation (those which can be cached); and
\item for an entry used for a Stage 1 translation, whether it has been invalidated at both stages.
\end{itemize}
That additional information allows the various TLBI instructions to target specific entries.\testrefs{\testref{CoWinvT+dsb-isb}\\\testref{CoWinvT.EL1+dsb-tlbiis-dsb-isb}\\\testref{ROT.inv+dsb}}
A translation walk can arbitrarily use either a cached write (if one exists) or do a non-TLB read, either from memory or by forwarding from a po-previous write, for any stage or level.

\mysubsubsection{Caching of multiple entries for the same virtual address and context}\label{subsubsec:questions:tlb:same_context}

High-performance hardware implementations may have elaborate TLB structures, including multiple ``micro TLBs'' per thread.
These can be seen as a conceptual single per-thread TLB that can hold zero, one, or more entries for each combination of input address and the other information above.
If zero, a translation will necessarily read from memory (with ordering constrained as discussed below).
If one or more, a translation may use any of those entries or read from memory (and the write read from might or might not be cached). \testrefs{\testref{CoTfT+dsb-isb}}
However, in some cases multiple entries constitute a \emph{break-before-make} failure, leading to relatively unconstrained behaviour; we return to this below.

\mysubsubsection{When can page-table entries be cached}\label{subsubsec:questions:tlb:when_cached}

Any %
memory read by a translation can be cached.
Any
thread can spontaneously do a translation for any virtual address at
any program point, with respect to its context at that point
(though this interacts with the %
system-register write/read semantics). Spontaneous translations model
hardware
prefetching, speculative execution, and branch
prediction.
They mean that, in the absence of cache maintenance,
translations may use TLB entries from arbitrarily old writes. \testrefs{\testref{CoWinvT+dsb-isb}}
Additionally, any thread may do a spontaneous translation at any point using
the configuration from
any exception level higher than the current one, but not for lower levels.
Preventing spontaneous walks at lower EL is essential, as during an EL2 hypervisor switch between VMs, the EL1 control registers will be in an inconsistent state.
Allowing spontaneous walks at higher EL models arbitrary interrupts to the higher level and then doing a spontaneous walk there. %

Each virtual-memory access by a thread involves a non-spontaneous translation which
is constrained by the normal inter-instruction constraints on
out-of-order and speculative execution by the thread.
These
constraints are especially important in order to understand when a translation
must fault: as invalid entries cannot be cached, a translation that gives rise to such a fault must be at least in part from a non-TLB read, subject to these ordering constraints. \testrefs{\testref{CoTfRpte+dsb-isb}}

\mysubsubsection{Coherence of translations}\label{subsubsec:questions:tlb:coherence}

Due to the TLB caching as described above
translations of the same virtual address by the same thread need not see a coherent view of page-table memory.
This is in sharp contrast to normal accesses, but analogous to instruction-fetch reads~\cite{iflat-esop2020-extended}
and reads from persistent memory~\cite{persistence-tso}. \testrefs{\testref{CoTfT+dsb-isb}}

\mysubsubsection{Removing cached entries}\label{subsubsec:questions:tlb:dropping}

TLBs may spontaneously forget any cached information at any point. 
To \emph{ensure} that a cached entry is removed,
software must ensure that it will not be spontaneously re-cached.
It can do this with a write of an invalid entry and then a DSB instruction (data synchronization barrier) to ensure that it is visible across the system,
followed by a TLBI.%
\testrefs{\testref{CoWinvT+dsb-isb}\\\testref{CoRpteT.EL1+dsb-tlbi-dsb-isb}\\\testref{CoWinvT.EL1+dsb-tlbiis-dsb}}

\mysubsubsection{Break-before-make failures}\label{subsubsec:questions:tlb:bbm}

When changing an existing translation mapping, from one valid entry to another valid entry, Arm require in many cases the use of a \emph{break-before-make (BBM)} sequence:  breaking the old mapping with a write of an invalid entry;
a DSB to ensure that is visible across the system;
and a broadcast TLBI %
to invalidate any cached entries for all relevant threads;
a DSB to wait for the TLBI to finish;
then making the new mapping with a write of the new entry, and additional synchronisation  to ensure that it is visible to translations (specifically, to translation-walk non-TLB reads).
\testrefs{\testref{BBM+dsb-tlbiis-dsb}\\\testref{BBM.Tf+dsb-tlbiis-dsb}\\\testref{MP.BBM1+dsb-tlbiis-dsb-dsb+dsb-isb}}%
The current Arm text~\cite[D5.10.1 (p2795)]{G.a} identifies six cases of page-table updates that without such a sequence constitute \emph{BBM failures}, and gives very severe architectural consequences:  failures of coherency, single-copy atomicity, ordering, or uniprocessor semantics.
Note that these consequences are architecturally allowed if there could exist a break-before-make-failure change to the translation tables for some virtual address, irrespective of whether the program architecturally accesses it.

This severity is because, in some of the six cases, hardware implementations could give rather arbitrary behaviour, e.g.~an amalgamation of old and new entries.
 From a software point of view, it seems that one must treat such cases more-or-less as fatal errors. This is analogous to the Data-race-free-or-catch-fire semantics underlying the C/C++
relaxed memory model~\cite{Adve:1990:WON:325164.325100,DBLP:journals/jpdc/GharachorlooAGHH92,bohemadvec,cpp-popl}, in which any program with a consistent execution that includes a race between nonatomic accesses is deemed to have undefined behaviour, and the C/C++ standards do not constrain implementation behaviour for such programs in any way.
This makes many potential litmus tests that change between valid entries uninteresting, as they simply exhibit BBM failures (though changes of permissions do not necessitate a BBM sequence).

However, for a processor architecture that supports virtualisation, one cannot regard BBM failures as allowing completely arbitrary behaviour for the entire machine: if one guest virtual machine (at EL1) changes one of its own translation mappings without correctly following the BBM sequence, either mistakenly or maliciously, that should not impact security of the hypervisor (at EL2) or other guests.  Instead, one has to bound the arbitrary behaviour to that virtual machine, allowing arbitrary memory and register accesses that are possible within its context.
In our exhaustively executable semantics, to keep litmus-test executions finite, we currently simply detect BBM failures; we do not explicitly model that arbitrary behaviour.

In reality, these six BBM failure cases include some where hardware may give such weakly constrained behaviour and others where, because coherence is over physical addresses and the mapping may be temporarily indeterminate, software might see well-defined but nondeterministic or surprising results. These were architected as a guide for system software to produce predictable behaviour, and future versions of the architecture might refine this.

When a hypervisor installs a new guest, it has to be able to reset to a clean state, in case a previous guest has failed to follow the BBM sequence (which in general the hypervisor cannot know).  
It can do so with a TLBI %
covering all the previous guest's processes address space.
There seems to be no need or support for finer-grain cleanup.

\subsection{Relaxed behaviour of translation-walk non-TLB reads}\label{subsec:questions:non_tlb}

Now we turn to the semantics of translation-walk \emph{non-TLB} reads, those that are satisfied from memory or by forwarding, not from a TLB, and which might then be cached in a TLB.
This matters especially when one knows that there are no relevant cached TLB entries,
e.g.~when an invalid entry has been written and a TLBI performed,
so one knows that translation walks will do such a non-TLB read.
\testrefs{\testref{CoRpteTf.inv+dsb-isb}\\\testref{CoTTf.inv+dsb-isb}}%

\mysubsubsection{Ordering among the translation-walk reads of an access}
Each translation-table walk for a virtual-memory access can involve many memory reads,
one for each level of the table for each stage of translation.

\begin{wrapfigure}{r}{35mm}
  \vspace*{-5mm}
  \includegraphics[width=34mm]{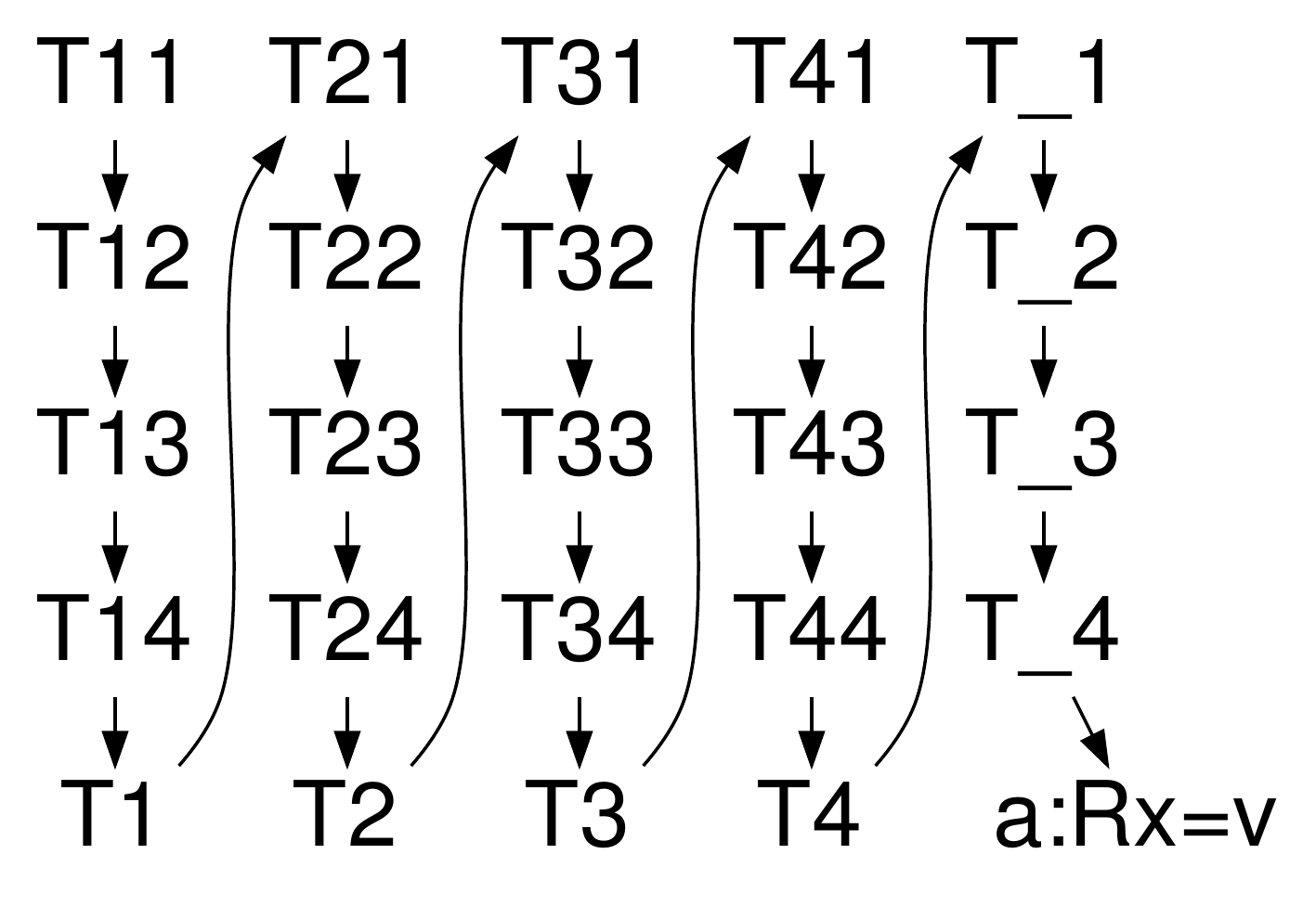}
  \vspace*{-7mm}
\end{wrapfigure}

The diagram on the right is an example walk, where each \texttt{Tn} is read of level~\texttt{n} of the Stage~1 table.
Each of those Stage~1 reads must first be translated to get the PA (as the table contains IPAs)
and so each \texttt{Tnk} is a read of level~\texttt{k} of the Stage~2 table for the address of the Stage~1 table at level~\texttt{n}.
Once the full Stage~1 walk has been completed the final output IPA must be translated to the final PA, and those are the final 4
\texttt{T\_n} reads, of the Stage~2 table at level~\texttt{n}.
The reads are ordered one after another in the order they appear in the ASL walk function.
This ordering must be respected by hardware
as software relies on it when building the tables bottom-up.

For example, if one starts with a Level 2 invalid entry, one might first create a Level 3 table (then a barrier to keep writes ordered).\testrefs{\testref{ROT.inv+dmbst}\\\testref{ROT.inv+dsb}} In other words, the architecture has to prohibit value speculation of page table entries.

\mysubsubsection{Dependencies into translation-walk non-TLB reads}\label{subsubsec:questions:non_tlb:dependencies}
Address dependencies into a memory-access instruction in classic ``user'' models are now explainable as dataflow dependencies
to the translation reads of those accesses, as the address has to be available before a walk can start.  %
These are virtual-address dataflow dependencies (contrasting with physical-address coherence).%
\testrefs{\testref{MP.RTf.inv+dmb+addr}\\\testref{WRC.RRTf.inv+addrs}}

\mysubsubsection{Translation-walk non-TLB reads from non-speculative same-thread writes}\label{subsubsec:questions:non_tlb:non-speculative-same-thread}\label{nonspec-same} \ 

\smallskip

\noindent\textit{PO-past} A translation-walk non-TLB read might read from a po-previous page-table-entry write, but it is only guaranteed to see such a write if there is enough intervening synchronisation.
Arm have recently introduced  \emph{Enhanced Translation
Synchronization} (ETS), optional in Armv8.0 and mandatory from Armv8.7.
Armv8-A implementations without ETS require both a DSB, to make the write visible to translation-walk non-TLB reads, and an ISB, to ensure that any translations for later instructions that were done out-of-order, before the write, are restarted.
With ETS, only the DSB is required for a translation-walk non-TLB read to definitely see the write, though one might still need an ISB if the new translation enables new instruction fetch. \testrefs{\testref{CoRpteTf.inv+dsb}\\\testref{MP.TTf.inv+dsbs}}

 Because invalid entries cannot be cached, this means that if an entry is initially invalid, then after a write of a valid entry and a DSB;ISB/DSB, translations will use that valid entry.  However, the DSB;ISB/DSB does not remove cached entries, so an initially valid entry might be cached by a spontaneous walk, so even after a write (of an invalid or non-BBM-failure valid entry) and a DSB;ISB/DSB, the old entry could still be used by translations.  One would need a TLBI sequence to remove old cached entries, which we return to below. %

\smallskip

\noindent\textit{PO-future}
The Armv8-A architecture allows load-store reordering, but it does not allow writes to become visible to other threads while they are still speculative. In the same vein, translation-walk non-TLB reads cannot read from po-later page-table-entry writes~\cite[D5.2.5 (p2683)]{G.a}. Before the po-earlier translation is complete, one cannot know that it is not going to fault, so the later write has to be considered speculative. This prevents a thread-local self-satisfying translation cycle, analogous to the prevention of load-store cycles with dependencies. \testrefs{\testref{CoTW1.inv}}

\smallskip

\noindent\textit{PO-present} On the margin, can a translation-walk non-TLB read for a write access see that write, or a distinct write from the same instruction?  The second case could arise from a store-pair or misaligned store that does two writes, with one to a page-table-entry that could be used by the other, though real code would typically not do this intentionally.\
This is explicitly allowed by the current architecture text~\cite[D5.2.5 (p2683)]{G.a}.
However that text does not specify whether the translations for those two writes could \emph{both} read from the other,
a self-satisfying translation cycle where the writes write each others translations.
In general such self-satsifying cycles give rise to \emph{thin air} behaviours and are universally forbidden by the architecture.

\mysubsubsection{Translation-walk non-TLB reads from speculative same-thread writes}

Speculative execution requires translation walks, which might result in additional page-table entries being cached, but in most cases this is indistinguishable from the effects of a non-speculative spontaneous walk.
However, one has to ask whether a translation-walk non-TLB read can see a po-previous write that is still speculative, e.g.~while both instructions follow an as-yet-unresolved conditional branch.  
It is clear that the result of such a walk should not be persistently cached, or made visible to other threads (via a shared TLB), while it remains speculative.  Moreover, such translations could lead to arbitrary reads of read-sensitive device locations, which one normally relies on the MMU to prevent.  The conclusion is therefore that this must be forbidden.
\testrefs{\testref{MP.RT.inv+dmb+ctrl-trfi}}

\mysubsubsection{Translation-walk non-TLB reads from same-thread writes, forbidden past (same-thread TLBI completion)}\label{forbidden-past-single}
To remove an existing mapping on a single thread,
one needs first to write an invalid entry, then a DSB to ensure that has reached memory and thus is visible to translation-walk non-TLB reads (to prevent spontaneous re-caching), then a TLBI to invalidate any cached entries, then a DSB to wait for TLBI completion. 
Without ETS, one
also needs an ISB to ensure that po-later translations that have been done early are restarted.
With ETS, the ISB is not always necessary,
though might still be needed for
its instruction-cache effects if the change of mapping affects instruction fetch.
After all that, an attempted access by that thread is guaranteed to fault. 
\testrefs{\testref{CoWinvT+dsb-isb}\\\testref{CoWinvT.EL1+dsb-tlbi-dsb}\\\testref{CoWinvT.EL1+dsb-tlbiis-dsb-isb}}

\mysubsubsection{Translation-walk non-TLB reads from other-thread writes, guaranteed past, initially invalid}\label{twoets}\testrefs{\testref{MP.RTf.inv+dmb+po}\\\testref{MP.RTf.inv+dmb+addr}}
Now consider when a translation-walk non-TLB read is guaranteed to see a write by another thread of a new entry, assuming that the entry was previously invalid and any cached entries for it
invalidated.  Consider a two-thread message-passing case, where a producer P0 writes a new
\begin{wrapfigure}{r}{80mm}
  \vspace*{-9mm}
\newsavebox{\flagthreadone}
\begin{lrbox}{\flagthreadone}
\begin{lstlisting}[language=AArch64,showlines=true,mathescape=true]
a:W pte(x)=pte_valid
<Producer ordering>
b:W flag=1
\end{lstlisting}
\end{lrbox}
\newsavebox{\flagthreadtwo}
\begin{lrbox}{\flagthreadtwo}
\begin{lstlisting}[language=AArch64,showlines=true,mathescape=true]
c:R flag=1
<$\textnormal{Receiver ordering}$>
d:Tx, for a Rx or Wx
\end{lstlisting}
\end{lrbox}
\begin{center}
\begin{tabular}{|l|l|}
  \hline
  P0 & P1\\
  \hline
  \usebox{\flagthreadone} & \usebox{\flagthreadtwo}\\
  \hline
\end{tabular}
\end{center}
  \vspace*{-11mm}
\end{wrapfigure}
valid
page table entry (\texttt{pte\_valid}), then has some ordering before a write of a flag, while a consumer P1 reads the flag, then has some ordering before an access \texttt{Rx} or \texttt{Wx} that needs that entry for a translation \texttt{Tx}    of  virtual address \texttt{x}. %

On some Armv8-A implementations that do not support ETS, some ``obvious'' combinations of ordering on P0 and P1 could lead to an abort of the translation of (d), which some OS software would find difficult to handle.  This was the main motivation for %
ETS:
implementations without it can have weak behaviour, requiring strong synchronisation to prevent the abort, while with ETS the architecture is stronger, requiring only weaker ordering to prevent the abort. 

Without ETS, two combinations of ordering are architected as sufficient to ensure that the translation (d) sees the new valid entry:
\begin{enumerate}
\item\label{noetsone} P0 has any ordered-before relationship, and P1 has DSB+ISB.\testrefs{\testref{MP.RTf.inv+dmb+dsb-isb}}
\item\label{noetstwo} P0 has DSB; TLBI; DSB, and P1 has any ordered-before relationship.\testrefs{\testref{MP.RTf.inv.EL1+dsb-tlbiis-dsb+dsb-isb}}
\end{enumerate}
In Case \ref{noetsone}, the message-passing is enough to ensure the write (a) is in main memory, the P1 ISB ensures that any out-of-order translation of (d) is restarted, and the P1 DSB keeps the read (c) and that ISB in order.
In Case~\ref{noetstwo}, the first DSB ensures the write is visible to all threads, the TLBI (broadcast, for the virtual address \texttt{x}) invalidates any older cached entry on P1, and the second DSB waits for that TLBI to be complete, after which any new translation on P1 will have to see the new entry.
However, it appears that the probability of an unhandleable abort in practice, where one usually does not have these operations immediately adjacent, and where in many cases the abort could be handled, has been judged low enough that OS code is not necessarily using either of these.

With ETS, the architecture says~\cite[D5.2.5,p2683]{G.a}
that \emph{``if a memory access RW1 is Ordered-before a second memory access RW2, then
RW1 is also Ordered-before any translation table walk generated by RW2 that generates a Translation fault, Address size fault, or Access flag fault.''}
Microarchitecturally, the intuition here is that with ETS any translation done while speculative that leads to such a fault will have to be reconfirmed as faulting when execution is no longer speculative, so an early faulting translation of (d) would have to be restarted after the ordered-before edges have ensured that (a) is visible. 
However, in the case that the RW2 instruction faults, there is no read or write event, and if the fault is a translation fault, there is no physical address.  One therefore has to ask what the meaning of ordered-before edges into RW2 is, especially for the parts of ordered-before dependent on physical addresses, such as coherence. 
The conclusion is that this should be only the non-physical-address parts of ordered-before into RW2, and in modelling one needs a ``ghost'' event to properly record what the dependencies would have been if it had succeeded.
Note that this includes ordered-before to RW2 that ends with a data dependency into a write, even though that data would not normally be necessary for the translation. %

Even with ETS, one might need an ISB on P1 if the new translation affects instruction fetch.%

\mysubsubsection{Translation-walk non-TLB reads from other-thread writes, guaranteed past, initially valid (other-thread TLBI completion)}

The following test
has a read-only mapping for
\begin{wrapfigure}{r}{80mm}
  \vspace*{-9mm}
\newsavebox{\twthreadone}
\begin{lrbox}{\twthreadone}
\begin{lstlisting}[language=AArch64,showlines=true,mathescape=true]
  STR pte_writeable,[pte($x$)]
  DSB SY
  TLBI VAAE1IS,[page($x$)]
  DSB SY
  MOV X7,#1
  STR X7,[$y$]
\end{lstlisting}
\end{lrbox}
\newsavebox{\twthreadtwo}
\begin{lrbox}{\twthreadtwo}
\begin{lstlisting}[language=AArch64,showlines=true,mathescape=true]
  LDR X0,[$y$]
  DMB SY
  MOV X1,#1
LO:
  STR X1,[$x$]

\end{lstlisting}
\end{lrbox}
\begin{center}
\begin{tabular}{|l|l|}
  \hline
  P0 & P1\\
  \hline
  \usebox{\twthreadone} & \usebox{\twthreadtwo}\\
  \hline
  \multicolumn{2}{|l|}{\cellcolor{Forbid}{Forbid: \mbox{\lstinline[mathescape=true]|1:X0=1 & permission_fault(L0, $x$)?|}}}\\
  \hline
\end{tabular}
\end{center}
  \vspace*{-9mm}
\end{wrapfigure}
some physical address
that is updated with a new
writeable mapping to the same physical
address, followed by a message-pass to another thread that attempts to write.
There is no requirement for break-before-make here, as the output address has not changed,
but TLB maintenance is required to ensure that the new writeable entry is guaranteed to be used by later translation-reads.

Arm forbid the outcome where the \asm{STR} faults due to a permission check.
This is because the \asm{TLBI} only completes once all instructions using any old translations which would be invalidated by the TLBI,\testrefs{\testref{RBS+dsb-tlbiis-dsb}}
on all other threads that the TLBI affects, have also completed, and the following \asm{DSB} waits for that (the same-thread case is different; see \S\ref{forbidden-past-single}).
In practice this means that once the \asm{TLBI} completes, one of the following holds:
either the final \asm{STR} has not performed its translation of $\textit{x}$ yet and will be required to see the writeable mapping for its page table entry (pte);
or the \asm{STR} has translated using the new writeable mapping;
or the \asm{STR} has already translated using the old read-only mapping,
in which case  we know that the \asm{STR} has finished and performed its write,
since the \asm{TLBI} could not complete while it was still in-progress.
In that case if the \asm{STR} has completed,
then so must have the locally-ordered-before \asm{LDR},
and that must have read 0.
This explanation also covers the make-after-break case above, %
for non-ETS Case~\ref{noetstwo}.

This is reflected in text to be included in future versions of the Arm ARM:
{\em
A TLB maintenance operation
[without nXS] generated by a TLB maintenance instruction is finished for a PE when:
\begin{enumerate}
\item all memory accesses generated by that PE using in-scope old translation information are complete.
 \item all memory accesses RWx generated by that PE are complete. RWx is the set of all memory accesses generated by instructions for that PE that appear in program order before an instruction (I1) executed by that PE where:
\begin{enumerate}
\item I1 uses the in-scope old translation information, and
     \item the use of the in-scope old translation information generates a synchronous data abort, and
     \item if I1 did not generate an abort from use of the in-scope old translation information, I1 would generate a memory access that RWx would be locally-ordered-before.
     \end{enumerate}
     \end{enumerate}
}

\mysubsubsection{Translation-walk reads from same- and other-thread writes, forbidden past (break-before-make)}\label{forbidden-past}

Now we can finally return to the break-before-make sequence.
Normal reads cannot read from the coherence-predecessors of the most coherence-recent write that is visible to them, but translation reads can read old (non-invalid) values from a TLB. 
To prevent this, and to ensure that a translation read sees a new page-table entry,  one  has to both ensure that any old TLB entries are invalidated, with a suitable TLBI, and that the new entry is visible to translation-walk non-TLB reads.

Armv8-A says~\cite[D5.10.1 (p2795)]{G.a} \emph{``A break-before-make sequence on changing from an old translation table entry to a new translation table entry
requires the following steps: (1) Replace the old translation table entry with an invalid entry, and execute a DSB instruction.  (2) Invalidate the translation table entry with a broadcast TLB invalidation instruction, and execute a DSB
instruction to ensure the completion of that invalidation.
(3) Write the new translation table entry, and execute a DSB instruction to ensure that the new entry is visible.''}.

Typically the write of an invalid entry and TLBI would be on the same thread, 
but more generally, any shape as below should be forbidden,\testrefs{\testref{CoRpteT.EL1+dsb-tlbi-dsb-isb}}
where \texttt{Tx} is a translation-walk read for an
\begin{wrapfigure}{r}{55mm}
  \vspace*{-5mm}
  \includegraphics[width=54mm]{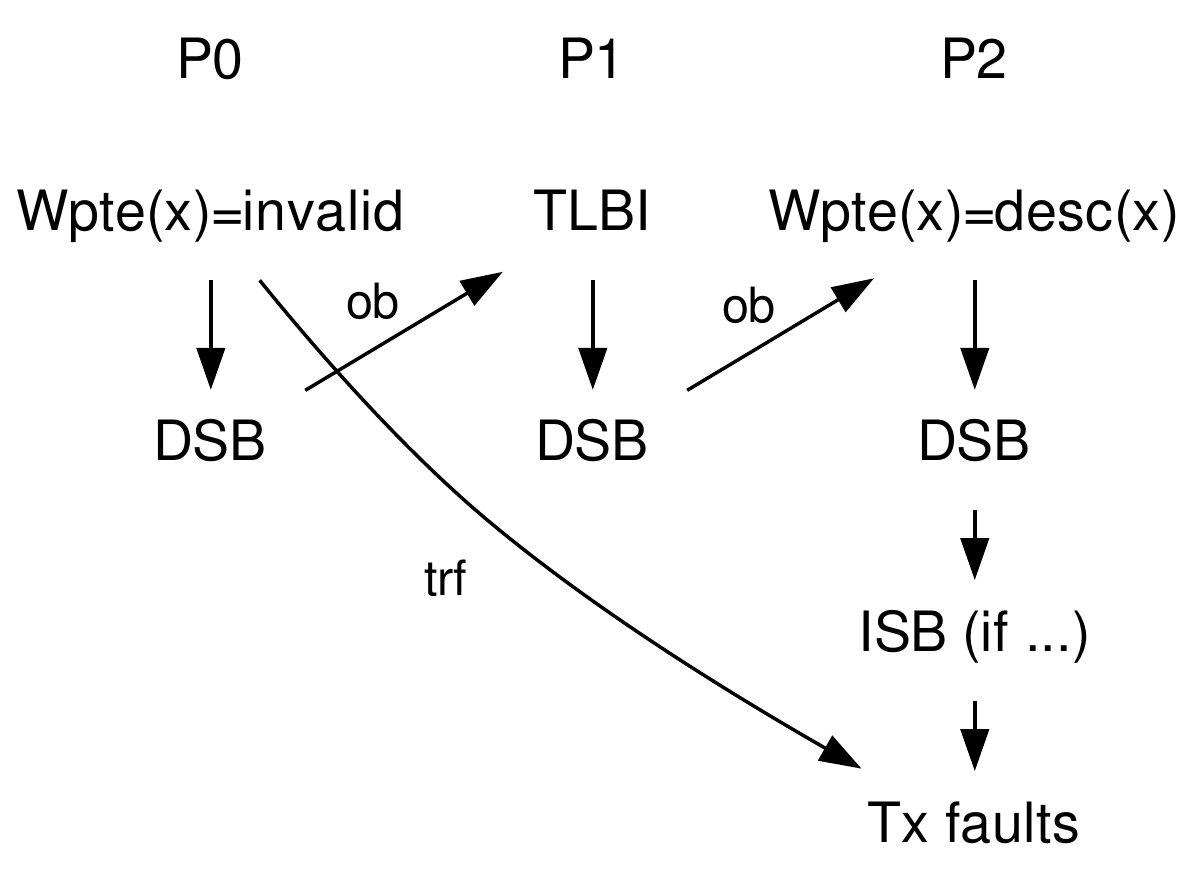}
  \vspace*{-5mm}
\end{wrapfigure}
access of \texttt{x} and the \texttt{trf} relation shows the page-table write it reads from.
In other words, the sequence ensures that the write of the
invalid entry, and of any co-predecessor writes, are hidden behind the new page-table entry  as far as new translations are concerned. 
Here the P0 DSB and P0-to-P1 \herd{ob} ensure the P0 write has propagated to memory before the P1 TLBI starts; the P1 DSB waits
for that TLBI to have finished on all threads; the P1-to-P2 \herd{ob} ensures that has happened before the new page-table-entry write starts; and the DSB ensures the new write has reached memory and so is visible to translation before subsequent instructions.  The P2 ISB is needed if on non-ETS hardware, to force restarts of any out-of-order translations for po-later instructions, or (on any hardware) if P2=P1, to ensure any later translations on the TLBI thread are restarted, or if the new mapping affects instruction fetch. 

This generalisation seems necessary, as a TLBI might be performed by a virtual CPU at EL1 which is interrupted and rescheduled by an EL2 hypervisor.  One should be able to rely on the hypervisor doing a DSB on the same hardware thread as part of the context switch, and that has to suffice.  It is sound because the DSBs and TLBI are all broadcast, though note that the DSB waiting for TLBI completion has to be on the same hardware thread as it. 

\mysubsubsection{Translation-walk non-TLB reads from other-thread writes, forbidden future}\label{other-forbidden-future}

Above
we saw that translation-walk non-TLB reads should not read from po-later writes.  How should
\begin{wrapfigure}{r}{55mm}
  \vspace*{-5mm}
  \includegraphics[width=54mm]{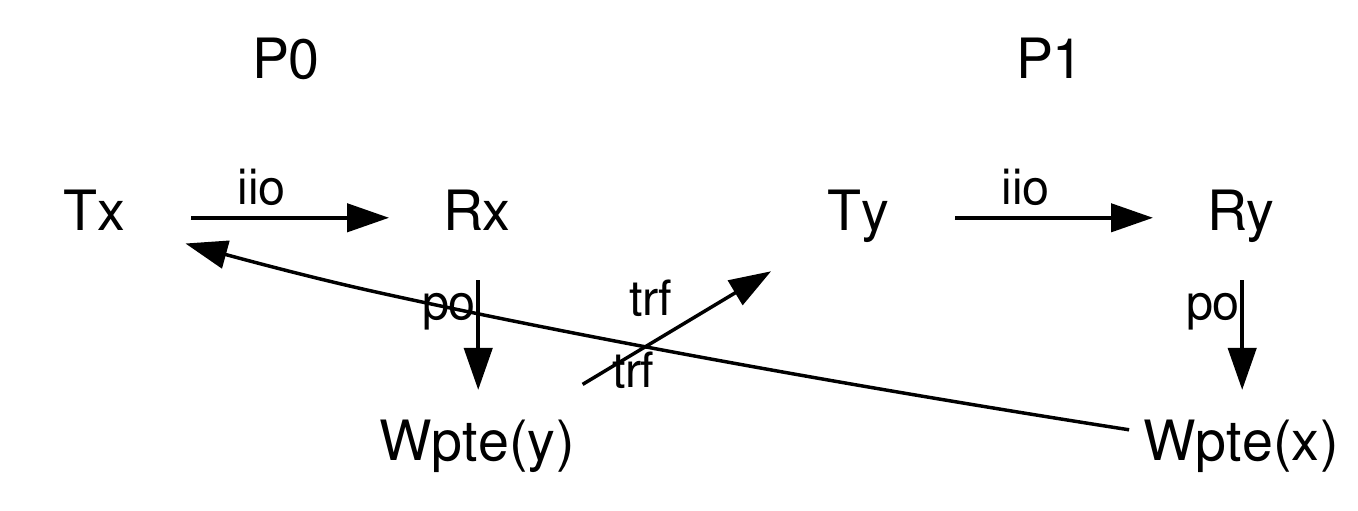}
  \vspace*{-5mm}
\end{wrapfigure}
that be
generalised to multiple threads?  \testrefs{\testref{LB.TT.inv+pos}}
For
the simplest example, consider the translation version of the LB test on the right, in which two threads translation-read from each other's po-future (\texttt{iio} relates translation reads to their accesses).  Standard LB shapes for normal accesses without dependencies are allowed in Armv8-A, but this example should be forbidden: until each translation is done, one cannot know that the first instruction on each thread will not abort, so one could not make the po-later write visible to the other thread without inter-thread roll-back. In other words, the possibility of translation aborts creates ordering rather like a control dependency from translation reads to po-later writes.\par

\mysubsubsection{Multicopy atomicity of translation-walk non-TLB reads}\label{wrctestsec}\testrefs{\testref{WRC.RRTf.inv+addrs}\\\testref{WRC.TfRR+dsb-isb+dsb}\\\testref{WRC.TTTf.inv+addrs}}
The ARMv7 and early Armv8-A architectures for normal accesses were \emph{non-multicopy-atomic}: a write could become
visible to some other threads before becoming visible to all threads, broadly similar in this respect
to the IBM POWER architecture~\cite{Power2.07,pldi105}.
This is one of the most fundamental choices for a relaxed memory model. 
In 2017 Arm revised their Armv8-A architecture to be \emph{multicopy-atomic} (\emph{other multicopy-atomic}, or OMCA, in their terminology), a considerable simplification~\cite{armv8-mca,B.a}.  However, there was no consideration at the time of whether this should also apply to the visibility of writes by translation-walk non-TLB reads, or of the force of the ARM statement that \emph{a translation table walk is considered to be a separate observer}~\cite[D5.10.2 (p2808)]{G.a}. 

For example, consider the following translation-read analogue of the classic WRC+addrs test, which would be forbidden in OMCA Armv8-A for normal reads.
Suppose one has ETS,
\begin{wrapfigure}{r}{70mm}
  \vspace*{-5mm}
  \includegraphics[width=70mm]{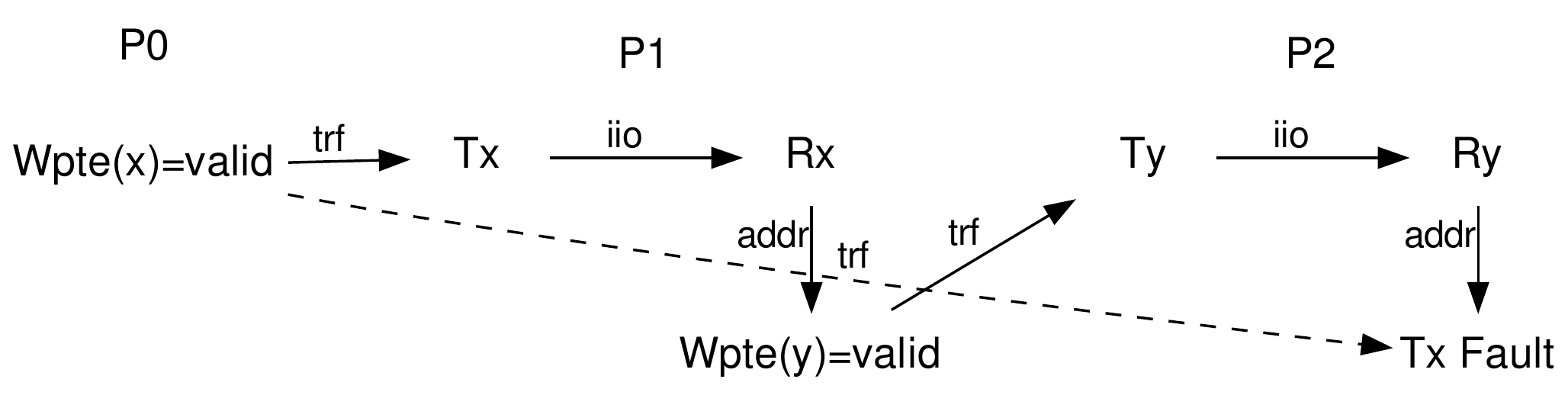}
  \vspace*{-5mm}
\end{wrapfigure}
the last-level
page-table entries for \asm{x} and \asm{y} are
initially invalid and not cached in any TLB, P0 writes a valid entry for \asm{x}, P1 does a translation that sees that entry and then (via an address dependency) writes a valid entry for \asm{y}, then P2 does a translation that sees that entry and then (via an address dependency) tries a translation for \asm{x}, is that last guaranteed to see the valid entry instead of faulting?  This might be exhibited by a microarchitecture with a shared TLB between P0 and P1 (e.g.~if they are SMT threads on the same core, or have a shared TLB for a subcluster).
The tentative Arm conclusion is that this should be forbidden, to avoid software issues with unexpected aborts similar to those motivating ETS.

Now consider the above translation version of LB, generalising from %
po-future writes to other ob-future writes. For transitive combinations of reads-from and dependencies, it 
should clearly still be forbidden, to avoid needing inter-thread roll-back, but for \cat{ob} including coherence edges (\cat{coe}) one can imagine that a translate read could see a write before the coherence relationships are established, analogous to the weakness of coherence in the Power
non-MCA
model.
Discussion of these and other cases with Arm led to the 
tentative conclusion for Armv8-A that translation-walk non-TLB reads (like normal reads) do not see any non-OMCA behaviour.  In other words, there is no programmer-visible caching observable to some non-singleton subsets of threads' translations but not others.

\subsection{Further issues}
Our discussions with Arm identified and clarified various other architectural choices, though we do not discuss them fully here, and our models do not cover them at present. %
To give a flavour:
(1) Misaligned or load/store-pair instructions give rise to multiple accesses, which might be to different pages. Each has their own translation; not ordered w.r.t.~each other, and with no prioritisation of faults between them.  As noted in \S\ref{nonspec-same}, one might translate-read from the other, but not both simultaneously.
(2) Normal registers act like a per-thread sequential memory, with reads reading from the most recent po-previous write, but 
the system registers that control translations can have more relaxed behaviour, requiring \asm{ISB}s to enforce sequential behaviour.
(3) The architecture requires, and OSs rely on, the fact that turning on
the MMU does not need TLB maintenance.  However, in a two-stage world, if
Stage~1 is off, one is still using the TLB for Stage~2, so entries do get
added to the TLB.  When one later turns on Stage~1, it is
essential that the entries added from those earlier Stage~2 translations are not
 used, so one has to
regard them as from a 257'th ASID.

%
%
%
%
%
%
%
%
%
%
%
%
%

%
%

%

%

% auto-generated by cutlines; do not edit

\section{Virtual memory in the pKVM production hypervisor}\label{sec:tests}\label{sec:pkvm}%
Protected KVM, or pKVM~\cite{pKVM-lwn,pKVM-linux-talk,pKVM-src}, is currently being developed by Google to provide a common hypervisor for Android, to provide improved compartmentalisation by a small trusted computing base (TCB) between the Linux kernel and other services.
pKVM is built as a component of Linux. During boot, the Linux kernel hands over control of EL2 to the pKVM code, which constructs a memory map for itself and a Stage~2 memory map to encapsulate the Linux kernel. The Linux kernel thereafter runs only at EL1 (managing EL1\&0 Stage~1 memory maps for itself and for user processes), as the \emph{principal guest}, also known as the \emph{host} (not to be confused with the host hardware).   Other services can run as other guests, which are protected from the kernel and vice versa.  The kernel remains responsible for scheduling, but context switching %
and inter-guest communication is done by hypervisor calls to the pKVM code at EL2.
This gives us an ideal setting in which to examine the management of virtual memory by production code for Armv8-A relaxed-memory-concurrency, with both one and two stages of translation (for EL2 and EL1\&0 respectively).
The pKVM codebase is small, so it is feasible to examine all uses of TLB management, and we benefit from discussions with
the pKVM development team.
We have manually abstracted the main pKVM relaxed-virtual-memory scenarios into \numberOfpKVMTests{} tests.

To give a flavour for these we will explore just a few of the most fundamental for
hypervisor control of translation tables:
\cc{__pkvm_init} performs first-time per-CPU initialisation of pKVM,
which includes setting up the \cc{vmemmap},
a large array storing a struct entry for each physical page in memory
with ownership information;
\cc{__kvm_vcpu_run} switches to a different guest;
\cc{__kvm_flush_vm_context} flushes all entries in all TLB caches that relate to Stage~1 translations;
\cc{__kvm_tlb_flush_vmid_ipa} flushes entries in caches that relate to a particular guest for a given address;
\cc{__kvm_tlb_flush_vmid} flushes all entries in caches that relate to a particular guest;
\cc{__kvm_flush_cpu_context} flushes all entries in caches at EL1 for a given guest;
\cc{__pkvm_cpu_set_vector} sets the vector base address for handling exceptions;
\cc{__pkvm_create_mappings} creates entries in the Stage~1 page table that pKVM uses when it executes;
and
\cc{__pkvm_prot_finalize} which enables Stage 2 translations during boot.

In the remainder of this section we describe a selection of litmus tests
extracted from the current pKVM source code~\cite{pKVM-src} as of Dec~2021.

\subsection{Switching to another guest}\label{simpletest}
The most basic task that pKVM is in charge of is the actual swapping from one vCPU's context to another
to execute a particular VM on the physical CPU.
Deciding \emph{when} and to \emph{which} vCPU to switch is done by the underlying host Linux guest.
pKVM's role is to safely save the current guest's state,
and load the EL1 state for the target vCPU,
and manage the Stage~2 translation tables for the vCPU's VM.

Aside from the fundamental shape of switching from one VM to another on the same core,
there are a few interesting cases to consider:
\begin{itemize}
  \item Switching from one vCPU in a VM to another vCPU in the same VM.
  \item Switching to a new vCPU, re-using a previous VMID.
  \item Re-using an old VMID with a concurrently executing vCPU on another core.
\end{itemize}

When the host guest Linux kernel wishes to perform the switch from one vCPU to another,
it writes to some general-purpose registers and then performs an \asm{HVC} instruction
(in this case, to call the \cc{\_\_kvm\_vcpu\_run} hypercall).

\clearpage
\testPara{pKVM.vcpu\_run}
In the simplest case
where pKVM is just switching from one vCPU to another vCPU
in a different VM,
pKVM restores the per-CPU register state
and sets the \asm{VTTBR} with the new VMID.
So long as the two vCPUs are using disjoint VMIDs
there is no requirement for TLB maintenance.

This test, \cc{pKVM.vcpu\_run}, is below, %
\begin{figure}[h]
\noindent\hspace*{-20mm}\scalebox{0.9}{\newlength{\pKVMdotvcpurunH}
\setlength{\pKVMdotvcpurunH}{0cm}
\newsavebox{\pKVMdotvcpurunI}
\begin{lrbox}{\pKVMdotvcpurunI}
\begin{lstlisting}[language=AArch64,showlines=true]
msr ttbr0_el1, x0  // kvm/hyp/sysreg-sr.h:96
msr vttbr_el2, x1  // include/asm/kvm_mmu.h:276
eret               // kvm/hyp/nvhe/host.S
L0:
ldr x2, [x3]       // in guest
\end{lstlisting}
\end{lrbox}
\newlength{\pKVMdotvcpurunJ}
\settowidth{\pKVMdotvcpurunJ}{\usebox{\pKVMdotvcpurunI}}
\addtolength{\pKVMdotvcpurunH}{\pKVMdotvcpurunJ}
\newsavebox{\pKVMdotvcpurunK}
\begin{lrbox}{\pKVMdotvcpurunK}
\begin{lstlisting}[language=AArch64,showlines=true]
0x1400:
mov x2, #0
\end{lstlisting}
\end{lrbox}
\newlength{\pKVMdotvcpurunL}
\settowidth{\pKVMdotvcpurunL}{\usebox{\pKVMdotvcpurunK}}
\addtolength{\pKVMdotvcpurunH}{\pKVMdotvcpurunL}
\newsavebox{\pKVMdotvcpurunM}
\begin{lrbox}{\pKVMdotvcpurunM}
\begin{lstlisting}[language=IslaPageTableSetup,showlines=true]
option default_tables = false;
virtual x;
physical pa1 pa2;
intermediate ipa1 ipa2;
s1table hyp_map 0x200000 {
  identity 0x1000 with code;
  x |-> invalid; }
s1table vm1_stage1 0x2C0000 {
  x |-> ipa1; }
s1table vm2_stage1 0x300000 {
  x |-> ipa2; }
s2table vm1_stage2 0x240000 {
  ipa1 |-> pa1;
  ipa2 |-> invalid;
  s1table vm1_stage1; }
s2table vm2_stage2 0x280000 {
  ipa1 |-> invalid;
  ipa2 |-> pa2;
  s1table vm2_stage1; }
*pa2 = 1;
\end{lstlisting}
\end{lrbox}
\newlength{\pKVMdotvcpurunN}
\settowidth{\pKVMdotvcpurunN}{\usebox{\pKVMdotvcpurunM}}
\newsavebox{\pKVMdotvcpurunO}
\begin{lrbox}{\pKVMdotvcpurunO}
\begin{tabular}{l}
\vphantom{$\vcenter{\hbox{\rule{0pt}{1.8em}}}$}Initial state:\\
\lstinline[language=IslaLitmusExp]|PSTATE.EL=0b10|  // initial exception level is EL2\\
\lstinline[language=IslaLitmusExp]|VBAR_EL2=0x1000| // exception vector base address\\
\lstinline[language=IslaLitmusExp]|ELR_EL2=L0:|  // exception link register, to return to from EL2\\
\lstinline[language=IslaLitmusExp]|SPSR_EL2=0b00101| // saved program status\\
\lstinline[language=IslaLitmusExp]|TTBR0_EL1=ttbr(asid=0x00,base=vm1_stage1)| //
%translation table base 0, for
EL1 Stage 1\\
\lstinline[language=IslaLitmusExp]|VTTBR_EL2=ttbr(vmid=0x0001,base=vm1_stage2)| //
%virtualization translation table base, for
Stage 2\\
\lstinline[language=IslaLitmusExp]|TTBR0_EL2=ttbr(base=hyp_map,asid=0x00)| //
%translation table base 0, for
EL2\\
\lstinline[language=IslaLitmusExp]|x0=ttbr(asid=0x00,base=vm2_stage1)|\\
\lstinline[language=IslaLitmusExp]|x1=ttbr(base=vm2_stage2,vmid=0x0002)|\\
\lstinline[language=IslaLitmusExp]|x3=x|\\
\end{tabular}
\end{lrbox}
\newlength{\pKVMdotvcpurunJJ}
\settowidth{\pKVMdotvcpurunJJ}{\usebox{\pKVMdotvcpurunO}}
\begin{tabular}{|l|l|}
  \multicolumn{2}{l}{\textbf{AArch64} \lstinline[language=IslaLitmusName]|pKVM.vcpu_run|}\\
  \hline
  \multirow{6}{*}{\begin{minipage}{\pKVMdotvcpurunN}%
  %\vphantom{$\vcenter{\hbox{\rule{0pt}{1.8em}}}$}
  \vspace*{-5\baselineskip}%
  Page table setup:\\\usebox{\pKVMdotvcpurunM}\end{minipage}} & \cellcolor{IslaInitialState}{\usebox{\pKVMdotvcpurunO}}\\
  \cline{2-2}
  & Thread 0 (with pKVM source lines)\\
  \cline{2-2}
  & \usebox{\pKVMdotvcpurunI}\\
  \cline{2-2}
  & Thread 0 EL2 handler\\
  \cline{2-2}
  & \usebox{\pKVMdotvcpurunK}\\
  \cline{2-2}
  & Final state: \lstinline[language=IslaLitmusExp]|0:x2=0|\\
  \hline
\end{tabular}
}%
\end{figure}
typeset (lightly hand-edited) from the TOML input format of our Isla tool (\S\ref{isla}).
Here there is a single physical CPU, initially running a virtual machine VM1, with VMID \texttt{0x0001}, at EL1.
The %
section on the left defines the initial and all potential states of the page tables, and any other memory state.   
This test sets up separate translation tables for pKVM at EL2 (which has just a single stage) and for two VMs (each with two stages, Stage 2 controlled by pKVM and Stage 1 controlled by the VM). 
pKVM's own mapping \texttt{hyp\_map} maps its code. 
VM1's own Stage~1 mapping \texttt{vm1\_stage1} maps virtual address \texttt{x} to \texttt{ipa1},
and the initial pKVM-managed Stage 2 mapping \texttt{vm1\_stage2} maps that \texttt{ipa1} to \texttt{pa1}, which implicitly initially holds 0.
These page tables are described concisely %
by a small declarative language we developed, %
determining the page-table memory (here $\sim$30k)
required for the %
Armv8-A page-table walks. %

The %
top-right block gives the initial Thread~0 register values, including the various page-table base registers.
The bottom-right blocks give the code of the test. This starts running at EL2, as one can see from the \texttt{PSTATE.EL} register value. 
The key assembly lines are annotated with the pKVM source line numbers they correspond to.
To switch to run another virtual machine VM2, with VMID \texttt{0x0002}, on this same physical CPU, pKVM changes  \texttt{VTTBR\_EL2} to the new \texttt{vm2\_stage2} mapping and, as part of the context-switch register-file changes, restores \texttt{TTBR0\_EL1} to the VM2's own Stage~1 mapping \texttt{vm2\_stage1}.
The code then executes an \asm{ERET} (``exception-return'') instruction to return to EL1, and then tries to read \texttt{x}.
The test includes a final assertion %
of the relaxed outcome
that register \texttt{x2=0}, which could occur
if the \texttt{ldr} translation used the old VM1 mapping instead of VM2's mapping.  In this case that should not be allowed.
Other tests capture more elaborate scenarios. For example,
currently the host kernel manages VMIDs and assigns each VM its own VMID.
If the host runs out of VMIDs to allocate to new vCPUs,
it currently revokes all previously allocated VMIDs and re-allocates from the beginning, during which
pKVM has to ensure that any old vCPUs' translations using that VMID are expelled from any TLBs (\cc{pKVM.vcpu\_run.update\_vmid}).
If there is a concurrently executing vCPU using that VMID,
that vCPU must be paused
until after the new VMID generation (and hence any required TLB maintenance),
before continuing
with the freshly allocated VMID (\cc{pKVM.vcpu\_run.update\_vmid.concurrent}).
For another example, for pKVM to maintain the illusion that each vCPU is on its own core,
the per-core state must be cleaned between running different vCPUs,
including ensuring that translations for one vCPU are not cached and visible
to another, even if they happen to be in the same VM (and using the same VMID) (\cc{pKVM.vcpu\_run.same\_vm}).

The most basic task that pKVM is in charge of is the actual swapping from one vCPU's context to another
to execute a particular VM on the physical CPU.
Deciding \emph{when} and \emph{to which} vCPU to switch is done by the underlying host Linux guest.
pKVM's role is to safely save off the current guest's state,
and load the EL1 state for the target vCPU,
and manage the Stage~2 translation tables for the vCPU's VM.

Aside from the fundamental shape of switching from one VM to another on the same core,
there are a few interesting cases to consider:
\begin{itemize}
  \item Switching from one vCPU in a VM to another vCPU in the same VM.
  \item Switching to a new vCPU, re-using a previous VMID.
  \item Re-using an old VMID with a concurrently executing vCPU on another core.
\end{itemize}

When the host guest Linux kernel wishes to perform the switch from one vCPU to another,
it writes to some general-purpose registers and then performs an \asm{HVC} instruction
(in this case, to call the \cc{\_\_kvm\_vcpu\_run} hypercall).

\clearpage
\testpara{pKVM.vcpu\_run.update\_vmid}

Since pKVM's host Linux kernel is responsible for all scheduling,
it is responsible for the selection of VMIDs.
Typically, each VM is given its own VMID,
and all vCPUs within that VM share that VMID.
By giving distinct VMs different VMIDs, the host Linux kernel
can switch between vCPUs in different VMs without requiring TLB maintenance,
as was seen in the previous test.
Eventually, the host kernel runs out of VMIDs to allocate,
either from the naive incremental-allocation or because the
number of VMs is larger than the available VMID space
(8 or 16 bits in the worst case).
Currently, pKVM's host Linux kernel will simply revoke all previously allocated VMIDs,
and when pKVM goes to switch to a vCPU for a VM without an allocated VMID, it will be allocated there and then.

The \PKVMTEST{pKVM.vcpu\_run.update\_vmid} test in Figure~\ref{fig:vcpurunupdatevmid_code} is the simplest case, where
the new vCPU is the only vCPU of that VM that is running, and no other physical CPUs are executing.
Here, we have two VMs and their associated tables in the initial state:
\herd{vm1\_stage2} is the root of the Stage~2 table for the first VM,
and \herd{vm2\_stage2} is the root of the Stage~2 table for the second VM.
Initially, the \asm{VTTBR} points to VM1's table,
with VMID 1.
When switching to VM2's table,
with the same VMID,
the host Linux kernel begins a sequence of calls to the hypervisor:
first it must flush the CPU context,
to make sure any old cached translations for the old VMID are removed.
From there, pKVM can perform the switch to the new vCPU by restoring the EL1 system register
state,
then pointing the \asm{VTTBR} to the new table with the new ASID,
before doing an exception-return to the guest.
If the guest then accesses a location, it should be guaranteed to use the
new Stage~2 mapping with its own Stage~1 tables.
Figure~\ref{fig:vcpurunupdatevmid_code} contains the code listing and initial conditions for this test;
execution begins at EL2 with \cc{vm1}'s vCPU state, with VMID 1.
The test performs the pKVM \cc{vcpu\_run} sequence to clean up the TLB,
switch to \cc{vm2}'s EL1 state and then switching to \cc{vm2}'s Stage~2 mapping
with VMID 1 (the same as what \cc{vm1} had been using)
before returning to the guest.
The test then asserts that the guest's access uses \cc{vm2}'s own restored Stage~1 translation tables,
and does not see the old Stage~2 entries of \cc{vm1}.

Note that this test \emph{begins} with TLB invalidation.
To really capture the full sequence that is microarchitecturally interesting,
the test should really begin from execution at EL1 inside the guest
allowing Stage~2 TLB fills.
The test beginning at EL2 from a clean machine state would not give the TLB time to
actually fill with stale entries from \cc{vm1}'s Stage~2 translation tables
in an operational model.
Also note that, currently, Isla does not produce candidates with re-ordering of system register reads,
and so there are no `bad' candidate executions to show from the Isla-generated executions currently.
This is work-in-progress to find the correct semantics for such re-orderings.

\begin{figure}
  \centering
  \scalebox{0.8}{\newlength{\vcpurunupdatevmidH}
\setlength{\vcpurunupdatevmidH}{0cm}
\newsavebox{\vcpurunupdatevmidI}
\begin{lrbox}{\vcpurunupdatevmidI}
\begin{lstlisting}[language=AArch64,showlines=true]
// kvm/hyp/nvhe/tlb.c:145
dsb ishst
// kvm/hyp/nvhe/tlb.c:146
tlbi alle1is
// kvm/hyp/nvhe/tlb.c:160
dsb sy
// kvm/hyp/sysreg-sr.h:96
MSR TTBR0_EL1,X0
// include/asm/kvm_mmu.h:276
MSR VTTBR_EL2,X1
// kvm/hyp/nvhe/host.S
ERET
L0:
// in guest
LDR X2,[X3]
\end{lstlisting}
\end{lrbox}
\newlength{\vcpurunupdatevmidJ}
\settowidth{\vcpurunupdatevmidJ}{\usebox{\vcpurunupdatevmidI}}
\addtolength{\vcpurunupdatevmidH}{\vcpurunupdatevmidJ}
\newsavebox{\vcpurunupdatevmidK}
\begin{lrbox}{\vcpurunupdatevmidK}
\begin{lstlisting}[language=AArch64,showlines=true]
0x1200:
mov x2, #0
// data abort preferred-return-address is itself
// so jump to next instr instead
mrs x20,elr_el2
add x20,x20,#4
msr elr_el2,x20
eret
\end{lstlisting}
\end{lrbox}
\newlength{\vcpurunupdatevmidL}
\settowidth{\vcpurunupdatevmidL}{\usebox{\vcpurunupdatevmidK}}
\addtolength{\vcpurunupdatevmidH}{\vcpurunupdatevmidL}
\newsavebox{\vcpurunupdatevmidM}
\begin{lrbox}{\vcpurunupdatevmidM}
\begin{lstlisting}[language=IslaPageTableSetup,showlines=true]
option default_tables = false;
physical pa1 pa2;
intermediate ipa1 ipa2;

s1table hyp_map 0x200000 {
    x |-> invalid;
}

s2table vm1_stage2 0x300000 {
    ipa1 |-> pa1;
    ipa2 |-> invalid;

    s1table vm1_stage1 0x280000 {
        x |-> ipa1;
    }
}

s2table vm2_stage2 0x380000 {
    ipa1 |-> invalid;
    ipa2 |-> pa2;

    s1table vm2_stage1 0x2C0000 {
        x |-> ipa2;
    }
}

*pa2 = 1;
\end{lstlisting}
\end{lrbox}
\newlength{\vcpurunupdatevmidN}
\settowidth{\vcpurunupdatevmidN}{\usebox{\vcpurunupdatevmidM}}
\newsavebox{\vcpurunupdatevmidO}
\begin{lrbox}{\vcpurunupdatevmidO}
\begin{minipage}{0.5\vcpurunupdatevmidH}
\vphantom{$\vcenter{\hbox{\rule{0pt}{1.8em}}}$}Initial state:\\
\lstinline[language=IslaLitmusExp]|R1=ttbr(base=vm2_stage2,vmid=0x0001)|\\
\lstinline[language=IslaLitmusExp]|PSTATE.SP=0b1|\\
\lstinline[language=IslaLitmusExp]|TTBR0_EL1=ttbr(base=vm1_stage1,asid=0x0000)|\\
\lstinline[language=IslaLitmusExp]|R3=x|\\
\lstinline[language=IslaLitmusExp]|VBAR_EL2=0x1000|\\
\lstinline[language=IslaLitmusExp]|VTTBR_EL2=ttbr(vmid=0x0001,base=vm1_stage2)|\\
\lstinline[language=IslaLitmusExp]|PSTATE.EL=0b10|\\
\lstinline[language=IslaLitmusExp]|TTBR0_EL2=ttbr(asid=0x0000,base=hyp_map)|\\
\lstinline[language=IslaLitmusExp]|ELR_EL2=L0:|\\
\lstinline[language=IslaLitmusExp]|SPSR_EL2=0b00101|\\
\lstinline[language=IslaLitmusExp]|R0=ttbr(base=vm2_stage1,asid=0x0000)|\\
\end{minipage}
\end{lrbox}
\begin{tabular}{|l|l|}
  \multicolumn{2}{l}{\textbf{AArch64} \lstinline[language=IslaLitmusName]|pKVM.vcpu_run.update_vmid|}\\
  \hline
  \multirow{6}{*}{\begin{minipage}{\vcpurunupdatevmidN}\vphantom{$\vcenter{\hbox{\rule{0pt}{1.8em}}}$}Page table setup:\\\usebox{\vcpurunupdatevmidM}\end{minipage}} & \cellcolor{IslaInitialState}{{\usebox{\vcpurunupdatevmidO}}}\\
  \cline{2-2}
  & Thread 0\\
  \cline{2-2}
  & {\usebox{\vcpurunupdatevmidI}}\\
  \cline{2-2}
  & thread0 el2 handler\\
  \cline{2-2}
  & {\usebox{\vcpurunupdatevmidK}}\\
  \cline{2-2}
  & Final state: \lstinline[language=IslaLitmusExp]|0:R2=0|\\
  \hline
\end{tabular}
}
\caption{}\label{fig:vcpurunupdatevmid_code}
\end{figure}

\clearpage
\testpara{pKVM.vcpu\_run.update\_vmid.concurrent}

The concurrent case of the previous \PKVMTEST{pKVM.vcpu\_run.update\_vmid} test is critical:
while switching from one vCPU to another on the same core is typically a
thread-local event,
care must be taken in the case where another CPU is executing a VM whose current VMID should be revoked, as in Fig.~\ref{fig:vcpurunupdatevmidconcurrent}.

In this case, pKVM must interrupt the other core, so that it is not concurrently executing while the new VMIDs are being
allocated; otherwise, it might pollute the address space of the VM that gets allocated that VMID.

pKVM does this by sending an IPI to all the other cores to break out of their current vCPUs.
When it does this, each CPU will attempt to switch back to the host, and, in doing so,
will be forced to take a lock when acquiring the VMID to use.
This lock prevents those CPUs from executing at EL1,
and the architecture prevents the hardware from performing TLB fills while at EL2.
These guarantees ensure that while VMIDs are being re-allocated and TLB maintenance performed
on the original core,
no stale entries can find their way back into the TLB.

\begin{figure}
\noindent\hspace*{-19mm}
  \resizebox*{!}{0.95\textheight}{\newlength{\vcpurunupdatevmidconcurrentH}
\setlength{\vcpurunupdatevmidconcurrentH}{0cm}
\newsavebox{\vcpurunupdatevmidconcurrentI}
\begin{lrbox}{\vcpurunupdatevmidconcurrentI}
\begin{lstlisting}[language=AArch64,showlines=true]
// kvm/arm.c:551 force_vm_exit(cpu_all_mask);
STR X2,[X3]
// kvm/arm.c:551 force_vm_exit(cpu_all_mask);
LDR X4,[X5]
// kvm/hyp/nvhe/tlb.c:145
dsb ishst
// kvm/hyp/nvhe/tlb.c:146
tlbi alle1is
// kvm/hyp/nvhe/tlb.c:160
dsb sy
// kvm/arm.c:567 spin_unlock(&kvm_vmid_lock);
STR X6,[X7]
\end{lstlisting}
\end{lrbox}
\newlength{\vcpurunupdatevmidconcurrentJ}
\settowidth{\vcpurunupdatevmidconcurrentJ}{\usebox{\vcpurunupdatevmidconcurrentI}}
\addtolength{\vcpurunupdatevmidconcurrentH}{\vcpurunupdatevmidconcurrentJ}
\newsavebox{\vcpurunupdatevmidconcurrentK}
\begin{lrbox}{\vcpurunupdatevmidconcurrentK}
\begin{lstlisting}[language=AArch64,showlines=true]
// in guest, read X
LDR X0,[X1]
DSB SY
ISB
// fake IPI
HVC #0
// try again
LDR X2,[X3]
\end{lstlisting}
\end{lrbox}
\newlength{\vcpurunupdatevmidconcurrentL}
\settowidth{\vcpurunupdatevmidconcurrentL}{\usebox{\vcpurunupdatevmidconcurrentK}}
\addtolength{\vcpurunupdatevmidconcurrentH}{\vcpurunupdatevmidconcurrentL}
\newsavebox{\vcpurunupdatevmidconcurrentM}
\begin{lrbox}{\vcpurunupdatevmidconcurrentM}
\begin{lstlisting}[language=AArch64,showlines=true]
0x1400:
// smb recieve
LDR X4,[X5]
// smb reply
STR X6,[X7]
// kvm/hyp/include/nvhe/spinlock.h
LDAR X8,[X9]
// kvm/hyp/sysreg-sr.h:96
MSR TTBR0_EL1,X10
// include/asm/kvm_mmu.h:276
MSR VTTBR_EL2,X11
// kvm/hyp/nvhe/host.S
ERET
\end{lstlisting}
\end{lrbox}
\newlength{\vcpurunupdatevmidconcurrentN}
\settowidth{\vcpurunupdatevmidconcurrentN}{\usebox{\vcpurunupdatevmidconcurrentM}}
\addtolength{\vcpurunupdatevmidconcurrentH}{\vcpurunupdatevmidconcurrentN}
\newsavebox{\vcpurunupdatevmidconcurrentO}
\begin{lrbox}{\vcpurunupdatevmidconcurrentO}
\begin{lstlisting}[language=IslaPageTableSetup,showlines=true]
option default_tables = false;
physical pa1 pa2 pa_ipi pa_kvm_vmid_lock;
intermediate ipa1 ipa2;

s1table hyp_map 0x200000  {
    identity 0x1000 with code;
    x |-> invalid;
    ipi |-> pa_ipi;
    kvm_vmid_lock |-> pa_kvm_vmid_lock;
}

s2table vm1_stage2 0x300000  {
    ipa1 |-> pa1;
    ipa1 ?-> invalid;
    ipa2 |-> invalid;
    ipa2 ?-> pa2;

    s1table vm1_stage1 0x280000 {
        x |-> ipa1;
    }
}

s2table vm2_stage2 0x380000  {
    ipa1 |-> invalid;
    ipa1 ?-> pa1;
    ipa2 |-> pa2;
    ipa2 ?-> invalid;

    s1table vm2_stage1 0x2C0000 {
        x |-> ipa2;
    }
}

*pa2 = 1;
*pa_kvm_vmid_lock = 1;
\end{lstlisting}
\end{lrbox}
\newlength{\vcpurunupdatevmidconcurrentP}
\settowidth{\vcpurunupdatevmidconcurrentP}{\usebox{\vcpurunupdatevmidconcurrentO}}
\newsavebox{\vcpurunupdatevmidconcurrentA}
\begin{lrbox}{\vcpurunupdatevmidconcurrentA}
\begin{minipage}{0.6\vcpurunupdatevmidconcurrentH}
\vphantom{$\vcenter{\hbox{\rule{0pt}{1.8em}}}$}Initial state:\\
\lstinline[language=IslaLitmusExp]|0:R7=kvm_vmid_lock|\\
\lstinline[language=IslaLitmusExp]|0:PSTATE.SP=0b1|\\
\lstinline[language=IslaLitmusExp]|0:R6=0b0|\\
\lstinline[language=IslaLitmusExp]|0:R2=0b1|\\
\lstinline[language=IslaLitmusExp]|0:TTBR0_EL2=ttbr(asid=0x0000,base=hyp_map)|\\
\lstinline[language=IslaLitmusExp]|0:PSTATE.EL=0b10|\\
\lstinline[language=IslaLitmusExp]|0:R3=ipi|\\
\lstinline[language=IslaLitmusExp]|0:R5=ipi|\\
\lstinline[language=IslaLitmusExp]|0:R0=0b1|\\
\lstinline[language=IslaLitmusExp]|0:R1=kvm_vmid_lock|\\
\lstinline[language=IslaLitmusExp]|1:TTBR0_EL2=ttbr(base=hyp_map,asid=0x0000)|\\
\lstinline[language=IslaLitmusExp]|1:TTBR0_EL1=ttbr(base=vm1_stage1,asid=0x0000)|\\
\lstinline[language=IslaLitmusExp]|1:VBAR_EL2=0x1000|\\
\lstinline[language=IslaLitmusExp]|1:VTTBR_EL2=ttbr(base=vm1_stage2,vmid=0x0001)|\\
\lstinline[language=IslaLitmusExp]|1:R1=x|\\
\lstinline[language=IslaLitmusExp]|1:R3=x|\\
\lstinline[language=IslaLitmusExp]|1:PSTATE.EL=0b01|\\
\lstinline[language=IslaLitmusExp]|1:R6=0b10|\\
\lstinline[language=IslaLitmusExp]|1:R10=ttbr(asid=0x0000,base=vm2_stage1)|\\
\lstinline[language=IslaLitmusExp]|1:R9=kvm_vmid_lock|\\
\lstinline[language=IslaLitmusExp]|1:R5=ipi|\\
\lstinline[language=IslaLitmusExp]|1:R7=ipi|\\
\lstinline[language=IslaLitmusExp]|1:R11=ttbr(vmid=0x0001,base=vm2_stage2)|\\
\end{minipage}
\end{lrbox}
\begin{tabular}{|l|l|}
  \multicolumn{2}{l}{\textbf{AArch64} \lstinline[language=IslaLitmusName]|pKVM.vcpu_run.update_vmid.concurrent|}\\
  \hline
  \multirow{6}{*}{\begin{minipage}{\vcpurunupdatevmidconcurrentP}\vphantom{$\vcenter{\hbox{\rule{0pt}{1.8em}}}$}Page table setup:\\\usebox{\vcpurunupdatevmidconcurrentO}\end{minipage}} & \cellcolor{IslaInitialState}{\usebox{\vcpurunupdatevmidconcurrentA}}\\
  \cline{2-2}
  & Thread 0\\
  \cline{2-2}
  & \usebox{\vcpurunupdatevmidconcurrentI}\\
  \cline{2-2}
  & Thread 1\\
  \cline{2-2}
  & \usebox{\vcpurunupdatevmidconcurrentK}\\
  \cline{2-2}
  & thread1 el2 handler\\
  \cline{2-2}
  & \usebox{\vcpurunupdatevmidconcurrentM}\\
  \cline{2-2}
  & Final state: \lstinline[language=IslaLitmusExp]|0:R4=2 & 1:R0=0 & 1:R4=1 & 1:R8=0 & 1:R2=0|\\
  \hline
\end{tabular}
}
\caption{}\label{fig:vcpurunupdatevmidconcurrent}
\end{figure}

\clearpage
\testpara{pKVM.vcpu\_run.same\_vm}

For another example, for pKVM to maintain the illusion that each vCPU is on its own core,
the per-core state must be cleaned between running different vCPUs,
including ensuring that translations for one vCPU are not cached and visible
to another, even if they happen to be in the same VM (and using the same VMID) (\cc{pKVM.vcpu\_run.same\_vm} in Figure~\ref{fig:vcpurunsamevm}).

\begin{figure}
  \noindent\hspace*{-19mm}
    \resizebox*{!}{0.95\textheight}{\newlength{\vcpurunsamevmH}
\setlength{\vcpurunsamevmH}{0cm}
\newsavebox{\vcpurunsamevmI}
\begin{lrbox}{\vcpurunsamevmI}
\begin{lstlisting}[language=AArch64,showlines=true]
// arm64/include/asm/kvm_mmu.h:276
MSR VTTBR_EL2,X4
// kvm/hyp/nvhe/tlb.c:43
isb
// kvm/hyp/nvhe/tlb.c:135
tlbi vmalle1
// kvm/hyp/nvhe/tlb.c:137
dsb nsh
// kvm/hyp/nvhe/tlb.c:138
isb
// arm64/include/asm/kvm_mmu.h:276
MSR VTTBR_EL2,X5
// kvm/hyp/nvhe/tlb.c:52
isb
// kvm/hyp/sysreg-sr.h:96
MSR TTBR0_EL1,X0
// include/asm/kvm_mmu.h:276
MSR VTTBR_EL2,X1
// kvm/hyp/nvhe/host.S
ERET
L0:
// in guest
LDR X2,[X3]
\end{lstlisting}
\end{lrbox}
\newlength{\vcpurunsamevmJ}
\settowidth{\vcpurunsamevmJ}{\usebox{\vcpurunsamevmI}}
\addtolength{\vcpurunsamevmH}{\vcpurunsamevmJ}
\newsavebox{\vcpurunsamevmK}
\begin{lrbox}{\vcpurunsamevmK}
\begin{lstlisting}[language=AArch64,showlines=true]
0x1200:
mov x2, #0
// data abort preferred-return-address is itself
// so jump to next instr instead
mrs x20,elr_el2
add x20,x20,#4
msr elr_el2,x20
eret
\end{lstlisting}
\end{lrbox}
\newlength{\vcpurunsamevmL}
\settowidth{\vcpurunsamevmL}{\usebox{\vcpurunsamevmK}}
\addtolength{\vcpurunsamevmH}{\vcpurunsamevmL}
\newsavebox{\vcpurunsamevmM}
\begin{lrbox}{\vcpurunsamevmM}
\begin{lstlisting}[language=IslaPageTableSetup,showlines=true]
option default_tables = false;
physical pa1 pa2;
intermediate ipa1 ipa2;

s1table hyp_map 0x200000  {
    identity 0x1000 with code;
    x |-> invalid;
}

s2table vm1_stage2 0x300000  {
    ipa1 |-> pa1;
    ipa1 ?-> invalid;
    ipa2 |-> invalid;
    ipa2 ?-> pa2;

    s1table vm1_stage1 0x260000 {
        x |-> ipa1;
    }
}

s2table vm2_stage2 0x380000  {
    ipa1 |-> invalid;
    ipa1 ?-> pa1;
    ipa2 |-> pa2;
    ipa2 ?-> invalid;

    s1table vm2_stage1 0x2C0000 {
        x |-> ipa2;
    }
}

*pa2 = 1;
\end{lstlisting}
\end{lrbox}
\newlength{\vcpurunsamevmN}
\settowidth{\vcpurunsamevmN}{\usebox{\vcpurunsamevmM}}
\newsavebox{\vcpurunsamevmO}
\begin{lrbox}{\vcpurunsamevmO}
\begin{minipage}{0.7\vcpurunsamevmH}
\vphantom{$\vcenter{\hbox{\rule{0pt}{1.8em}}}$}Initial state:\\
\lstinline[language=IslaLitmusExp]|R3=x|\\
\lstinline[language=IslaLitmusExp]|R5=ttbr(base=0b0,vmid=0x0000)|\\
\lstinline[language=IslaLitmusExp]|TTBR0_EL1=ttbr(base=vm1_stage1,asid=0x0000)|\\
\lstinline[language=IslaLitmusExp]|SPSR_EL2=0b00101|\\
\lstinline[language=IslaLitmusExp]|TTBR0_EL2=ttbr(asid=0x0000,base=hyp_map)|\\
\lstinline[language=IslaLitmusExp]|ELR_EL2=L0:|\\
\lstinline[language=IslaLitmusExp]|R1=ttbr(vmid=0x0001,base=vm2_stage2)|\\
\lstinline[language=IslaLitmusExp]|R0=ttbr(asid=0x0000,base=vm2_stage1)|\\
\lstinline[language=IslaLitmusExp]|R4=ttbr(base=0b0,vmid=0x0001)|\\
\lstinline[language=IslaLitmusExp]|VBAR_EL2=0x1000|\\
\lstinline[language=IslaLitmusExp]|VTTBR_EL2=ttbr(vmid=0x0001,base=vm1_stage2)|\\
\lstinline[language=IslaLitmusExp]|PSTATE.EL=0b10|\\
\end{minipage}
\end{lrbox}
\begin{tabular}{|l|l|}
  \multicolumn{2}{l}{\textbf{AArch64} \lstinline[language=IslaLitmusName]|pKVM.vcpu_run.same_vm|}\\
  \hline
  \multirow{6}{*}{\begin{minipage}{\vcpurunsamevmN}\vphantom{$\vcenter{\hbox{\rule{0pt}{1.8em}}}$}Page table setup:\\\usebox{\vcpurunsamevmM}\end{minipage}} & \cellcolor{IslaInitialState}{\usebox{\vcpurunsamevmO}}\\
  \cline{2-2}
  & Thread 0\\
  \cline{2-2}
  & \usebox{\vcpurunsamevmI}\\
  \cline{2-2}
  & thread0 el2 handler\\
  \cline{2-2}
  & \usebox{\vcpurunsamevmK}\\
  \cline{2-2}
  & Final state: \lstinline[language=IslaLitmusExp]|0:R2=0|\\
  \hline
\end{tabular}
}
  \caption{}\label{fig:vcpurunsamevm}
\end{figure}

\clearpage
\subsection{Data Aborts}

As mentioned earlier,
a vCPU accessing a location that it does not have permissions for
or which is not mapped,
results in an Abort which is handled at EL2.

\testpara{pKVM.host\_handle\_trap.stage2\_idmap.l3}
If the vCPU accesses a location which is currently un-mapped,
but should be mapped on-demand, then pKVM will make a new mapping
for that vCPU, install it into its VM's Stage~2 translation tables,
and return to the vCPU to re-try the access
(Figures~\ref{fig:hosthandletrapstagetwoidmaplthree_code} and \ref{fig:hosthandletrapstagetwoidmaplthree_exec}).
This action typically requires no TLB invalidation, as the previously unmapped entries
could not have been stored in any TLB.

An added complexity here is that if pKVM wishes
to map only a single page then it \emph{must} install this mapping
with a Level~3 entry.
If the mapping is currently invalid at Level~2 or Level~1 then
care must be taken to not create entries out-of-order.
pKVM manages this by producing a fresh table of invalid entries first,
then installing that new table into the translation tables,
and recursing down until it reaches the level it needs to install the valid mapping at.
Otherwise, a concurrent translation could see the new table entries before
the leaf entries had been installed into the table.

\begin{figure}
    \noindent\hspace*{-19mm}
{\newlength{\hosthandletrapstagetwoidmaplthreeH}
\setlength{\hosthandletrapstagetwoidmaplthreeH}{0cm}
\newsavebox{\hosthandletrapstagetwoidmaplthreeI}
\begin{lrbox}{\hosthandletrapstagetwoidmaplthreeI}
\begin{lstlisting}[language=AArch64,showlines=true]
LDR X0,[X1]
LDR X2,[X3]
\end{lstlisting}
\end{lrbox}
\newlength{\hosthandletrapstagetwoidmaplthreeJ}
\settowidth{\hosthandletrapstagetwoidmaplthreeJ}{\usebox{\hosthandletrapstagetwoidmaplthreeI}}
\addtolength{\hosthandletrapstagetwoidmaplthreeH}{\hosthandletrapstagetwoidmaplthreeJ}
\newsavebox{\hosthandletrapstagetwoidmaplthreeK}
\begin{lrbox}{\hosthandletrapstagetwoidmaplthreeK}
\begin{lstlisting}[language=AArch64,showlines=true]
0x1400:
// count number of exceptions
add x10,x10,#1
// remember which IPA failed
mrs x9,hpfar_el2
lsl x9,x9,#8
// pkvm code
stlr x11,[x12]
dsb ishst
1:
// return to next instruction
// pKVM doesn't really do this, it just tries the same instr again
// but without this the test can loop forever ...
mrs x20,elr_el2
add x20,x20,#4
msr elr_el2,x20
// return from handle_trap
eret
\end{lstlisting}
\end{lrbox}
\newlength{\hosthandletrapstagetwoidmaplthreeL}
\settowidth{\hosthandletrapstagetwoidmaplthreeL}{\usebox{\hosthandletrapstagetwoidmaplthreeK}}
\addtolength{\hosthandletrapstagetwoidmaplthreeH}{\hosthandletrapstagetwoidmaplthreeL}
\newsavebox{\hosthandletrapstagetwoidmaplthreeM}
\begin{lrbox}{\hosthandletrapstagetwoidmaplthreeM}
\begin{lstlisting}[language=IslaPageTableSetup,showlines=true]
option default_tables = false;
physical pa1;
intermediate ipa1;

s2table vm_stage2 0x260000 {
    ipa1 |-> invalid;
    ipa1 ?-> pa1;

    s1table host_s1 0x2C0000 {
        x |-> ipa1;
    }
}

s1table hyp_map 0x200000 {
    x |-> invalid;
    s1table host_s1;
    s2table vm_stage2;
    identity 0x1000 with code;
}

*pa1 = 1;
\end{lstlisting}
\end{lrbox}
\newlength{\hosthandletrapstagetwoidmaplthreeN}
\settowidth{\hosthandletrapstagetwoidmaplthreeN}{\usebox{\hosthandletrapstagetwoidmaplthreeM}}
\newsavebox{\hosthandletrapstagetwoidmaplthreeO}
\begin{lrbox}{\hosthandletrapstagetwoidmaplthreeO}
\begin{minipage}{0.8\hosthandletrapstagetwoidmaplthreeH}
\vphantom{$\vcenter{\hbox{\rule{0pt}{1.8em}}}$}Initial state:\\
\lstinline[language=IslaLitmusExp]|R1=x|\\
\lstinline[language=IslaLitmusExp]|R12=pte3(ipa1,vm_stage2)|\\
\lstinline[language=IslaLitmusExp]|R3=x|\\
\lstinline[language=IslaLitmusExp]|TTBR0_EL1=ttbr(asid=0x0000,base=host_s1)|\\
\lstinline[language=IslaLitmusExp]|TTBR0_EL2=ttbr(base=hyp_map,asid=0x0000)|\\
\lstinline[language=IslaLitmusExp]|R11=mkdesc3(oa=pa1)|\\
\lstinline[language=IslaLitmusExp]|R10=0b0|\\
\lstinline[language=IslaLitmusExp]|PSTATE.EL=0b01|\\
\lstinline[language=IslaLitmusExp]|VTTBR_EL2=ttbr(base=vm_stage2,vmid=0x0000)|\\
\lstinline[language=IslaLitmusExp]|VBAR_EL2=0x1000|\\
\end{minipage}
\end{lrbox}
\begin{tabular}{|l|l|}
  \multicolumn{2}{l}{\textbf{AArch64} \lstinline[language=IslaLitmusName]|pKVM.host_handle_trap.stage2_idmap.l3|}\\
  \hline
  \multirow{6}{*}{\begin{minipage}{\hosthandletrapstagetwoidmaplthreeN}\vphantom{$\vcenter{\hbox{\rule{0pt}{1.8em}}}$}Page table setup:\\\usebox{\hosthandletrapstagetwoidmaplthreeM}\end{minipage}} & \cellcolor{IslaInitialState}{\usebox{\hosthandletrapstagetwoidmaplthreeO}}\\
  \cline{2-2}
  & Thread 0\\
  \cline{2-2}
  & \usebox{\hosthandletrapstagetwoidmaplthreeI}\\
  \cline{2-2}
  & thread0 el2 handler\\
  \cline{2-2}
  & \usebox{\hosthandletrapstagetwoidmaplthreeK}\\
  \cline{2-2}
  & Final state: \lstinline[language=IslaLitmusExp]|0:R10=2|\\
  \hline
\end{tabular}
}\
  \caption{\PKVMTEST{pKVM.host\_handle\_trap.stage2\_idmap.l3}: code listing}\label{fig:hosthandletrapstagetwoidmaplthree_code}
\end{figure}

\begin{figure}
  \centering

  \includegraphics[trim=10mm 0 0 0, clip,width=140mm,height=120mm]{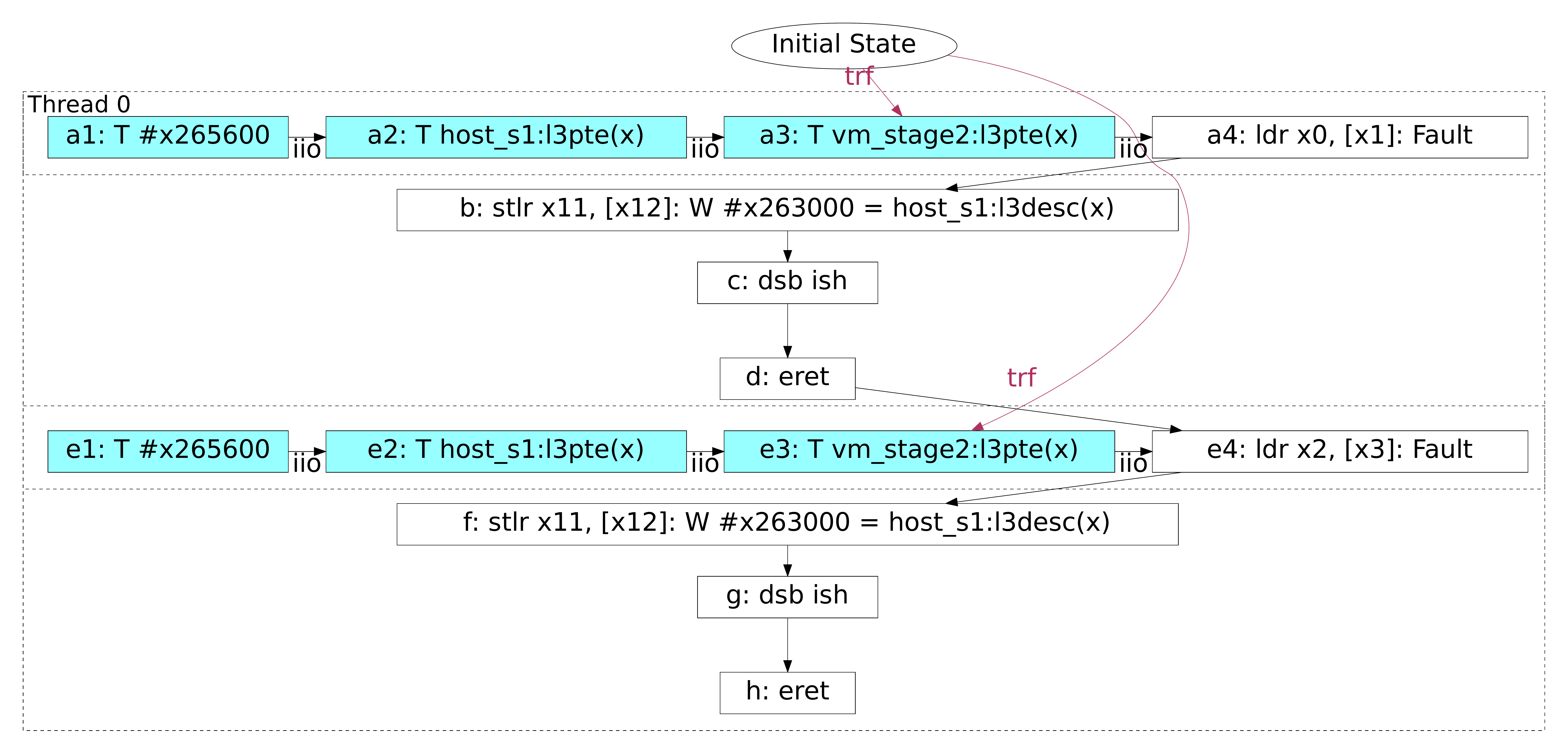}
  \caption{\PKVMTEST{pKVM.host\_handle\_trap.stage2\_idmap.l3}: forbidden candidate execution}\label{fig:hosthandletrapstagetwoidmaplthree_exec}
\end{figure}

\clearpage
\testpara{pKVM.host\_handle\_trap.stage2\_idmap.already\_exists}\label{test:pKVM.host_handle_trap.stage2_idmap.already_exists}
If the vCPU tries to access a location that is mapped but that it lacks the necessary permissions for,
then there may be an existing entry already mapping that location.
Here, the process is much more delicate.
Because another vCPU in the same VM may be concurrently accessing the same physical locations
mapped by the shared Stage~2 table,
pKVM cannot assume its own internal locks are sufficient to prevent race conditions.
Therefore, this is one case where pKVM is \emph{required} to use the break-before-make sequence as described by
\S\ref{subsubsec:questions:tlb:bbm}.

In the following code (Figures~\ref{fig:hosthandletrapstagetwoidmaplthreealreadyexists_code} and \ref{fig:hosthandletrapstagetwoidmaplthreealreadyexists_exec}), the initial state starts out with \cc{x} mapped at level 3, but without read permissions.
The first load will fault and pKVM will naturally map a writeable mapping on-demand,
following the break-before-make sequence invalidating the entry by writing zero (event \herd{b})
before writing a new valid descriptor (event \herd{m}).
After returning to the guest, we ask whether loads of \cc{x} are allowed to fault at Stage~2.

\begin{figure}
    \noindent\hspace*{-19mm}
    \resizebox*{!}{0.95\textheight}{\newlength{\hosthandletrapstagetwoidmaplthreealreadyexistsH}
\setlength{\hosthandletrapstagetwoidmaplthreealreadyexistsH}{0cm}
\newsavebox{\hosthandletrapstagetwoidmaplthreealreadyexistsI}
\begin{lrbox}{\hosthandletrapstagetwoidmaplthreealreadyexistsI}
\begin{lstlisting}[language=AArch64,showlines=true]
STR X0,[X1]
LDR X2,[X3]
\end{lstlisting}
\end{lrbox}
\newlength{\hosthandletrapstagetwoidmaplthreealreadyexistsJ}
\settowidth{\hosthandletrapstagetwoidmaplthreealreadyexistsJ}{\usebox{\hosthandletrapstagetwoidmaplthreealreadyexistsI}}
\addtolength{\hosthandletrapstagetwoidmaplthreealreadyexistsH}{\hosthandletrapstagetwoidmaplthreealreadyexistsJ}
\newsavebox{\hosthandletrapstagetwoidmaplthreealreadyexistsK}
\begin{lrbox}{\hosthandletrapstagetwoidmaplthreealreadyexistsK}
\begin{lstlisting}[language=AArch64,showlines=true]
0x1400:
// count number of exceptions
add x10,x10,#1
// remember which IPA failed
mrs x9,hpfar_el2
// really pKVM would do a read of the pagetable to tell if it needs to update
// we elide that code from the test and just skip on the second fault
cmp x10,#2
b.eq 1f
// pkvm code
// pgtable.c:573
STR X12,[X13]
// tlb.c:63
DSB ISH
// tlb.c:66
MSR VTTBR_EL2,X14
// tlb.c:66
ISB
// tlb.c:74
TLBI IPAS2E1IS,X15
// tlb.c:82
DSB ISH
// tlb.c:83
TLBI VMALLE1IS
// tlb.c:84
DSB ISH
// tlb.c:85
ISB
// tlb.c:109
MSR VTTBR_EL2,X16
// tlb.c:109
ISB
STLR X17,[x18]
DSB ISHST
// return to next instruction
// pKVM doesn't really do this, it just tries the same instr again
// but without this the test can loop forever ...
1:
mrs x20,elr_el2
add x20,x20,#4
msr elr_el2,x20
// return from handle_trap
eret
\end{lstlisting}
\end{lrbox}
\newlength{\hosthandletrapstagetwoidmaplthreealreadyexistsL}
\settowidth{\hosthandletrapstagetwoidmaplthreealreadyexistsL}{\usebox{\hosthandletrapstagetwoidmaplthreealreadyexistsK}}
\addtolength{\hosthandletrapstagetwoidmaplthreealreadyexistsH}{\hosthandletrapstagetwoidmaplthreealreadyexistsL}
\newsavebox{\hosthandletrapstagetwoidmaplthreealreadyexistsM}
\begin{lrbox}{\hosthandletrapstagetwoidmaplthreealreadyexistsM}
\begin{lstlisting}[language=IslaPageTableSetup,showlines=true]
option default_tables = false;
physical pa1 pa2;
intermediate ipa1;

s2table vm_stage2 0x260000  {
    ipa1 |-> pa1 with [AP=0b00] and default;
    ipa1 ?-> invalid;
    ipa1 ?-> pa2;

    s1table host_s1 0x2C0000 {
        x |-> ipa1;
    }
}

s1table hyp_map 0x200000  {
    x |-> invalid;
    s2table vm_stage2;
    identity 0x1000 with code;
}

*pa1 = 1;
*pa2 = 2;
\end{lstlisting}
\end{lrbox}
\newlength{\hosthandletrapstagetwoidmaplthreealreadyexistsN}
\settowidth{\hosthandletrapstagetwoidmaplthreealreadyexistsN}{\usebox{\hosthandletrapstagetwoidmaplthreealreadyexistsM}}
\newsavebox{\hosthandletrapstagetwoidmaplthreealreadyexistsO}
\begin{lrbox}{\hosthandletrapstagetwoidmaplthreealreadyexistsO}
\begin{minipage}{\hosthandletrapstagetwoidmaplthreealreadyexistsH}
\vphantom{$\vcenter{\hbox{\rule{0pt}{1.8em}}}$}Initial state:\\
\lstinline[language=IslaLitmusExp]|R1=x|\\
\lstinline[language=IslaLitmusExp]|R16=ttbr(vmid=0x001,base=vm_stage2)|\\
\lstinline[language=IslaLitmusExp]|R17=mkdesc3(oa=pa2)|\\
\lstinline[language=IslaLitmusExp]|R0=0b0|\\
\lstinline[language=IslaLitmusExp]|PSTATE.EL=0b01|\\
\lstinline[language=IslaLitmusExp]|R18=pte3(ipa1,vm_stage2)|\\
\lstinline[language=IslaLitmusExp]|VTTBR_EL2=ttbr(vmid=0x001,base=vm_stage2)|\\
\lstinline[language=IslaLitmusExp]|R10=0b0|\\
\lstinline[language=IslaLitmusExp]|R15=page(ipa1)|\\
\lstinline[language=IslaLitmusExp]|R13=pte3(ipa1,vm_stage2)|\\
\lstinline[language=IslaLitmusExp]|R14=ttbr(vmid=0x001,base=0b0)|\\
\lstinline[language=IslaLitmusExp]|R12=0b0|\\
\lstinline[language=IslaLitmusExp]|R3=x|\\
\lstinline[language=IslaLitmusExp]|TTBR0_EL1=ttbr(asid=0x000,base=host_s1)|\\
\lstinline[language=IslaLitmusExp]|VBAR_EL2=0x1000|\\
\lstinline[language=IslaLitmusExp]|TTBR0_EL2=ttbr(asid=0x000,base=hyp_map)|\\
\end{minipage}
\end{lrbox}
\begin{tabular}{|l|l|}
  \multicolumn{2}{l}{\textbf{AArch64} \lstinline[language=IslaLitmusName]|pKVM.host_handle_trap.stage2_idmap.l3.already_exists|}\\
  \hline
  \multirow{6}{*}{\begin{minipage}{\hosthandletrapstagetwoidmaplthreealreadyexistsN}\vphantom{$\vcenter{\hbox{\rule{0pt}{1.8em}}}$}Page table setup:\\\usebox{\hosthandletrapstagetwoidmaplthreealreadyexistsM}\end{minipage}} & \cellcolor{IslaInitialState}{\usebox{\hosthandletrapstagetwoidmaplthreealreadyexistsO}}\\
  \cline{2-2}
  & Thread 0\\
  \cline{2-2}
  & \usebox{\hosthandletrapstagetwoidmaplthreealreadyexistsI}\\
  \cline{2-2}
  & thread0 el2 handler\\
  \cline{2-2}
  & \usebox{\hosthandletrapstagetwoidmaplthreealreadyexistsK}\\
  \cline{2-2}
  & Final state: \lstinline[language=IslaLitmusExp]|0:R10=2 | 0:R2=1|\\
  \hline
\end{tabular}
}\
    \caption{\PKVMTEST{pKVM.host\_handle\_trap.stage2\_idmap.l3.already\_exists}: code listing}\label{fig:hosthandletrapstagetwoidmaplthreealreadyexists_code}
  \end{figure}

  \begin{figure}
    \centering
    \includegraphics[trim=10mm 0 0 0, clip,width=140mm,height=120mm]{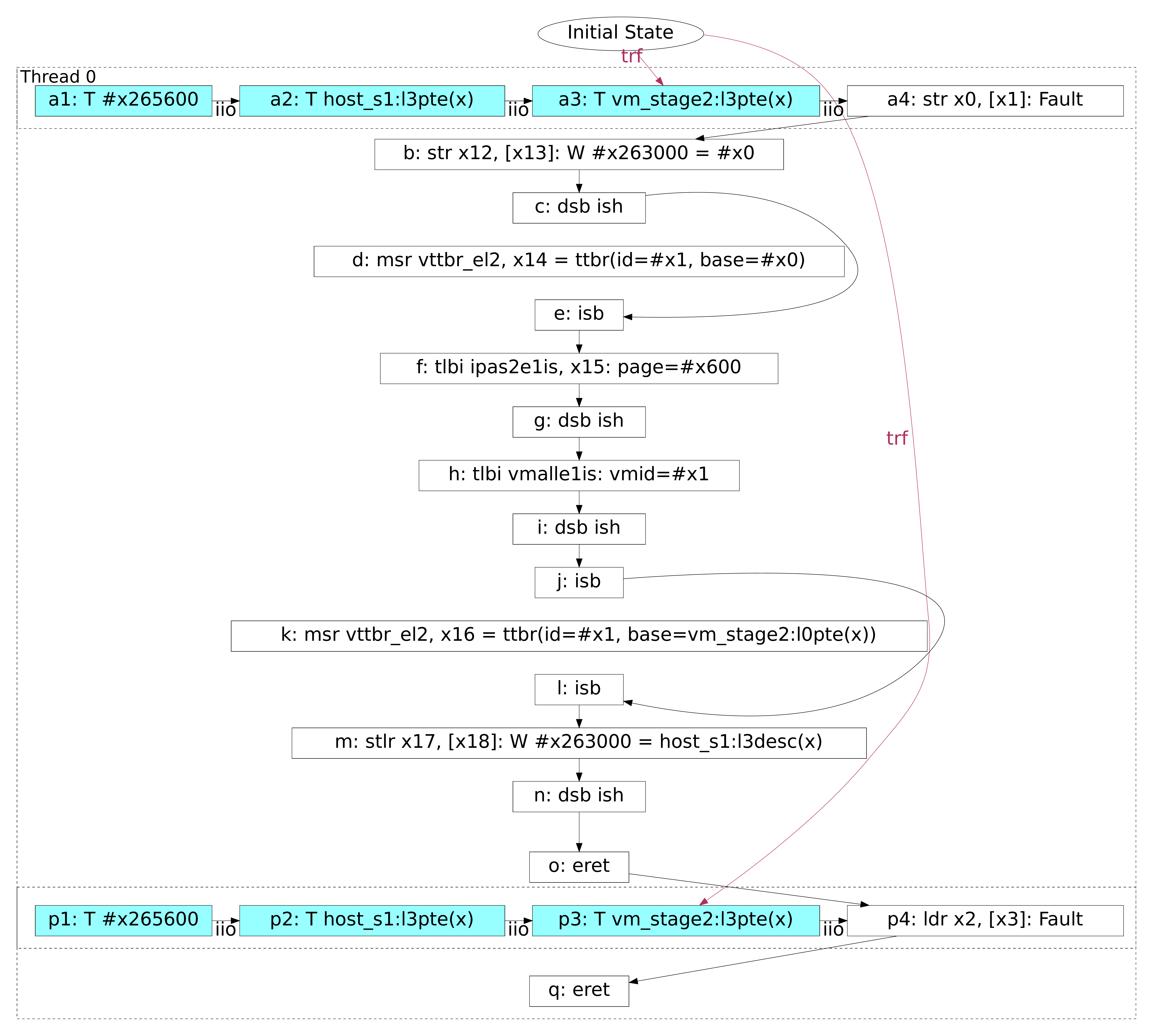}
    \caption{\PKVMTEST{pKVM.host\_handle\_trap.stage2\_idmap.l3.already\_exists}: forbidden candidate execution}\label{fig:hosthandletrapstagetwoidmaplthreealreadyexists_exec}
  \end{figure}

\clearpage

\testpara{pKVM.host\_handle\_trap.stage2\_idmap.change\_block\_size}
The complexity of the previous scenario is compounded by the fact that pKVM may wish to map a larger region
(higher in the table)
than is currently mapped, and, without \asm{FEAT\_BBM}, this adds extra break-before-make requirements.

In general, pKVM will itself perform a translation table walk.
On the way down, it will look for the entry to be replaced,
invalidate it, and perform TLB maintenance,
ensuring that all entries from old leaf entries below it are cleaned away,
but additionally that any old stale Stage~1 translations are invalidated,
before it replaces the entry.

Figures~\ref{fig:hosthandletrapstagetwoidmapchangeblocksize_code} and \ref{fig:hosthandletrapstagetwoidmapchangeblocksize_exec} contain the code listing and the interesting candidate execution diagram
(as generated by Isla).
We consider a scenario where there are two locations,
\cc{x} and \cc{y}
which are in the same 2~MiB region of memory
but are mapped by different Level~3 entries.
If the Level~3 table contains all valid entries except for
one entry for \cc{x},
then on updating the entry for \cc{x} to be valid,
if the host kernel maps the whole 2M region,
then pKVM will invalidate the 2M entry before
writing a new 2M block entry.

This in effect `promotes' the set of 4K mappings into a single 2M mapping.
pKVM then frees the child table to be re-allocated later.
pKVM may want to \emph{remove} a mapping for an IPA.
This is very similar to the previously described break-before-make scenario,
but without the final make.
pKVM just has to ensure that the old IPA mapping is invalidated,
and the necessary TLB maintenance is performed.
We do not include this test at present.

\begin{figure}
  \noindent\hspace*{-19mm}
  \resizebox*{!}{0.95\textheight}{\newlength{\hosthandletrapstagetwoidmapchangeblocksizeH}
\setlength{\hosthandletrapstagetwoidmapchangeblocksizeH}{0cm}
\newsavebox{\hosthandletrapstagetwoidmapchangeblocksizeI}
\begin{lrbox}{\hosthandletrapstagetwoidmapchangeblocksizeI}
\begin{lstlisting}[language=AArch64,showlines=true]
// in guest
MOV X0,#0
LDR X0,[X1]
MOV X2,#0
LDR X2,[X3]
\end{lstlisting}
\end{lrbox}
\newlength{\hosthandletrapstagetwoidmapchangeblocksizeJ}
\settowidth{\hosthandletrapstagetwoidmapchangeblocksizeJ}{\usebox{\hosthandletrapstagetwoidmapchangeblocksizeI}}
\addtolength{\hosthandletrapstagetwoidmapchangeblocksizeH}{\hosthandletrapstagetwoidmapchangeblocksizeJ}
\newsavebox{\hosthandletrapstagetwoidmapchangeblocksizeK}
\begin{lrbox}{\hosthandletrapstagetwoidmapchangeblocksizeK}
\begin{lstlisting}[language=AArch64,showlines=true]
0x1400:
// count number of exceptions
add x10,x10,#1
// remember which IPA failed
mrs x9,hpfar_el2
// on second fault just exit test
cmp x10,#2
b.eq 1f
// then pkvm code
0:
// pgtable.c:181
STR X12,[X13]
// tlb.c:116
DSB ISH
// tlb.c:119
MSR VTTBR_EL2,X14
// tlb.c:119
ISB
// tlb.c:121
TLBI vmalls12e1is
// tlb.c:122
DSB ISH
// tlb.c:123
ISB
// tlb.c:125
MSR VTTBR_EL2,X16
// tlb.c:125
ISB
STLR X17,[x18]
DSB ISHST
1:
// return to next instruction
// pKVM doesn't really do this, it just tries the same instr again
// but without this the test can loop forever ...
mrs x20,elr_el2
add x20,x20,#4
msr elr_el2,x20
// return from handle_trap
eret
\end{lstlisting}
\end{lrbox}
\newlength{\hosthandletrapstagetwoidmapchangeblocksizeL}
\settowidth{\hosthandletrapstagetwoidmapchangeblocksizeL}{\usebox{\hosthandletrapstagetwoidmapchangeblocksizeK}}
\addtolength{\hosthandletrapstagetwoidmapchangeblocksizeH}{\hosthandletrapstagetwoidmapchangeblocksizeL}
\newsavebox{\hosthandletrapstagetwoidmapchangeblocksizeM}
\begin{lrbox}{\hosthandletrapstagetwoidmapchangeblocksizeM}
\begin{lstlisting}[language=IslaPageTableSetup,showlines=true]
option default_tables = false;
physical pa1 pa2;
intermediate ipa1 ipa2;

s2table vm_stage2 0x260000  {
    ipa1 |-> invalid at level 3;
    ipa2 |-> pa2 at level 3;
    ipa1 ?-> pa1 at level 3;

    ipa1 ?-> invalid at level 2;
    ipa2 ?-> invalid at level 2;
    ipa2 ?-> pa2 at level 2;

    s1table host_s1 0x2C0000 {
        x |-> ipa1;
        y |-> ipa2;
    }
};

s1table hyp_map 0x200000  {
    x |-> invalid;
    s2table vm_stage2;
    identity 0x1000 with code;
}

*pa1 = 1;
*pa2 = 2;
\end{lstlisting}
\end{lrbox}
\newlength{\hosthandletrapstagetwoidmapchangeblocksizeN}
\settowidth{\hosthandletrapstagetwoidmapchangeblocksizeN}{\usebox{\hosthandletrapstagetwoidmapchangeblocksizeM}}
\newsavebox{\hosthandletrapstagetwoidmapchangeblocksizeO}
\begin{lrbox}{\hosthandletrapstagetwoidmapchangeblocksizeO}
\begin{minipage}{\hosthandletrapstagetwoidmapchangeblocksizeH}
\vphantom{$\vcenter{\hbox{\rule{0pt}{1.8em}}}$}Initial state:\\
\lstinline[language=IslaLitmusExp]|R15=ipa1|\\
\lstinline[language=IslaLitmusExp]|R13=pte2(ipa1,vm_stage2)|\\
\lstinline[language=IslaLitmusExp]|R17=mkdesc2(oa=pa2)|\\
\lstinline[language=IslaLitmusExp]|VTTBR_EL2=ttbr(vmid=0x0001,base=vm_stage2)|\\
\lstinline[language=IslaLitmusExp]|R12=0b0|\\
\lstinline[language=IslaLitmusExp]|R16=ttbr(vmid=0x001,base=vm_stage2)|\\
\lstinline[language=IslaLitmusExp]|R3=y|\\
\lstinline[language=IslaLitmusExp]|TTBR0_EL2=ttbr(base=hyp_map,asid=0x0000)|\\
\lstinline[language=IslaLitmusExp]|TTBR0_EL1=ttbr(base=host_s1,asid=0x0000)|\\
\lstinline[language=IslaLitmusExp]|R18=pte2(x,vm_stage2)|\\
\lstinline[language=IslaLitmusExp]|R14=ttbr(base=0b0,vmid=0x001)|\\
\lstinline[language=IslaLitmusExp]|R1=x|\\
\lstinline[language=IslaLitmusExp]|R10=0b0|\\
\lstinline[language=IslaLitmusExp]|VBAR_EL2=0x1000|\\
\lstinline[language=IslaLitmusExp]|PSTATE.EL=0b01|\\
\end{minipage}
\end{lrbox}
\begin{tabular}{|l|l|}
  \multicolumn{2}{l}{\textbf{AArch64} \lstinline[language=IslaLitmusName]|pKVM.host_handle_trap.stage2_idmap.change_block_size|}\\
  \hline
  \multirow{6}{*}{\begin{minipage}{\hosthandletrapstagetwoidmapchangeblocksizeN}\vphantom{$\vcenter{\hbox{\rule{0pt}{1.8em}}}$}Page table setup:\\\usebox{\hosthandletrapstagetwoidmapchangeblocksizeM}\end{minipage}} & \cellcolor{IslaInitialState}{\usebox{\hosthandletrapstagetwoidmapchangeblocksizeO}}\\
  \cline{2-2}
  & Thread 0\\
  \cline{2-2}
  & \usebox{\hosthandletrapstagetwoidmapchangeblocksizeI}\\
  \cline{2-2}
  & thread0 el2 handler\\
  \cline{2-2}
  & \usebox{\hosthandletrapstagetwoidmapchangeblocksizeK}\\
  \cline{2-2}
  & Final state: \lstinline[language=IslaLitmusExp]|0:R10=2 | 0:R2=1|\\
  \hline
\end{tabular}
}\
  \caption{\PKVMTEST{pKVM.host\_handle\_trap.stage2\_idmap.change\_block\_size}: code listing}\label{fig:hosthandletrapstagetwoidmapchangeblocksize_code}
\end{figure}

\begin{figure}
  \centering
  \includegraphics[trim=10mm 0 0 0, clip,width=140mm,height=120mm]{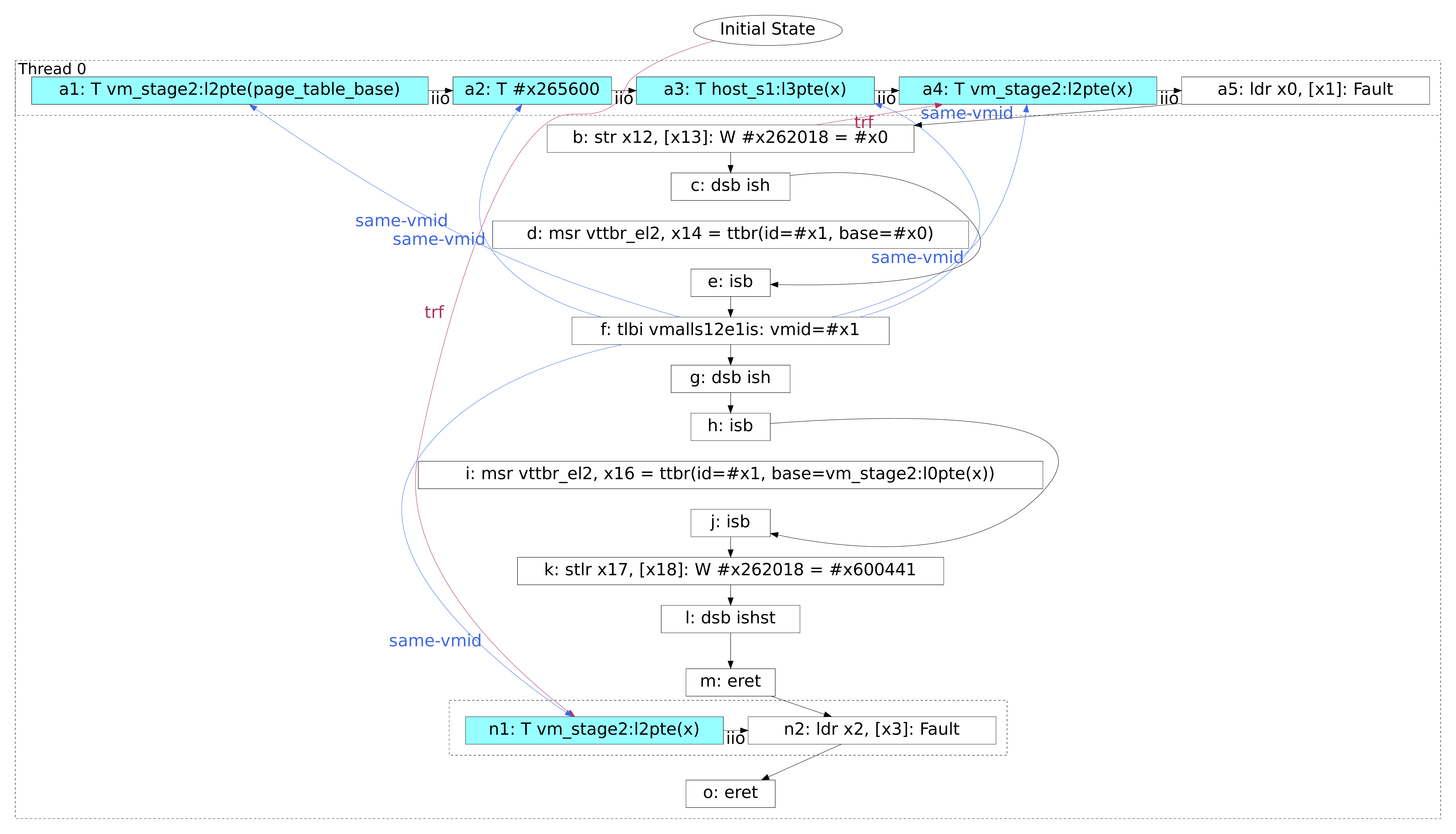}
  \caption{\PKVMTEST{pKVM.host\_handle\_trap.stage2\_idmap.change\_block\_size}: forbidden candidate execution}\label{fig:hosthandletrapstagetwoidmapchangeblocksize_exec}
\end{figure}

\clearpage

\subsection{Initialisation}

During execution,
and especially at initial start-up of pKVM,
it will be required to create its own Stage~1 translation tables.

Currently, this only happens on a single core,
so there are no interesting concurrent cases.

In the future, however, it is expected that pKVM may need to
dynamically map some parts of memory, as the hypervisor gains richer features.
For example, it may need to map some shared page between guests, or between pKVM and a guest, in which to pass messages.

\clearpage

\testpara{pKVM.switch\_to\_new\_table}

When pKVM first starts, it is using translation tables set up by Linux,
so one of the first things pKVM does is to create its own tables and switch to them.

Usually, switching from using one translation table to another happens at a higher exception level,
and then the new one is not used until the return to the lower exception level.
The current case
 is more complicated, as pKVM has to change its own translation tables
\emph{while it is executing}.
If pKVM only had to change the \asm{TTBR}, then this would not be a problem,
but there are many system registers involved in configuring the translation tables
(the \asm{TCR}, \asm{MAIR}, and so on), and these registers cannot all be updated `atomically'.

To maintain the atomicity, pKVM switches the \asm{TTBR}
to a page of memory where the code it is executing is identity-mapped,
then it disables translation (disabling the MMU),
before updating all of the required system registers (including the new \asm{TTBR}),
before re-enabling the MMU.

Figure~\ref{fig:switchtonewtable_code} contains the code for a litmus test
that tries to capture the core of this process.
This \PKVMTEST{pKVM.switch\_to\_new\_table} test is an EL2 Stage~1 test
with two tables, \cc{hyp\_pgtable} and \cc{new\_hyp\_pgtable}, in memory.
The code disables the MMU (by writing to the appropriate field of the SCTLR),
invalidates all of the old cached TLB entries for EL2,
updates all the system registers (only the \asm{TTBR} is included here)
and then re-enables the MMU.
We check that the final load in the hypervisor after the switch has happened
reads using the new state.

This test has been cut down for brevity to remove the writes of other system registers,
which would be present in a full execution of pKVM;
ideally, we would have a test that included those too,
and a final state which ensured that the load used a translation using
all of the new register values.

Note that the model presented in Section~\ref{sec:models}
does not currently contain axioms for when the MMU is disabled,
but the semantics seem clear, %
and we do not see any impediment in extending the model to handle it fully.

\begin{figure}[h]
  \centering
  \scalebox{0.8}{\newlength{\switchtonewtableH}
\setlength{\switchtonewtableH}{0cm}
\newsavebox{\switchtonewtableI}
\begin{lrbox}{\switchtonewtableI}
\begin{lstlisting}[language=AArch64,showlines=true]
// hyp-init.S:247
mrs x2, sctlr_el2
bic x3, x2, #1
msr sctlr_el2, x3
isb
tlbi alle2
msr ttbr0_el2, x4
msr sctlr_el2, x2
isb
ldr x5,[x6]
\end{lstlisting}
\end{lrbox}
\newlength{\switchtonewtableJ}
\settowidth{\switchtonewtableJ}{\usebox{\switchtonewtableI}}
\addtolength{\switchtonewtableH}{\switchtonewtableJ}
\newsavebox{\switchtonewtableK}
\begin{lrbox}{\switchtonewtableK}
\begin{lstlisting}[language=AArch64,showlines=true]
0x2200:
mov x5, #0
\end{lstlisting}
\end{lrbox}
\newlength{\switchtonewtableL}
\settowidth{\switchtonewtableL}{\usebox{\switchtonewtableK}}
\addtolength{\switchtonewtableH}{\switchtonewtableL}
\newsavebox{\switchtonewtableM}
\begin{lrbox}{\switchtonewtableM}
\begin{lstlisting}[language=IslaPageTableSetup,showlines=true]
option default_tables = false;
physical pa1;

s1table hyp_pgtable 0x200000  {
    x |-> invalid at level 3;
    x ?-> pa1;
    identity 0x2000 with code;
}

s1table new_hyp_pgtable 0x240000  {
    x |-> pa1 at level 3;
    x ?-> invalid;
    identity 0x2000 with code;
}

*pa1 = 1;
\end{lstlisting}
\end{lrbox}
\newlength{\switchtonewtableN}
\settowidth{\switchtonewtableN}{\usebox{\switchtonewtableM}}
\newsavebox{\switchtonewtableO}
\begin{lrbox}{\switchtonewtableO}
\begin{minipage}{1.4\switchtonewtableH}
\vphantom{$\vcenter{\hbox{\rule{0pt}{1.8em}}}$}Initial state:\\
\lstinline[language=IslaLitmusExp]|R6=x|\\
\lstinline[language=IslaLitmusExp]|R4=ttbr(asid=0x0000,base=new_hyp_pgtable)|\\
\lstinline[language=IslaLitmusExp]|VBAR_EL2=0x2000|\\
\lstinline[language=IslaLitmusExp]|PSTATE.SP=0b1|\\
\lstinline[language=IslaLitmusExp]|PSTATE.EL=0b10|\\
\lstinline[language=IslaLitmusExp]|TTBR0_EL2=ttbr(base=hyp_pgtable,asid=0x0000)|\\
\end{minipage}
\end{lrbox}
\begin{tabular}{|l|l|}
  \multicolumn{2}{l}{\textbf{AArch64} \lstinline[language=IslaLitmusName]|pKVM.switch_to_new_table|}\\
  \hline
  \multirow{6}{*}{\begin{minipage}{\switchtonewtableN}\vphantom{$\vcenter{\hbox{\rule{0pt}{1.8em}}}$}Page table setup:\\\usebox{\switchtonewtableM}\end{minipage}} & \cellcolor{IslaInitialState}{\usebox{\switchtonewtableO}}\\
  \cline{2-2}
  & Thread 0\\
  \cline{2-2}
  & \usebox{\switchtonewtableI}\\
  \cline{2-2}
  & thread0 el2 handler\\
  \cline{2-2}
  & \usebox{\switchtonewtableK}\\
  \cline{2-2}
  & Final state: \lstinline[language=IslaLitmusExp]|0:R2=0|\\
  \hline
\end{tabular}
}
\caption{pKVM.switch\_to\_new\_table}\label{fig:switchtonewtable_code}
\end{figure}

\clearpage

\testpara{pKVM.create\_hyp\_mappings.inv.l2}

Constructing new translation tables is done incrementally,
starting from a single zero'd page of memory as the root table,
and then performing a manual translation table walk on insertion to locate an entry to insert into.

In the case where there is no Level~3 table to install into,
pKVM first creates a Level~2 table and installs that,
before writing the new valid Level~3 entry.
To avoid break-before-make violations here,
pKVM always ensures the table is zeroed before
inserting it into the table.

The \PKVMTEST{pKVM.create\_hyp\_mappings.inv.l2} test,
in Figures~\ref{fig:createhypmappingsinvltwo_code} and \ref{fig:createhypmappingsinvltwo_exec},
gives the case where pKVM is trying to create a new mapping for itself
for a 4K page block.
This block mapping \emph{must} be installed as a Level~3 entry,
as installing it any higher would end up mapping more than 4K of memory.
Initially, \herd{x} is translated using an invalid Level~2 entry in \cc{hyp\_pgtable}.
The table \cc{hyp\_pgtable\_new} contains the new Level~3 table which starts life zeroed (all invalid);
The test then sets the Level~2 entry in \cc{hyp\_pgtable} to point to the new table,
and then updates the Level~3 (leaf) entry with a valid descriptor.

Interestingly, we note that pKVM sets these entries with a store-release;
although there seems to be no relaxed-virtual-memory reason why.
Given that pKVM is well locked, it is not clear why making these writes store-releases helps.

\begin{figure}[h]
  \scalebox{0.8}{\newlength{\createhypmappingsinvltwoH}
\setlength{\createhypmappingsinvltwoH}{0cm}
\newsavebox{\createhypmappingsinvltwoI}
\begin{lrbox}{\createhypmappingsinvltwoI}
\begin{lstlisting}[language=AArch64,showlines=true]
STLR X0,[X1]
STLR X2,[X3]
DSB SY
ISB
L0:
LDR X4,[X5]
\end{lstlisting}
\end{lrbox}
\newlength{\createhypmappingsinvltwoJ}
\settowidth{\createhypmappingsinvltwoJ}{\usebox{\createhypmappingsinvltwoI}}
\addtolength{\createhypmappingsinvltwoH}{\createhypmappingsinvltwoJ}
\newsavebox{\createhypmappingsinvltwoK}
\begin{lrbox}{\createhypmappingsinvltwoK}
\begin{lstlisting}[language=AArch64,showlines=true]
0x1200:
mov x2, #0
mrs x20,ELR_EL2
add x20,x20,#4
msr ELR_EL2,x20
eret
\end{lstlisting}
\end{lrbox}
\newlength{\createhypmappingsinvltwoL}
\settowidth{\createhypmappingsinvltwoL}{\usebox{\createhypmappingsinvltwoK}}
\addtolength{\createhypmappingsinvltwoH}{\createhypmappingsinvltwoL}
\newsavebox{\createhypmappingsinvltwoM}
\begin{lrbox}{\createhypmappingsinvltwoM}
\begin{lstlisting}[language=IslaPageTableSetup,showlines=true]
option default_tables = false;
physical pa1;

s1table hyp_pgtable_new 0x280000 {
    x |-> invalid at level 3;
    x ?-> pa1 at level 3;
}

s1table hyp_pgtable 0x200000  {
    x |-> invalid at level 2;
    x ?-> table(0x283000) at level 2;
    identity 0x1000 with code;
    s1table hyp_pgtable_new;
}

*pa1 = 1;
\end{lstlisting}
\end{lrbox}
\newlength{\createhypmappingsinvltwoN}
\settowidth{\createhypmappingsinvltwoN}{\usebox{\createhypmappingsinvltwoM}}
\newsavebox{\createhypmappingsinvltwoO}
\begin{lrbox}{\createhypmappingsinvltwoO}
\begin{minipage}{1.4\createhypmappingsinvltwoH}
\vphantom{$\vcenter{\hbox{\rule{0pt}{1.8em}}}$}Initial state:\\
\lstinline[language=IslaLitmusExp]|R0=mkdesc2(table=0x283000)|\\
\lstinline[language=IslaLitmusExp]|R2=mkdesc3(oa=pa1)|\\
\lstinline[language=IslaLitmusExp]|PSTATE.SP=0b1|\\
\lstinline[language=IslaLitmusExp]|R1=pte2(x,hyp_pgtable)|\\
\lstinline[language=IslaLitmusExp]|R5=x|\\
\lstinline[language=IslaLitmusExp]|PSTATE.EL=0b10|\\
\lstinline[language=IslaLitmusExp]|TTBR0_EL2=ttbr(asid=0x0000,base=hyp_pgtable)|\\
\lstinline[language=IslaLitmusExp]|VBAR_EL2=0x1000|\\
\lstinline[language=IslaLitmusExp]|R3=bvor(0x283000,offset(level=3,va=x))|\\
\end{minipage}
\end{lrbox}
\begin{tabular}{|l|l|}
  \multicolumn{2}{l}{\textbf{AArch64} \lstinline[language=IslaLitmusName]|pKVM.create_hyp_mappings.inv.l2|}\\
  \hline
  \multirow{6}{*}{\begin{minipage}{\createhypmappingsinvltwoN}\vphantom{$\vcenter{\hbox{\rule{0pt}{1.8em}}}$}Page table setup:\\\usebox{\createhypmappingsinvltwoM}\end{minipage}} & \cellcolor{IslaInitialState}{\usebox{\createhypmappingsinvltwoO}}\\
  \cline{2-2}
  & Thread 0\\
  \cline{2-2}
  & \usebox{\createhypmappingsinvltwoI}\\
  \cline{2-2}
  & thread0 el2 handler\\
  \cline{2-2}
  & \usebox{\createhypmappingsinvltwoK}\\
  \cline{2-2}
  & Final state: \lstinline[language=IslaLitmusExp]|0:R2=0|\\
  \hline
\end{tabular}
}%
\caption{\PKVMTEST{pKVM.create\_hyp\_mappings.inv.l2}: Code listing }\label{fig:createhypmappingsinvltwo_code}
\end{figure}

\begin{figure}[h]
  \includegraphics[trim=10mm 0 0 0, clip,width=120mm]{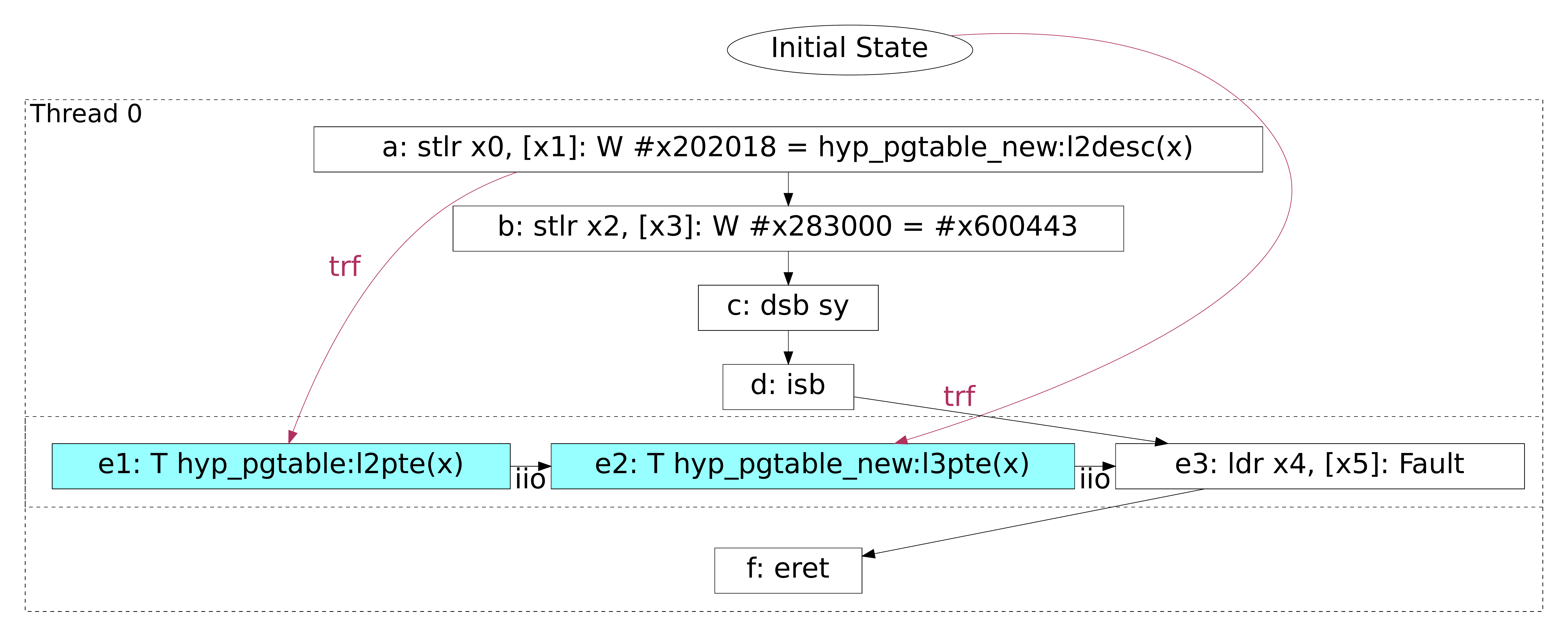}
\caption{\PKVMTEST{pKVM.create\_hyp\_mappings.inv.l2}: (forbidden) candidate execution}\label{fig:createhypmappingsinvltwo_exec}
\end{figure}

\clearpage

\testpara{pKVM.create\_hyp\_mappings.inv.l3}

pKVM can set entries at any level of the table,
assuming they are initially invalid.
The following test is a variation on the previous,
 where the Level~3 table is already created, but contains an invalid entry at Level~3.

As before, we will map a 4K region, and so it must go at Level~3.

\begin{figure}[h]
  \centering
  \scalebox{0.8}{\newlength{\createhypmappingsinvlthreeH}
\setlength{\createhypmappingsinvlthreeH}{0cm}
\newsavebox{\createhypmappingsinvlthreeI}
\begin{lrbox}{\createhypmappingsinvlthreeI}
\begin{lstlisting}[language=AArch64,showlines=true]
STLR X0,[X1]
DSB SY
ISB
L0:
LDR X2,[X3]
\end{lstlisting}
\end{lrbox}
\newlength{\createhypmappingsinvlthreeJ}
\settowidth{\createhypmappingsinvlthreeJ}{\usebox{\createhypmappingsinvlthreeI}}
\addtolength{\createhypmappingsinvlthreeH}{\createhypmappingsinvlthreeJ}
\newsavebox{\createhypmappingsinvlthreeK}
\begin{lrbox}{\createhypmappingsinvlthreeK}
\begin{lstlisting}[language=AArch64,showlines=true]
0x1000:
mov x2, #0
\end{lstlisting}
\end{lrbox}
\newlength{\createhypmappingsinvlthreeL}
\settowidth{\createhypmappingsinvlthreeL}{\usebox{\createhypmappingsinvlthreeK}}
\addtolength{\createhypmappingsinvlthreeH}{\createhypmappingsinvlthreeL}
\newsavebox{\createhypmappingsinvlthreeM}
\begin{lrbox}{\createhypmappingsinvlthreeM}
\begin{lstlisting}[language=IslaPageTableSetup,showlines=true]
option default_tables = false;

physical pa1;

s1table hyp_pgtable 0x200000  {
    x |-> invalid at level 3;
    x ?-> pa1;
    identity 0x1000 with code;
}

*pa1 = 1;
\end{lstlisting}
\end{lrbox}
\newlength{\createhypmappingsinvlthreeN}
\settowidth{\createhypmappingsinvlthreeN}{\usebox{\createhypmappingsinvlthreeM}}
\newsavebox{\createhypmappingsinvlthreeO}
\begin{lrbox}{\createhypmappingsinvlthreeO}
\begin{minipage}{1.8\createhypmappingsinvlthreeH}
\vphantom{$\vcenter{\hbox{\rule{0pt}{1.8em}}}$}Initial state:\\
\lstinline[language=IslaLitmusExp]|R0=mkdesc3(oa=pa1)|\\
\lstinline[language=IslaLitmusExp]|TTBR0_EL2=ttbr(base=hyp_pgtable,asid=0x0000)|\\
\lstinline[language=IslaLitmusExp]|R3=x|\\
\lstinline[language=IslaLitmusExp]|VBAR_EL2=0x1000|\\
\lstinline[language=IslaLitmusExp]|PSTATE.EL=0b10|\\
\lstinline[language=IslaLitmusExp]|R1=pte3(x,hyp_pgtable)|\\
\end{minipage}
\end{lrbox}
\begin{tabular}{|l|l|}
  \multicolumn{2}{l}{\textbf{AArch64} \lstinline[language=IslaLitmusName]|pKVM.create_hyp_mappings.inv.l3|}\\
  \hline
  \multirow{6}{*}{\begin{minipage}{\createhypmappingsinvlthreeN}\vphantom{$\vcenter{\hbox{\rule{0pt}{1.8em}}}$}Page table setup:\\\usebox{\createhypmappingsinvlthreeM}\end{minipage}} & \cellcolor{IslaInitialState}{\usebox{\createhypmappingsinvlthreeO}}\\
  \cline{2-2}
  & Thread 0\\
  \cline{2-2}
  & \usebox{\createhypmappingsinvlthreeI}\\
  \cline{2-2}
  & thread0 el2 handler\\
  \cline{2-2}
  & \usebox{\createhypmappingsinvlthreeK}\\
  \cline{2-2}
  & Final state: \lstinline[language=IslaLitmusExp]|0:R2=0|\\
  \hline
\end{tabular}
}%
\caption{\PKVMTEST{pKVM.create\_hyp\_mappings.inv.l3}: Code listing }\label{fig:createhypmappingsinvlthree_code}
\end{figure}
\begin{figure}[h]
  \centering
  \includegraphics[trim=10mm 0 0 0, clip,width=50mm,height=40mm]{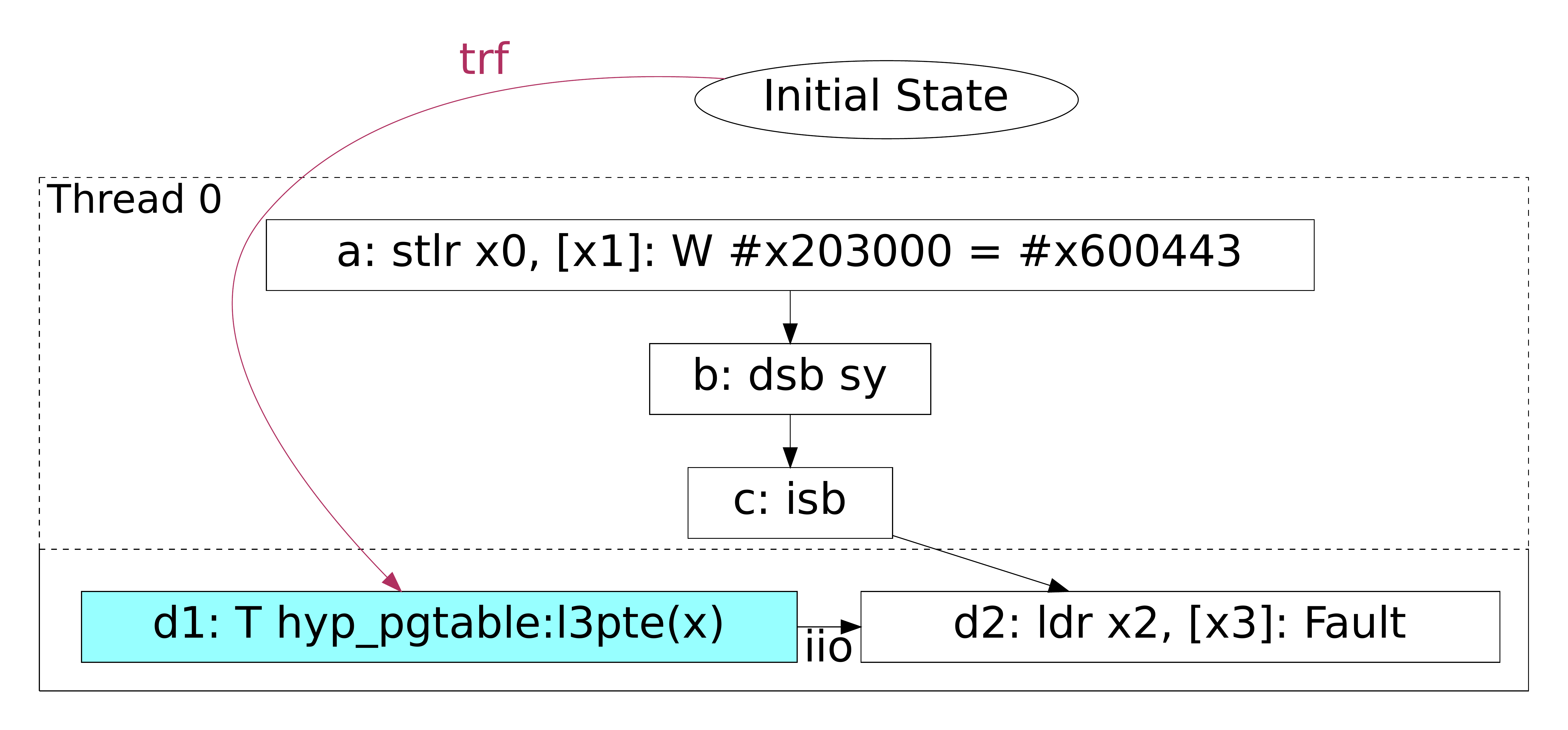}
\caption{\PKVMTEST{pKVM.create\_hyp\_mappings.inv.l3}:  (forbidden) candidate execution}\label{fig:createhypmappingsinvlthree_exec}
\end{figure}

\clearpage

\section{Model}\label{sec:models}

We now define a semantic model for Armv8-A relaxed virtual memory
that, to the best of our knowledge, captures the Arm architectural intent
for the scope laid out in \S\ref{intro} and discussed in \S\ref{questions},
including Stage 1 and Stage 2 translation-table walks and the required TLB maintenance. 
For some important questions, most notably for multi-copy atomicity, the Arm intent is currently tentative, so it is not possible to be more definitive.
To capture just the synchronization required for ``simple'' software such as pKVM to work correctly
we also give a \emph{weaker} model (in App.~\ref{app:models}):
instead of trying to exactly capture the architecture or the behaviour of hardware, 
it has individual axioms for each behaviour that such software needs to rely on.
This gives an over-approximation to the architecture,
which we prove sound with respect
to the model given in this section. %
The two models together delimit the design space
as we understand it. 

In \S\ref{questions} and \S\ref{sec:pkvm} we described the design issues in microarchitectural terms,
discussing the behaviour of TLB caching and translation-walk non-TLB reads,
along with the needs of system software. %
We now abstract from microarchitecture:
instead of explicitly modelling TLBs
we include a translation-read event for each read performed by the architected translation-table walker,
and make those reads read-from writes in the execution
(so there are no special `pagetable write' events).
We give the model in an axiomatic Herd-like~\cite{alglave:herd} style,
as an extension to
the base Armv8-A semantics~\cite{deacon-cat,armv8-mca,G.a}.  
In principle it would be desirable to also have equivalent
abstract-microarchitectural operational models,
as for base Armv8-A~\cite{armv8-mca,pultethesis} but 
with explicit TLBs for each thread
and events for reading from and into the TLB.
However,
address translation introduces many more events to litmus-test executions,
which would make them harder to explore exhaustively,
and a proof of equivalence would be a major undertaking,
so we leave this to future work. 

The base Armv8-A axiomatic model is defined as a predicate over \emph{candidate executions},
each of which is a graph with various events (reads, writes, barriers) and relations over them,
notably the per-thread program order \herd{po},
the per-location coherence order \herd{co},
the reads-from relation \herd{rf} from writes to reads,
the \herd{addr}, \herd{data}, and \herd{ctrl}-dependency subsets of \herd{po}, and others.
These candidates may be arbitrarily inconsistent graphs,
possibly containing executions that can never happen.

The model is then a per-candidate consistency check consisting of two parts:
that the graph corresponds to \emph{some} execution consistent with the underlying ISA,
but with arbitrary memory reads and writes;
and a global consistency check over those reads and writes which enforces memory consistency properties
such as \emph{coherence}.

The base memory consistency model is essentially the conjunction of two acyclicity checks with an emptiness check for atomics:
an \herd{external} (inter-thread) acyclicity property,
effectively stating that the execution must respect some total order of events hitting the shared memory, constrained by the derived ordered-before (\herd{ob}) relation; and an \herd{internal} acyclicity property, enforcing per-location coherence; and an \herd{atomic} axiom for atomic and exclusive operations.
 
As usual in Herd-style models, relations are suffixed \herd{e} or \herd{i} to restrict to their inter-thread or intra-thread parts.
The Herd concrete syntax for relational algebra uses \herd{[X]} for the
identity on a set \herd{X}, \herd{;} for composition, \herd{\char`~} for complement, \herd{|} and \herd{\&} for
union and intersection, and \herd{*} for product. 
To extend this base memory consistency model to the world with translations and TLBs we
add translation data to events, including virtual, intermediate physical, and physical addresses (as determined by the translation regime).

We add the following events and relations:
\begin{itemize}
  \item \herdevent{T} for reads originating from architected translation-table walks. \\
        These roughly correspond to the actual satisfaction from memory which with TLBs may happen very early.
  \item \herdevent{TLBI} events for each \asm{TLBI} instruction,
        with a single such event per \asm{TLBI} instruction,
        corresponding to the \asm{TLBI} being completed on all relevant cores.
  \item \herdevent{TE} and \herdevent{ERET} events for taking and returning from an exception 
        (these might not correspond to changes in exception level).
  \item \herdevent{MSR} events for writes to relevant system registers, such as the \asm{TTBR}.
  \item \herdevent{DSB} events for \asm{DSB} instructions.
  \item \herd{trf}, \herd{tfr} relations as analogues to \herd{rf} and \herd{fr} but for translation-read events (\herdevent{T}s).
  \item \herd{iio} relation (``intra-instruction order'') which relates events of the same instruction in the order they occur during execution
        of that instruction's intra-instruction semantics as defined by the Arm ASL.
  \item \herd{same-va}, \herd{same-ipa}, \herd{same-pa} relations which relate events whose virtual, intermediate physical or physical address
        of the associated explicit memory access are the same.
  \item \herd{same-va-page}, \herd{same-ipa-page}, \herd{same-pa-page} which relate events whose associated explicit memory events are in the same
        page (e.g. 4KiB chunk) of the virtual, intermediate physical or physical address space.
  \item \herd{same-asid}, \herd{same-vmid} relates events for which translations for the associated memory event are using the same ASID or VMID.
\end{itemize}

In addition we modify some existing events and relations:
\begin{itemize}
  \item \herdevent{R}, \herdevent{W} events are now to a physical location.
  \item \herd{loc} and \herd{co} both relate events which are to the same physical address.
  \item \herd{addr} which is derived from a new \herd{tdata} relation, which relates the event which provide the input address for a translation.
  \item We re-arrange the barrier events into a hierarchy which includes \asm{DSB}s, see Figure~\ref{fig:barrierscat}.
\end{itemize}

For convenience we define new event sets:
\herdevent{C} for all cache-maintenance operations (\asm{DC}, \asm{IC}, and \asm{TLBI} instructions);
\herdevent{T\_f} for all translation-read events which read a descriptor which causes a translation fault;
\herdevent{W\_inv} for all the write events which write an invalid descriptor;
\herdevent{Stage1} and \herdevent{Stage2} for the \herdevent{T} events
which originate from the respective stage of translation;
\herdevent{ContextChange} for all context-changing events (such as writes to translation-controlling system registers); and 
\herdevent{CSE} for all context-synchronizing events (taking and returning from exceptions and \asm{ISB}).

\subsection{Strong model}\label{sec:models:subsec:strong}

\begin{figure}[t!]
\scalebox{0.8}{
\begin{minipage}{1.2\textwidth}
\begin{multicols}{2}
\lstinputlisting[language=cat]{aarch64_mmu_strong.cat}
\end{multicols}
\end{minipage}}
\caption{Strong Model (with baseline Armv8-A model parts in {\color{gray}gray})}\label{fig:strongmodel}
\end{figure}

\begin{figure}[t!]
\scalebox{0.8}{
\begin{minipage}{1.2\textwidth}
\lstinputlisting[language=cat]{aarch64_mmu_tlb_might_affect.cat}
\end{minipage}}
\caption{The \herd{tlb-affects} relation.}
\label{fig:tlbaffectscat}
\end{figure}

\begin{figure}[t!]
  \scalebox{0.8}{
  \begin{minipage}{1.2\textwidth}
  \lstinputlisting[language=cat]{barriers.cat}
  \end{minipage}}
  \captionsetup{justification=centering}
  \caption{Barrier definitions. \\
  Note we do not distinguish between Inner-Shareable and Full-System barriers}
  \label{fig:barrierscat}
  \end{figure}

The model is given in full in Figure~\ref{fig:strongmodel} with auxiliary definitions
of the \herd{tlb-affects} and barrier hierarchy given in Figures~\{\ref{fig:tlbaffectscat},\ref{fig:barrierscat}\}.

Its basic form is very similar to previous multicopy-atomic Armv8-A models.
It still has \herd{external}, \herd{internal}, and \herd{atomic} axioms,
to which we add  a \herd{translation-internal} axiom for ensuring translations do not read from po-later writes.

Most of the changes to the model are in the \herd{external} axiom,
where we add several relations to  ordered-before (\herd{ob}):
\herd{iio} orders the intra-instruction events as ordered by the ASL;
\herd{tob} (``translation ordered-before'') ensures the order arising from the act of translation itself is respected;
\herd{obtlbi} orders translates and their explicit memory events with \asm{TLBI}s
which affect these translations;
and
\herd{ctxob} (``context ordered-before'') orders events which must come before some context-changing operation
or after some context-synchronizing
operation.
We also add a generalised coherence-order relation, \herd{wco},
an existentially quantified total order expressing when TLBIs complete w.r.t.~writes.

\mysubsubsection{Coherence}
By making \herd{loc}
(and therefore
\herd{rf} and \herd{co})
relate events with the same physical addresses, we get
coherence over physical addresses rather than virtual,
and all the previously allowed shapes are also allowed
when there is aliasing with different virtual addresses.
Coherence of writes to translation tables is expressed in two places:
including \herd{trfe} in \herd{obs} captures the fact that translation-table reads from memory %
microarchitecturally come from the `flat' coherent storage subsystem, 
and so the writes that
they read from must have been propagated before the translation happened;
and the \herd{translation-internal} axiom forbids forwarding against program-order.
Note that including only \herd{trfe} allows forwarding locally (a \herd{trfi} edge),
and including \herd{(addr|data);trfi} in \herd{dob} ensures those forwarded writes cannot form bad self-satisfying %
cycles.

\mysubsubsection{TLB maintenance and break-before-make}
The \herd{obtlbi} relation ensures that instructions whose translations
read from writes which
are ``hidden'' by some \asm{TLBI} instruction
are ordered before the completion of that \asm{TLBI}.
This is achieved by the two clauses of \herd{obtlbi}:
the first clause ensures the translation-before-TLBI ordering is preserved,
and
the second clause orders
the explicit memory access of any such instruction with the same \herd{TLBI}
as the first clause.
To do this, the model computes the set of writes which are in effect ``barriered''
by a given \asm{TLBI} instruction,
by looking at all translations in the execution,
and if any translation reads-from a write which is before a \asm{TLBI},
we then get \herd{TLB\_barrier} between them.
The \herd{tcache1} and \herd{tcache2} relations then simply relate
translations which read from coherence-predecessors of any of those writes
with their respective barriering \asm{TLBI}.

To accurately match up each of the various \asm{TLBI} instructions
with the translations they may affect, we define a \herd{tlb-affects} 
relation which relates \herd{TLBI} events with the \herd{T} events they are relevant to.
Its definition uses sets \herd{TLBI-VA}, \herd{TLBI-ASID}, \herd{TLBI-IPA}, \herd{TLBI-VMID}, and \herd{TLBI-ALL} for each of the categories of \asm{TLBI} instruction.
Note that some instructions can fall into multiple categories,
such as \asm{TLBI VAE1} which is in \herd{TLBI-VA} for the specified virtual address,  \herd{TLBI-ASID},
as the register input contains an ASID to perform the invalidation for, and also \herd{TLBI-VMID} as the invalidations only affect translations in the same VM.\@

We add \herd{obtlbi\_translate} to relate those translations to \asm{TLBI}s which
invalidate the writes they read from.
For Stage~1 translations we can simply order any Stage~1 translation
before any \asm{TLBI} which would \asm{tlb\_affect}
this translation where the translation
reads from a write which is ordered-before than the \asm{TLBI}.
However, for Stage~2 translations this is not sufficient.
Recall that microarchitecturally the TLB could store whole virtual-to-physical mappings,
and so a Stage~2 translation-read is only ordered after the \asm{TLBI}s which
remove not only any Stage~2 mappings but also those that would remove the combined Stage~1 and Stage~2 mappings.
For a Stage~2 translation whose previous Stage~1 walk only read from writes newer
than the \asm{TLBI} then the Stage~2 invalidation is sufficient.
But where any of the reads read-from a write older than the \asm{TLBI},
a cached virtual-to-physical mapping could exist and Stage~1 invalidation
is required, hence the Stage~2 translation is ordered after the \emph{Stage~1} invalidation.

\mysubsubsection{Translation-table-walk reading from memory}

As noted in \S\ref{subsec:questions:non_tlb}, a translation which results in a translation fault
must read from memory or be forwarded from program-order earlier instructions,
and those memory reads behave \emph{multi-copy atomically}.
In general the only time the model can guarantee that such a memory read happens is
when the read results in a translation fault,
since entries that result in a translation fault cannot be stored in the TLB (\S\ref{subsubsec:questions:tlb:what_cached}).
The model captures this succinctly by including \herd{[T\_f];tfr} in \herd{ob}.

In general, a translation-read is ordered after the write which it reads from,
as captured by the inclusion of the \herd{trfe} edge in \herd{ob};
this is strong enough to ensure that TLB fills and faulting memory walks pull values out of the memory system in a coherent way,
but still weak enough to allow \emph{other}-multi-copy-atomic behaviour such as forwarding.

As discussed in \S\ref{nonspec-same},
a \asm{DSB} ensures that writes are propagated out to memory.
For translations this amounts to ensuring that a faulting translation
cannot read-from something older than a po-previous \asm{DSB}-barriered write,
as captured by the last edge in \herd{tob} which says that a \herd{tfri} edge
from such a faulting translation must not have an interposing \asm{DSB}.

Note that the absence of the full \herd{tfr} relation in \herd{ob} for non-faulting translations
intentionally allows some incoherence,
in essence allowing a translation-read to ``ignore'' a newer write.

\mysubsubsection{Context-changing operations}
In general, the sequential semantics takes care of the context,
such as current base register and system register state,
for us.
The %
\herd{ctxob} relation simply ensures that
such context-changing operations cannot be taken speculatively,
and that context-synchronization ensures that all po-%
previous context-changing operations
are ordered-before po-later translations.

\mysubsubsection{Detecting BBM Violations}
As discussed in \S\ref{subsubsec:questions:tlb:bbm}, we do not model in detail
the bounded-catch-fire semantics that currently architecturally results from 
a missing break-before-make sequence, as that would make it hard to enumerate possible litmus-test executions.
Instead, because what one normally wants to know for litmus tests is that a test does not exhibit a BBM failure,
we conservatively detect the existence of such violations and flag them for the user.
This is achieved through a per-candidate-execute predicate, written in SMT,
which looks for a situation which \emph{could} be a break-before-make violation.
It does this by asserting that there does not exist a pair of writes which conflict
such that there is no interposing break-and-\asm{TLBI} sequence.
This approach is slightly over-approximate, as it might look for two writes that technically conflict
even if they (for other reasons) are not used at the same time.
This means that while we support programs that switch from one page table to another,
we do not support programs that garbage collect page-table memory and then repurpose it.

\mysubsubsection{ETS}
We discussed the Armv8-A optional ETS feature, providing additional ordering strength for translations, in 
\S\ref{nonspec-same},\ref{twoets}. 
The intuition is that the model would have
\emph{ghost} events in the event an instruction faults, 
to represent the explicit read or write which would have happened
had the instruction not faulted. %
The model would then have to compute a special variant of \herd{ob}
including such dependencies, 
but without the physical-address-dependent relations such as \herd{loc}, \herd{rf} and \herd{co}.
Then any edge in the version of \herd{ob} with the ghost events would become
an edge in the real \herd{ob} but attached to the faulting translation.
To capture this, our model produces fault events which have the correct dependencies
(and fault information) and the model orders the fault event
with respect to program-order previous events which would have ordered and place those
into \herd{ob}.
To achieve this,
we manually insert all edges from the syntactic subsets (those edges which do not rely on \herd{loc}) from \herd{bob} and \herd{dob}
into a \herd{obfault} relation.
We use this to build an \herd{obETS} relation which then orders translations that result in a translation fault after anything the fault is ordered-after.

An additional complexity here is for thread-local behaviours of \asm{TLBI} instructions.
With ETS one does not require context synchronization to see the effect of a \asm{TLBI} thread-locally.
Our \herd{obETS} covers this with its second clause which orders translations from instructions \herd{po}-after
a subsequent \asm{DSB} as happening after any \asm{TLBI} which affected that translation.

\mysubsubsection{Reclaimation of pagetable memory}

There may be cases where the memory being used to store a translation table may become unreachable by any TTBR
and properly cleaned from the TLBs.
In practice this means the memory can now be reclaimed and re-purposed.

Allowing this is work-in-progress but the model as presented here 
does not support it and our break-before-make-violation detection predicate
will assume that this is a break-before-make violation.

\subsection{Weak Model}
\label{sec:weakestmodel}

Relaxed memory model design for hardware architectures has to resolve a three-way tension between providing enough strength for software (forbidding enough behaviours so that code works as desired without needing excessive synchronisation), weakness for hardware (rendering desirable microarchitectural optimisations sound), and simplicity. 
For ``user'' concurrency, one has to accommodate the broad space of concurrent code in the wild, which is hard to map, but systems concurrency is managed by much smaller bodies of code, in more specific ways.
This makes it interesting to explore models which are as weak as possible, subject to the constraints from system software such as pKVM. 

We define such a model by capturing just the requirements 
we identified from pKVM usage (\S\ref{sec:tests}),
expressing them as additional axioms over the Armv8-A base model:
coherence over physical memory;
no self-satisfying translations
or translations using speculative writes;
`breaking' a translation with an invalidation and a
broadcast \asm{TLBI} ensuring that all cores have finished using
that translation before the \asm{TLBI} returns;
writing a new entry to a broken page without \asm{TLB} maintenance;
and
changing translation tables and context
without TLB maintenance.
This  weak model uses the same candidates and auxiliary definitions as the strong, but
instead of including extra edges in \herd{ob} we impose new axioms for each of those behaviours.
For example, for the `break' part of Stage~1 break-before-make we have:
{
\begin{lstlisting}[language=cat]
  empty ([W] ; co ; [W_invalid] ; ob ; [dsb.sy] ; po
    ; ([TLBI-S1] ; po ; [dsb.sy] ; ob ; [CSE] ; instruction-order ; [T]) & tlb_affects
    ) & trf
    as brk2
\end{lstlisting}
}
This forbids the case where a write is read-from by a translation-table-walk when there is an interposing break and Stage~1 TLB-invalidation sequence,
which would `hide' that write from future translations.
Note the \herd{ob} edges allow the sequence to be split over multiple threads
in the context-switching scheduler case described in \S\ref{forbidden-past}.

Note that the pKVM developers believe that pKVM does not rely on the ETS feature,
and so the weak model does not include ETS.

\section{Metatheory: relationships between models}\label{proof}

The virtual memory mechanisms are complex in both sequential and concurrent ways, as we have seen, but they are intended to let system software provide a relatively simple abstraction to higher-level code.
As first steps towards establishing this, and as sanity checks of our models, we prove three theorems: 
that for static injectively-mapped address spaces, any execution which is consistent in the model with translation, erasing translation events gives an execution that is consistent in the original Armv8-A model without translation;
that for any consistent execution in the original Armv8-A model, there is a corresponding consistent execution in our extended model with translations;
and that our weak model is a sound over-approximation of our full translation model, i.e.,
that for any consistent execution in our full translation model, that same execution is consistent in the weak translation model.
Details are in App.~\ref{app:modelsrelation}.

\section{Isla-based model evaluation}
\label{isla}

Making relaxed-memory semantics exhaustively executable is essential
for exploring their behaviour on examples%
~\cite{%
DBLP:conf/ipps/YangGLS04,%
x86popl,%
pldi105,%
cpp-popl,%
alglave:herd,%
micro2015,%
DBLP:conf/popl/WickersonBSC17,%
DBLP:conf/pldi/BornholtT17,%
DBLP:conf/asplos/TrippelMLPM17,%
armv8-mca,%
iflat-esop2020-extended}.
Handling relaxed virtual memory brings several new challenges.
First, even just the sequential definition of Armv8-A address translation, with the page-table walk and its options, is remarkably intricate, defined in thousands of lines of Arm's ASL instruction description language.  Manually reimplementing a simplified version would be error-prone and incomplete, so 
we instead build on our Isla tool~\cite{isla-cav}, which integrates the full 123,000 line Armv8-A ISA semantics (as defined by Arm in ASL and automatically translated into Sail~\cite{sail-popl2019}), with SMT-based tooling to evaluate tests w.r.t.~axiomatic concurrency models.
Previously Isla supported only ``user'' models, expressed in a language based on relational-algebra similar to the Cat language of Herd~\cite{alglave:herd}.
The integration with a full ISA semantics led us to raise 
several of the questions of \S\ref{questions}, e.g.~relating to system registers and mixed-size effects, which would not arise in a more idealised setting.
The second main challenge is the combinatorics.

Previous litmus tests typically involved
only a few abstract memory locations and events, but even %
simple virtual memory tests
require 30kB of page
tables,  each ``user'' memory access might have 24 or more page-table accesses,
and %
each 64-bit descriptor may be represented by
a symbolic value representing all possible states that descriptor can
be in. To avoid overwhelming the SMT solver during symbolic execution,
the formula representing each symbolic descriptor is created
dynamically when read. When encoding the final SMT problem that
decides whether a candidate execution is allowed,  %
we ensure that only the parts of the page
tables actually used by that candidate execution are
included.
We also implemented a model-specific optimization that removes irrelevant translation events which cannot affect the result
of the test, improving performance by a factor of 13 on average, and up to 90 times for some tests.
Third, we had to provide a convenient way to express the
page table configuration for each test, with the declarative 
 language of which we saw a small part on the
left-hand side of the \S\ref{sec:pkvm} test. %

  \vspace*{-4mm}
\begin{center}
  \includegraphics[trim=0 170mm 0 0, clip,width=0.9\textwidth]{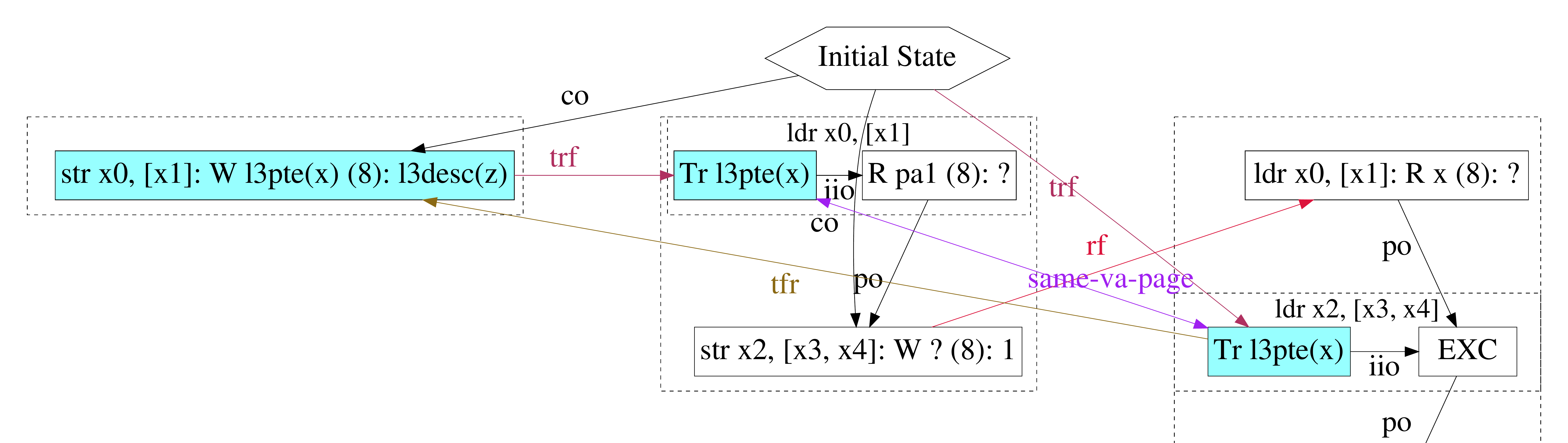}
\end{center}
  \vspace*{-4mm}
A good user interface is essential. Above, we show an Isla-generated execution for a WRC test like that of \S\ref{wrctestsec}, showing how uninteresting translation events can be suppressed in the output to avoid overwhelming noise.

The main result is that, in the strong model, all \numberOfIslaLitmusTests{} litmus tests and
\numberOfpKVMTests{} pKVM tests are allowed or forbidden as intended, based on our discussion with Arm of their
architectural intent, except two pKVM tests which time out.
Additionally, we
tested that the weak model never forbids any test allowed by the
strong model.
The tool performance is eminently usable in practice: most tests take around 1 minute, and the full set of litmus tests can be run in less than 2 hours CPU time, 
on a 36-core Intel Xeon Gold 6240.
\TODO{BS: time to run + timeouts}
Details are in App.~\ref{app:results}.

A further key property is that for ordinary relaxed-memory litmus
tests which do not involve virtual memory,  our model
should give the same results as the published
\mbox{Armv8-A}~\cite{deacon-cat,armv8-mca,G.a} axiomatic memory model. To
validate this (and our tools) we test our strong model on an existing
library of tests, comparing to reference results from Herd and
RMEM~\cite{FGP16}. We ran an additional 1927 such litmus tests, which
all returned the expected results.  

\section{Experimental testing of hardware}
\label{harness}

Experimental investigation of hardware implementation behaviour,
and experimental validation of models with respect to that,
is one important input to the development of practically relevant relaxed memory semantics~\cite{DBLP:books/daglib/0073498,x86popl,Adir:2003,Alglave:2011:LRT:1987389.1987395}.
However, almost all that work has focussed on ``user'' concurrency, with litmus tests that could be run as user processes under a normal OS, and that could easily iterate tests over arrays.
Experimental testing of virtual memory behaviour is considerably more challenging, as one needs to run code at higher privilege levels, including exception handlers, and manipulate the page tables that a normal OS and/or hypervisor would be depending on.
When we started this work, that was not supported by litmus, so we have developed a litmus-like test harness for running virtual-memory tests bare-metal or in KVM. 
Currently it runs Stage~1 tests only; for Stage~2 tests some adaption to run code at EL2 is still needed.
The harness can be found at \url{https://github.com/rems-project/system-litmus-harness}.
At present this and Isla use different test formats, so we have some tests manually written in both.

We ran tests on three devices:
a Raspberry Pi 3 (Arm A53),
a Raspberry Pi 4 (Arm A72),
and an AWS \texttt{m6g.metal} (AWS Graviton2, claiming to be an A72).
Our experimental data suggests that all are multi-copy atomic with respect to translation-table-walks,
respect coherence over physical locations,
correctly perform TLB maintenance,
and do not disagree with the tests presented here except for one behaviour:
we sometimes observe anomalous results with respect to writes not being made globally visible to translation-table-walks
(a \asm{DSB} not sufficing); this is currently under discussion with Arm.
Full results are in App.~\ref{app:results}.

{Further testing on other platforms would be desirable, but our emphasis in this work is principally on exploring the design space and capturing the architectural intent, and the main validation is from discussion with the Arm Chief Architect, who  ultimately is responsible for determining what the architecture is. 
In this context, experimental data serves mainly to provide
reassurance that some envisaged architecture strength is not
invalidated by extant hardware implementations.
}

\section{Related work}\label{related}%
There is extensive previous work on ``user'' relaxed-memory semantics of modern architectures, but very little extending this to cover systems aspects such as virtual memory. 
We build on the approaches established in ``user'' models for x86, IBM Power, Arm, and RISC-V, combining executable-as-test-oracle models, discussion with architects, and experimental testing~\cite{x86popl,damp09,cav2010,tphols09,cacm,pldi105,popl2012,pldi2012,tut,alglave:herd,micro2015,FGP16,%
mixed17,armv8-mca,riscv-20181221-Public-Review-draft}.

{Arm publish a machine-readable version of their Armv8-A relaxed memory model}~\cite{arm-memory-model-tool}, {in the Cat language of the Herd7 tool}~\cite{herd7}, {but 
that model does not currently cover the relaxed virtual-memory semantics.
Independent work in progress %
by Alglave et al.{}
is similarly aiming to characterise this, and to update Arm's published model in due course, but with complementary scope to the current paper:
including hardware updates of access and dirty bits, but
without integration with the full ASL/Sail instruction semantics and its multiple levels and stages of translation.
Both have been informed by discussion with senior Arm staff, and one would hope to synthesise the understanding in future.}
Hossain et al.~\cite{DBLP:conf/isca/HossainTM20} develop an ``estimated'' model for virtual memory in x86 (which has a much less relaxed base semantics) in a broadly similar axiomatic style.
Tao et al.~\cite{Gu2021} axiomatise six conditions for \emph{weak data-race-freedom} that should be satisfied by Armv8-A kernel code that uses virtual memory in simple ways,
and an extension of Promising-Arm~\cite{DBLP:conf/pldi/PultePKLH19} that effectively builds in these conditions%
; they extend the sequential verification of the SeKVM hypervisor by Li et al.~\cite{DBLP:conf/uss/LiLGNH21} to show it satisfies these conditions.
The paper does not
attempt to characterise the exact guarantees provided by the Armv8-A architecture, or discuss the issues of our \S\ref{questions}.  A foundational model such as our \S\ref{sec:models} would let one ground such results on the actual architecture. 
Simner et al.~\cite{iflat-esop2020-extended} study
relaxed instruction-fetch semantics.
Several works give non-relaxed-memory semantics for Arm or x86 %
address translation, more or less simplified and with or without TLBs:
Bauereiss~\cite{sail-popl2019},
Goel et al.~\cite{GoelPhD,DBLP:books/sp/17/GoelHK17},
Syeda and Klein \cite{DBLP:conf/lpar/SyedaK17,DBLP:conf/itp/SyedaK18,DBLP:phd/basesearch/Syeda19,DBLP:journals/jar/SyedaK20},
Degenbaev~\cite{DBLP:conf/birthday/DegenbaevPS09}
(used for verification of a hypervisor shadow pagetable implementation~%
\cite{DBLP:phd/dnb/Kovalev13,DBLP:phd/dnb/Degenbaev12,DBLP:conf/vstte/AlkassarCKP12,DBLP:conf/fmcad/AlkassarCHKP10}),
Barthe et al.%
~\cite{DBLP:conf/aplas/BartheKS08,DBLP:conf/fm/BartheBCL11,DBLP:conf/csfw/BartheBCL12,DBLP:conf/types/BartheBCCL13},
Tews et al.%
~\cite{DBLP:journals/jar/TewsVW09},
Kolanski%
~\cite{DBLP:phd/basesearch/Kolanski11}, and
Guanciale et al.~\cite{DBLP:journals/jcs/GuancialeNDB16}.

\section{Acknowledgments}
{We thank Arm Ltd.~for its support of Simner's PhD and the wider project of which this is part. }
We thank the Google pKVM development team, especially Will Deacon, Quentin Perret, Andrew Scull, Andrew Walbran, and Serban Constantinescu, for discussions on pKVM, and the Google Project Oak team, Ben Laurie, Hong-Seok Kim, and Sarah de Haas, for their support. We thank Luc Maranget for comments on a draft. 

This work was partially funded by an Arm/EPSRC iCASE PhD studentship (Simner), Arm Limited, Google, ERC Advanced Grant (AdG) 789108 ELVER, and the 
UK Government Industrial Strategy Challenge Fund (ISCF) under the Digital Security by Design (DSbD) Programme, to deliver a DSbDtech enabled digital platform (grant 105694).

\appendix

\newcommand{\mysetheader}[1]{
}
\makeatother

\newpage

\newcommand{\myappendix}[2]{%
\section{#1}\label{#2}%
\mysetheader{Appendix~\ref{#2}: #1}
}

\appendix
% auto-generated by cutlines; do not edit

\newcommand{\myenlargethispage}[1]{}

\ifvmsaTestIndexStandalone
\newcommand{\thisdoc}{document}
\else
\newcommand{\thisdoc}{appendix}
\fi

\newcommand{\xxtodiscuss}[1]{}

\newcommand{\mynewpage}{\newpage}

\myappendix{VMSA litmus tests}{app:vmsa}

This \thisdoc{} gives the main Armv8-A virtual-memory-systems-architecture litmus tests that we have developed, systematically exploring the design space.

It is structured into subsections,
with each building upon the tests of the previous and
expanding the architectural scope of the tests.
Each subsection is divided into subsubsections for each \emph{shape}.
Each shape may have many variations, e.g.~with different choices of dependencies or barriers or cache maintenance instructions. 
\S\ref{sec:testformat} explains the test format,
then subsequent sections describe test shapes in detail:

\begin{description}
    \item[\S\ref{sec:aliasing}] explores coherence over physical and virtual addresses.
    \item[\S\ref{sec:map}] gives tests which create new simple mappings for previously unused pages.
    \item[\S\ref{sec:unmap}] considers unmapping in-use pages and the TLB invalidation requirements.
    \item[\S\ref{sec:tlbi}] considers extended questions about the operation of the \asm{TLBI} instruction.
    \item[\S\ref{sec:s1bbm}] gives tests which swap one translation for another, and the required break-before-make sequence.
    \item[\S\ref{sec:ttwordering}] considers ordering within a single translation-table walk. 
    \item[\S\ref{sec:mca}] considers for multi-copy atomicity.
    \item[\S\ref{sec:multiprocess}] gives  address-space tests.
    \ifwiptests\item[\S\ref{sec:s2bbm}] considers the Stage~2 requirements for re-mapping entries.\fi
\end{description}

Throughout, unless otherwise stated, the tests apply to both Stage~1 and Stage~2 translations,
and for all exception levels, 
and memory is by default \emph{normal} and \emph{cacheable}.

\ifvmsaTestIndexStandalone
\Mysubsection{Test Results}
The Isla-generated results can be found alongside each test,
but we also include a table here with all results.

% [inline block 0: 1 envs, 20194 chars -> data_tex | \begin{longtable}{p{.6\textwidth}@{}l@{}lr@{}l} &  \multicolumn{2}{c}{Strong model } & \multicolumn{2}{c}{Ets model } \\...]


\else
The Isla-generated results can be found alongside each test,
but we also include a table in App.~\ref{app:results} with all results.

\fi

\mynewpage
\Mysubsection{Test Format}\label{sec:testformat}

In this document the tests are given in a consistent format.
Each test is given in three parts:
\begin{itemize}
    \item The test listing.
    \item Execution witness diagram.
    \item Isla output.
\end{itemize}

\subsubsection{Naming Convention}

Throughout this document we will use a standard convention for names.

Each test name is of the following format:
\begin{lstlisting}
    TestName ::= ExtendedShape ("+" ThreadEdges)+
    ThreadEdges ::= EDGE | ThreadEdges "-" ThreadEdges
    ExtendedShape ::= SHAPE ("." ["Tf"|"T"|"R"|"Rpte"])* (".EL1")? (".inv")?
\end{lstlisting}

For tests that are completely new shapes, those shapes have their own names:
\begin{itemize}
    \item ROT (``Re-ordered translations'', \S\ref{test:ROT.inv+dsb})
    \item RBS (``Read broken secret'', \S\ref{test:RBS+dsb-tlbiis-dsb})
    \item BBM (``Break-before-make'', \S\ref{test:BBM+dsb-tlbiis-dsb})
    \item etc
\end{itemize}

Otherwise the underlying shape is just one of the original `data memory' shapes:
\begin{itemize}
    \item MP
    \item SB
    \item LB
    \item etc
\end{itemize}

To produce the full name we take the shape and expand it out to include an extra \cc{R} for each read,
for example \cc{MP} becomes \cc{MP.RR} as there are two reads.
The \cc{R}s represent the loads that happen with each thread appearing as a block
with the \cc{R}s within the block following program order.
Then each \cc{R} can be replaced with either, \cc{T} (a successful translation), \cc{T\_f} (a translation which results in a fault),
a \cc{Rpte} (a load of the pagetable entry itself) or remain a \cc{R} (a data memory load)

For example, \cc{MP.RpteT.inv+dsb-isb} represents an \cc{MP}-shaped test
where the first load on the receiver thread is replaced with a load of the pagetable
and the second read is a translation (which succeeds) reading from the (initially invalid) initial state,
where there is a \asm{DSB SY ; ISB} between the two instructions on the receiving thread.
See \TEST{MP.RpteT.inv+dsb-isb} for the full test and diagram

\subsubsection{Test Listing}

The test listing is comprised of 4 main sections:
\begin{itemize}
    \item memory and page-table initialisation code (on the left-hand-side).
    \item per-thread initial state (in the ``Initial state'' section on the right-hand-side).
    \item thread sections (labelled `Thread~0',`Thread~1',`Thread~0 EL1 handler', etc).
    \item final state condition.
\end{itemize}

\paragraph{Pagetable setup}
The core of the pagetable setup is a small DSL,
whose syntax is given by the grammar in Fig.~\ref{fig:pgtable_dsl}.

The setup is a sequence of \emph{constraints}.
Initially the table is unconstrained,
except for some initial mappings (each code section is identity mapped executable, etc).

Constraints can create new physical, intermediate-physical  or virtual addresses to be used in the program,
and set initial (and other possible) states of their mappings.
The pagetable setup code must describe not only the initial state of all translation tables,
but also any intermediate or final state that the translation tables could be in during execution of the test.

This constraint language comes with some built-in functions:
\begin{itemize}
    \item[-] \cc{raw(N)} is a raw 64-bit number,  useful as right-hand-side of \cc{|->} relations.
    \item[-] \cc{table(addr)} is a mapping for a whole table, useful as the right-hand-side of a \cc{|->} relation. 
    \item[-] \cc{va_to_pa} casts a virtual address to a physical one. \\
                See also \cc{pa_to_va}, \cc{ipa_to_va}, \cc{ipa_to_pa}, etc.  
\end{itemize}

\begin{figure}
    \ifvmsaTestIndexStandalone
    \lstinputlisting{pgtable_dsl.grammar}
    \else
    \lstinputlisting{vmsa-test-index/pgtable_dsl.grammar}
    \fi
    \caption{Pagetable Setup DSL --- Simplified Grammar}
    \label{fig:pgtable_dsl}
\end{figure}

\paragraph{Initial and final state}

The initial state box is a key-value store,
mapping the per-thread registers to initial values.
These values are set just after machine reset.

The final state box contains a single expression which asserts the expected values of registers in the relaxed outcome.
If the final state is allowed, then the test will have exhibited relaxed behaviours.

Both the final and initial state boxes use a simplified expression language which is common to both,
and its full simplified syntax is given by the grammar in Fig.~\ref{fig:toml_expr_syntax}.

This expression language comes with some built-in functions, which make it easier to write the tests:
\begin{itemize}
    \item[-] \cc{extz(v,bits)} zero-extends \cc{v} to be \cc{bits} wide.
    \item[-] \cc{ttbr(id=id,base=base)} produces a correctly-packed 64-bit number suitable as a \asm{TTBRx\_ELy} value, with base and the given asid/vmid.
    \item[-] \cc{pte3(IA,PTE BASE)} returns the address of the level~3 descriptor used to translate the virtual or intermediate-physical \cc{IA} starting from a table rooted at \cc{PTE BASE}. \\
            There are also \cc{pte2}, \cc{pte1} and \cc{pte0} variants.
    \item[-] \cc{desc3(IA, PTE BASE)} which is roughly \cc{*pte3(IA, PTE BASE)}, that is, the actual 64-bit descriptor found at the address given by the \cc{pteN(...)}.
    \item[-] \cc{raw(N)} is a raw 64-bit number,  useful as right-hand-side of \cc{|->} relations.
    \item[-] \cc{mkdesc3(oa=OA)} constructs a fresh level3 \emph{block} descriptor with default permissions and output address \cc{OA}. \\
            (See also \cc{mkdesc2}, \cc{mkdesc1}).
    \item[-] \cc{mkdesc2(table=ADDR)} constructs a fresh level2 \emph{table} descriptor with default permissions and table address \cc{ADDR}. \\
            (See also \cc{mkdesc1}, \cc{mkdesc0}).
    \item[-] \cc{page(addr)} is the page the address is found in, defined as \cc{addr} right-shifted 12 bits.
    \item[-] \cc{asid(id)} is a 64-bit value suitable for use in \asm{TLBI}-by-ASID instructions, defined to be \cc{id} left-shifted 48 bits.
\end{itemize}

\begin{figure}
    \ifvmsaTestIndexStandalone
    \lstinputlisting{final_state.grammar}
    \else
    \lstinputlisting{vmsa-test-index/final_state.grammar}
    \fi
    \caption{Litmus Test --- Simplified Expression Grammar}
    \label{fig:toml_expr_syntax}
\end{figure}

\subsubsection{Execution witness}

The test is run by Isla
with a model with no axioms to allow all behaviours.
Executions that satisfy the final state then have graphs produced
(and if multiple, the `interesting' execution is hand-picked for display).

The diagrams contain an `initial state' node which represents all initial writes in the system
(which may or may not be writes of zero).
Threads are then laid out in a row,
with instructions within each Thread box placed in a single column
with \herd{po} (`program-order') going top to bottom.
Multiple events within the same instruction are then aligned horizontally within the same row,
where possible.

Translates are highlighted (in blue) and interesting relations
(\herd{iio}, \herd{po}, \herd{co}, \herd{rf}, \herd{trf}, \herd{fr}, \herd{tfr}, \herd{same-va-page}, and \herd{same-ipa-page})
are shown (and where the relation is transitively closed, we display the transitive reduction of that relation to reduce clutter).
Labels for \herd{po} are elided to reduce clutter.

\subsubsection{Isla output}

The test is run in Isla,
using the strong model
\ifvmsaTestIndexStandalone
(see ESOP'2022 paper draft).
\else
(see App.~\ref{app:models}).
\fi

The generated Isla output that is produced is cut down to just the final line of output,
which contains five key pieces of information:
\begin{itemize}
    \item The test name
    \item Model outcome (allowed or forbidden)
    \item The total number of executions, and how many were allowed.
    \item The total Isla execution time for the test.
\end{itemize}

\subsubsection{Example}

Consider \TEST{CoTW1.inv}. 
It has one thread (Thread~0) and one other code section (Thread~0's EL1 exception vector).
The initial state says that the thread starts from EL0 (from the \asm{PSTATE\.EL} register),
with \asm{R1} (aka \asm{X1}) containing the 64-bit virtual address named \cc{x},
\asm{R2} containing the 64-bit level~3 descriptor which the initial pagetable setup uses to translate \asm{y},
\asm{R3} containing the 64-bit virtual address of the location that contains the level~3 descriptor used in translating \asm{x},
and finally that the thread's EL1 vector base address (\asm{VBAR}) is at \cc{0x1000}.

The pagetable setup has two virtual addresses (\asm{x} and \asm{y}),
with one physical address (\asm{pa1}).
Initially \asm{x} is unmapped,
and \asm{y} maps to \asm{pa1}
where \asm{pa1} is initially 1.
The page containing the vector table (starting at \asm{0x1000}) is identity mapped as executable.

During execution of the test it is expected that at some point \asm{x} may map to \asm{pa1},
and so there is a \asm{x ?-> pa1} constraint.
Without this constraint Isla will not generate any executions that involve translating \asm{x}
resulting in a translation to \asm{pa1}.
(In fact,  in this instance Isla will fail on symbolic evaluation of \asm{STR X2,[X3]} as the write is unsatisfiable.)

The exception handler of interest is located at \asm{0x1400},
this is at \asm{VBAR+0x400}.
An offset of \asm{0x400} represents a synchronous exception from a lower exception level.
The handler overwrites \asm{X0} with 0, to mark that an exception has occured,
and then does an exception-return to the next-instruction-address (i.e. \asm{ELR+4}).

The final state asserts that \asm{0:R0=1},
that is that seeing \asm{X0} being 1 would imply a relaxed execution of the above test.
See the \TEST{CoTW1.inv} section for an explanation of which relaxed behaviour(s) this outcome corresponds to.

The diagram then shows such an execution.

Finally we see that the test is forbidden by our strong model,
that Isla generates 2 candidate executions for this test (and neither are allowed),
and that it took 38589 miliseconds for Isla to run the test (just under 40 seconds).

\newsavebox{\CoTWoneinvbox}
\savebox{\CoTWoneinvbox}{\testlisting{CoTW1.inv}}
\newcommand{\CoTWoneinv}{
    \vspace*{2mm}
    \hspace*{-20mm}\adjustbox{max width=\textwidth}{\usebox{\CoTWoneinvbox}}
    {\begin{center}\testdiagram{CoTW1.inv}{}\end{center}}

    \begin{tabular}{ll}
        Model & Result \\
        \hline \\
        Base & \testresult{Strong}{CoTW1.inv} \\
        ETS & \testresult{ETS}{CoTW1.inv} \\
    \end{tabular}
}
\CoTWoneinv{}

\mynewpage
\Mysubsection{Aliasing}\label{sec:aliasing}

\Mysubsubsection{Coherence}

Arm's notion of \emph{coherence} gives a fixed total order per location of all writes to that location.
With virtual memory, that becomes a total order per \emph{physical address} location.

\begin{TESTGROUP}{CoRR}
\TESTPARAGRAPH{CoRR0.alias+po}
forbid

This is the classic coherence shape.
Here, we ask whether two reads with different VAs but which map to the same PA
are allowed to re-order with respect to each other if they read from different writes.  For Arm, they are not, as coherence is with respect to physical addresses.

\testfig{}

\mynewpage
\TESTPARAGRAPH{CoRR2.alias+po}
forbid

This is another standard variant of CoRR, adapted to physical memory. 

\testfig{}

\end{TESTGROUP}

\mynewpage
\TESTPARAGRAPH{CoWR.alias}
forbid

If one writes to one VA, and reads with another that is mapped to the same PA,
must the read read-from the program-order preceding write of the same PA,
or something newer,
regardless of the second VA? For Armv8-A, yes.

\testfig{}

\mynewpage
\Mysubsubsection{Write-Forwarding}

\TESTPARAGRAPH{PPOCA.alias}
allow

\xxtodiscuss{}

Can a load from one virtual address have its value forwarded from a store to distinct VA that is mapped to the same PA, on a speculative branch?

Our model says yes.

\testfig{}

\mynewpage
\Mysubsubsection{Out-of-order reads}

\myenlargethispage{\baselineskip}
\TESTPARAGRAPH{RSW.alias}
allow

If two reads from different VAs which translate to the same PA
read from the same write, %
they can be re-ordered with respect to program-order.

\testfig{}

\mynewpage
\TESTPARAGRAPH{RDW.alias} forbid

\xxtodiscuss{}

If two loads of different VAs which translate to the same PA
read from different writes, then can they be re-ordered?

Our model says no.

\testfig{}

\mynewpage
\TESTPARAGRAPH{CoWW.alias} forbid

\xxtodiscuss{}

Should the coherence-order of writes respect program-order in the same thread
even if they are to different VAs?

Our model says yes, and forbids the following CoWW.alias test.

\testfig{}

\mynewpage
\TESTPARAGRAPH{MP.alias3+rfi-data+dmb}
allow

This shows thread-local forwarding of a write to a read with distinct VA but the same PA, in a potentially non-speculative path.

\testfig{}

\mynewpage
\Mysubsection{Writing new entries}\label{sec:map}

\Mysubsubsection{Translation tables as data memory}

Writes to the translation tables are treated as completely normal
writes to memory as far as normal reads are concerned, like any other location:
they can be re-ordered, cached,  and take part in coherence
as far as their memory attributes allow. 
We assume here that all reads and writes are to `normal' cacheable memory.

\TESTPARAGRAPH{CoWR.inv}
forbid

Writing a new entry to the page-table then
loading the location again performs a normal data memory read.

We do not adapt all of the standard ``user'' data memory tests here
with translation tables as memory locations.
Instead, we just give one representative co-shaped example.

\testfig{}

\mynewpage
\Mysubsubsection{Making a new entry}

If a VA is currently unmapped
(and that has been fully synchronized with sufficient TLBI and barrier instructions), 
then, to produce a new virtual-to-physical mapping,
all that is needed is to simply write to the physical location that
contains the invalid entry for that VA.

To ensure that the new entry is seen by the same processor,
the pipeline must be flushed with an \asm{ISB} or other
\emph{context-synchronizing} event.
Without this, the processor can re-order (or perhaps even speculatively perform)
the translation.

\mynewpage
\myenlargethispage{\baselineskip}
\begin{TESTGROUP}{CoWTf}
\TESTPARAGRAPH{CoWTf.inv+po}
allow

If a thread writes to a page table entry initially containing an invalid descriptor,
and the translation of the address of the next instruction uses the page table entry,
then the translate is allowed to see the old, invalid descriptor.

To detect this, we install a handler for synchronous aborts which
writes \asm{0} to \asm{X2}
before incrementing the \asm{ELR} to the next instruction address and performing
an exception-return.

If the final state sees \asm{X2=1}, then we know the load read-from the new physical location,
but if it saw \asm{X2=0}, then it must have been caused by a translation-fault.

The role of \asm{y} in this litmus test is to make it possible to succinctly describe the new descriptor for \asm{x}.

\TODOLATER{BS: explain process a bit better}

\testfig{}

\mynewpage
\TESTPARAGRAPH{CoWTf.inv+dsb-isb}
forbid

If there is a \asm{DSB; ISB} interposed in between the overwriting of the invalid descriptor with the valid descriptor
and the translation, then the translation is required to see a write no older than that of the valid descriptor,
as the \asm{DSB; ISB} causes a pipeline flush.

\testfig{}

\end{TESTGROUP}

\mynewpage
\Mysubsubsection{Creating a new entry for another core}
If two CPUs are using the same (or overlapping) translation tables,
then, necessarily, writes to the translation table by one CPU can be visible to the other.

\mynewpage
\TESTPARAGRAPH{S.T+dmb+po}
forbid

In this \cc{S}-shaped test,
Thread~0 writes some data and then
gives Thread~1 a new mapping.

If Thread~1 sees the mapping,
then the program-order-later store must wait for the translation to finish before propagating to memory.

\testfig{}

\mynewpage
\begin{TESTGROUP}{MP.RTf.inv}
\TESTPARAGRAPH{MP.RTf.inv+dmb+dsb-isb}
forbid

Note that Thread 0 only has a \asm{DMB SY};
in fact, any ordered-before relation here would suffice.
Thread 1 requires the pipeline flush, as described above. %

\testfig{}

\mynewpage
\TESTPARAGRAPH{MP.RTf.inv+dmbs}
allow
(forbid with ETS)

The \asm{DSB; ISB} is required for the base architecture,
as illustrated by the this test.

However,
if the implementation has the ETS optional feature (``Enhanced Translation Synchronization''),
then this test is forbidden.
This is because ETS ensures that a translation-table-walk which results in a translation-fault
(that is, one that reads an invalid entry)
is ordered-after any memory event which would be ordered-before the read/write of any load/store (as appropriate)
in the place of the instruction which generated the translation-fault.
\TODOLATER{BS: talk about how it's a syntactic subset of ordered-before (?)}

\testfig{}

\mynewpage
\TESTPARAGRAPH{MP.RTf.inv+dmb+ctrl-isb}
forbid?

\xxtodiscuss{}

\testfig{}

\mynewpage
\TESTPARAGRAPH{MP.RTf.inv+dmb+addr}
forbid?

\xxtodiscuss{}

\testfig{}

\mynewpage
\TESTPARAGRAPH{MP.RTf.inv+dmb+po} allow

Even with ETS, program-order alone is not enough to ensure that Thread 1 sees the write of the valid descriptor.

\testfig{}

\mynewpage
\TESTPARAGRAPH{MP.RTf.inv.EL1+dsb-tlbiis-dsb+po} allow

This is a variant of message passing where
\asm{x} is initially not mapped,
Thread 0 maps \asm{x}, performs a \asm{DSB;TLBI;DSB}, and writes to the flag,
and Thread 1 reads the flag and reads \asm{x}.
Because there is only program-order in Thread 1, the reads can be completely
reordered, and thus the second read can happen entirely before the TLBI.

\testfig{}

\mynewpage
\TESTPARAGRAPH{MP.RTf.inv.EL1+dsb-tlbiis-dsb+dmb} forbid

A fault inherits the order that the corresponding memory access would have had if it had not faulted.
(With ETS, its translates also inherit the order, making this test forbidden more directly.)
Moreover, the pipeline effect of the broadcast TLBI enforces that a memory access and its translate-reads are `atomically' ordered with respect to the TLBI:
they are either both ordered-before it, or both ordered-after it.
Therefore, because the counterfactual load of \asm{x} in Thread 1 is ordered after the load of the flag \asm{y} by the \asm{DMB SY},
the translate is also ordered after it.
Therefore, if Thread 1 sees that the flag \asm{y} is set to 1,
then the translate is guaranteed to translate-read something at least as new as the new, valid mapping that Thread 0 wrote.

\testfig{}

\mynewpage
\TESTPARAGRAPH{MP.RTf.inv.EL1+dsb-tlbiis-dsb+addr} forbid

\testfig{}

\mynewpage
\TESTPARAGRAPH{MP.RTf.inv.EL1+dsb-tlbiis-dsb+data} forbid

\testfig{}

\mynewpage
\TESTPARAGRAPH{MP.RTf.inv+dmb+data} allow?

The version with just a \asm{DMB} is not enough.

\testfig{}

\mynewpage
\TESTPARAGRAPH{MP.RTf.inv.EL1+dsb-tlbiis-dsb+ctrl} forbid

\testfig{}

\mynewpage
\TESTPARAGRAPH{MP.RTf.inv.EL1+dsb-tlbiis-dsb+dsb-isb} forbid

\testfig{}

\mynewpage
\TESTPARAGRAPH{MP.RTf.inv.EL1+dsb-tlbiis-dsb+ctrl-isb} forbid

\testfig{}

\mynewpage
\TESTPARAGRAPH{MP.RTf.inv.EL1+dsb-tlbiis-dsb+poap} forbid

\testfig{}

\end{TESTGROUP}

\mynewpage
\TESTPARAGRAPH{LB.TT.inv+pos} forbid

This is a variant of load buffering
where the first thread's store writes the descriptor that the
translate for the second thread's load translate-reads from,
and symmetrically.
This kind of self-satisfying cycle would be very problematic,
and this test is forbidden.

\testfig{}

\begin{TESTGROUP}{S.RTf.inv.EL}

\mynewpage
\TESTPARAGRAPH{S.RTf.inv.EL1+dsb-tlbiis-dsb+data} forbid

\testfig{}

\mynewpage
\TESTPARAGRAPH{S.RTf.inv.EL1+dsb-tlbiis-dsb+ctrl} forbid

\testfig{}

\mynewpage
\TESTPARAGRAPH{S.RTf.inv.EL1+dsb-tlbiis-dsb+dmb} forbid

\testfig{}

\mynewpage
\TESTPARAGRAPH{S.RTf.inv.EL1+dsb-tlbiis-dsb+popl} forbid

\testfig{}

\mynewpage
\TESTPARAGRAPH{S.RTf.inv.EL1+dsb-tlbiis-dsb+poap} forbid

\testfig{}

\end{TESTGROUP}

\mynewpage
\Mysubsubsection{Coherence}
Similarly to our previous questions about Instruction$\leftrightarrow$Data coherence,
we can ask questions about Translation$\leftrightarrow$Data coherence:

\begin{enumerate}
    \item If a translation-table-walk reads-from a write, must a later translation-table-walk that reads the same location read-from the same write or something coherence-newer?
        (Translation$\rightarrow$Translation Coherence).
    \item If a translation-table-walk reads-from a write, must a later load/store that reads/writes that location read-from something at least that new, or write something coherence-after it?
        (Data$\rightarrow$Translation Coherence).
    \item If a load reads-from a write, must a later translation-table-walk which reads that location read-from that write or something newer?
        (Translation$\rightarrow$Data Coherence).
\end{enumerate}

\mynewpage
\TESTPARAGRAPH{CoTW1.inv}
forbid

Translations cannot read-from writes which appear program-order after the instruction that does the translation.

\CoTWoneinv{}

\ifwiptests
\mynewpage
\TESTPARAGRAPH{CoTfWinv1+si}
forbid

A special case of the above is if the translation is from the same instruction:
if a write writes to the translation-table entry which translates itself,
that translation cannot read-from itself.

In this test,
there cannot be a translation-fault from the store to the translation table.
\fi

\mynewpage
\myenlargethispage{10\baselineskip}
\begin{TESTGROUP}{CoTTf}
\TESTPARAGRAPH{CoTTf.inv+dsb-isb}
forbid

Here, Thread~0 makes a new mapping, and
Thread~1 observes that new mapping
by performing a translation using it,
and then later tries to load that same location.
If the first read is translated using the new entry,
then the second one is not allowed to fault.

Note this test's handler writes to \asm{X2},
so the test saves it into \asm{X0} after the first load.

This suggests a kind of translation$\rightarrow$translation coherence.
In general, you do observe such coherence when TLB-misses (and therefore walks in memory) occur.
However, the \TEST{CoTfT+dsb-isb} test (later in this document) shows that this is not guaranteed for all translations.

\testfig{}

\mynewpage
\TESTPARAGRAPH{CoTTf.inv+po}
allow

Same as above,
but with no explicit order between the two loads.

This is allowed.

\testfig{}

\end{TESTGROUP}

\mynewpage
\TESTPARAGRAPH{CoTfT+dsb-isb}
allow

This test is in contrast to the previous test,
where the VA used by the loads in Thread~1 are instead valid from the start,
and Thread 0 attempts to `break' the entry by writing an invalid descriptor (e.g. \asm{0}) to the entry.
In contrast to the previous test, this one does not obey the translation$\leftrightarrow$translation coherence principle.

\testfig{}

\mynewpage
\begin{TESTGROUP}{CoRpteTf.inv}
\TESTPARAGRAPH{CoRpteTf.inv+dsb-isb}
forbid

In this test, Thread 0 writes a new mapping,
which Thread 1 reads-from with a load
before trying to access the location mapped by that entry.

So long as the later translation is ordered after the read
(with a context-synchronizing event, or if with ETS then any ordered-before),
then the translation must see the new mapping too.

This implies a kind of Translation$\rightarrow$Data coherence.
Note that this only applies going Invalid$\rightarrow$Valid;
removing a mapping does not guarantee this coherence and TLB maintenance is required (c.f. \TEST{CoRT+dsb-isb})

\testfig{}

\mynewpage
\TESTPARAGRAPH{CoRpteTf.inv+dsb}
allow (unless ETS)

Same as previous, but without the \asm{ISB}.
Allowed (unless ETS, then forbidden).

\testfig{}
\end{TESTGROUP}

\mynewpage
\TESTPARAGRAPH{CoRpteT+dsb-isb}
allow

Here, Thread 0 \emph{unmaps} a location,
and Thread 1 loads the translation table entry containing the invalid descriptor.
Given Thread 1 read the invalid descriptor,
is a `later' translation of the possibly-unmapped location
required to fault?

On Arm, no.
The TLB can cache the old mapping and then the later translation can read that cached value.

\testfig{}

\mynewpage
\TESTPARAGRAPH{CoRpteT.EL1+dsb-tlbi-dsb-isb}
forbid

In order to forbid the case in the previous test,
an extra \asm{TLBI} (with correct synchronization) must be inserted to remove those cached entries
before trying to load the possibly-unmapped location.

\testfig{}

\mynewpage
\TESTPARAGRAPH{CoRpteT.EL1+dsb-tlbi-dsb}
allow

Same as previous test, but without the \asm{ISB}.

\testfig{}

\mynewpage
\TESTPARAGRAPH{CoTRpte.inv+dsb-isb}
forbid

Here, Thread 0 makes a new mapping,
and Thread 1 reads-from that entry during a translation-table-walk
before trying to load the entry itself.

If Thread 1 sees the new mapping, then `later' loads of the translation table
must see that entry or something newer.
This is Data$\rightarrow$Translation coherence.

\testfig{}

\mynewpage
\begin{TESTGROUP}{CoTfRpte}
\TESTPARAGRAPH{CoTfRpte+dsb-isb}
forbid

Here, Thread~0 unmaps a location
and Thread~1 translates that location getting a translation fault.
Is a later load of the translation-table entry in Thread~1 required to see the
write of the invalid descriptor or something newer?
On Armv8-A, yes, 
since the fault is from a TLB-miss it reads from memory
and therefore respects the total coherence-order for that entry.

\testfig{}

\mynewpage
\TESTPARAGRAPH{CoTfRpte+po}

This test cannot exist, because a fault causes an exception,
which the test has to take,
and from which the test has to return,
both of which cause synchronisation.

\mynewpage
\TESTPARAGRAPH{CoTfRpte+eret}
forbid

\testfig{}

\end{TESTGROUP}

\mynewpage
\begin{TESTGROUP}{CoTfW.inv}
\TESTPARAGRAPH{CoTfW.inv+dsb-isb}
forbid

Here, Thread~0 writes a new mapping, and
Thread~1 then sees this mapping with a translation-table walk,
before overwriting it with another entry.
Can this later write be coherence-before the original write?

Our model forbids this,
requiring that writes pass the point of coherence before being visible to translations.

\testfig{}

\mynewpage
\TESTPARAGRAPH{CoTfW.inv+po}
forbid

This is like the previous test,
except that there is no barrier between the instruction which faults
and the program-order-later write.

However,
this write is still ordered after the translation-table-walk.
This is due to the write being unable to propagate until
after the translation-table-walk has finished and the fault is known,
as stores cannot propagate while speculative.

\testfig{}

\end{TESTGROUP}

\mynewpage
\TESTPARAGRAPH{PPODA.RT.inv} allow?

\xxtodiscuss{}

Can writes be forwarded to translation-table-walks in general?

\testfig{}

\mynewpage
\Mysubsubsection{Write-forwarding}

\TESTPARAGRAPH{MP.RT.inv+dmb+ctrl-trfi}
forbid

\xxtodiscuss{}
It seems that write forwarding should not allowed down speculative paths.
Allowing it could be problematic,
as forwarding in a non-taken path from a write to the translate for a read
would make it possible for the read to read from device memory,
which the device could observe.

\testfig{}

\mynewpage
\TESTPARAGRAPH{MP.RT.inv+dmb+addr-trfi}
forbid

\testfig{}

\mynewpage
\Mysubsubsection{Address dependencies}

\paragraph{MP.RTf.inv+dmb+addr}
forbid

(See \TEST{MP.RTf.inv+dmb+addr} from earlier)

\mynewpage
\Mysubsubsection{Data dependencies}

\paragraph{MP.RTf.inv+dmb+data}
allow

The data dependency (unlike an address dependency)
is to the memory access itself, not its translates,
so a translate can happen much earlier.

(See \TEST{MP.RTf.inv+dmb+data for diagrams/results})

\mynewpage
\Mysubsection{Unmapping memory and TLB invalidation}\label{sec:unmap}

In the previous section, we explored the sequences required to successfully create,
or \emph{map} a given virtual address in the current address space.

Removing a mapping (or \emph{unmapping}) is more subtle:
simply overwriting the entry back to zero is not enough,
due to caching of those old entries in the thread's TLB.\@

\newpage
\Mysubsubsection{Same-thread unmap}

\begin{TESTGROUP}{CoWinvT}
\TESTPARAGRAPH{CoWinvT+dsb-isb}
allow

\testfig{}

\mynewpage
\TESTPARAGRAPH{CoWinvT.EL1+dsb-tlbi-dsb}
allow

This is the `break' side of break-before-make,
but without an \asm{ISB} at the end on the same thread,
so it is not guaranteed that the po-later translations for this core are restarted.

\testfig{}

\mynewpage
\TESTPARAGRAPH{CoWinvT.EL1+dsb-tlbiis-dsb}
allow

This is the same as the previous test, but with a broadcast TLBI.

\testfig{}

\mynewpage
\TESTPARAGRAPH{CoWinvT.EL1+dsb-tlbiis-dsb-isb}
forbid

This (similarly to the previous test) is the `break' side of break-before-make,
but now including the \asm{ISB} at the end on the same thread,
so this time it is guaranteed that the po-later translations for this core are restarted.

\testfig{}

\mynewpage
\TESTPARAGRAPH{MP.RT.EL1+dsb-tlbiis-dsb+dsb-isb}
forbid

This is the `break' side of break-before-make,
now with a message passing to another core.

\testfig{}

\end{TESTGROUP}

\mynewpage
\TESTPARAGRAPH{RBS+dsb-tlbiis-dsb}
forbid

This `read-broken-secret' (RBS) test is a fundamental test for the security guarantees `break' gives you.
Thread~0 unmaps a VA
before writing to the original PA (for example, here, through an alias).
Thread~1 attempts to read that VA.

It is allowed for Thread~1 to see the translation-fault,
or to translate the VA using the old mapping and see the old write,
but it is forbidden to translate the VA to the PA and see the new write.

This ensures that, once a mapping to a location is `broken',
later writes to that location are `secret'
for any cores that were using that VA.

\testfig{}

\mynewpage
\Mysubsection{More TLB invalidation}\label{sec:tlbi}

\Mysubsubsection{TLBI-pipeline interactions}

Broadcast TLBI variants,
also called TLB \emph{shootdowns},
not only clean the cached entries in the TLB,
but also go into the pipeline of other cores to invalidate any unfinished instructions
that had already started using any of the old translations.

\mynewpage
\begin{TESTGROUP}{MP.RT.EL1}
\TESTPARAGRAPH{MP.RT.EL1+dsb-tlbiis-dsb+dmb} forbid

The message-passing ensures that the read of \asm{x} is ordered after the broadcast \asm{TLBI},
and the translate of \asm{x} has to be on the same side as the access it translates for,
so it has to be after the \asm{TLBI} too,
and therefore has to fault.

\testfig{}

\end{TESTGROUP}

\mynewpage
\myenlargethispage{10\baselineskip}
\Mysubsubsection{Thread-local TLBIs}
In the previous section, we described broadcast TLB maintenance (aka `TLB shootdowns').
But Arm also have thread-local TLB-maintenance instructions.
These thread-local TLBIs can be used to clear thread-local context information
(such as local TLB-cached entries),
and can be combined together to emulate a broadcast TLB-maintenance
by interrupting other cores and performing a thread-local TLBI.

\TESTPARAGRAPH{CoWinvT.EL1+dsb-tlbi-dsb-isb}
forbid

The effect of a thread-local TLBI is enough for the thread executing it.

\testfig{}

\mynewpage
\TESTPARAGRAPH{MP.RT.EL1+dsb-tlbi-dsb+dsb-isb}
allow

The effect of a thread-local TLBI is indeed thread-local,
and does not get carried over to another thread by message-passing.

\testfig{}

\mynewpage
\myenlargethispage{10\baselineskip}
\TESTPARAGRAPH{MP.RT.EL1+dsb-shootdown-dsb+dsb-isb}
forbid

\xxtodiscuss{}

A broadcast TLBI can be emulated by performing
a thread-local TLBI on each core, with sufficient synchronization between them.

In the following test,
we take \TEST{MP.RT.EL1+dsb-tlbiis-dsb+dsb-isb} and split the broadcast TLBI over
many threads.
Thread 0 `breaks' the location and then sends messages to each core
requesting it perform the TLB maintenance locally.

Note that to correctly emulate the behaviour of the broadcast TLBI,
each core must perform an \asm{ISB} (or other context-synchronizing event)
to get the pipeline effects of the TLBI.
This requirement is slightly stronger than the TLBI semantics,
also flushing unrelated accesses.

\testfig{}

\mynewpage
\myenlargethispage{20\baselineskip}
\Mysubsubsection{Multiple locations}

\TESTPARAGRAPH{MP.RTT.EL1+dsb-tlbiis-tlbiis-dsb+dsb-isb}
forbid

Here, we invalidate \emph{two} different VAs, and perform their TLBIs
together in the same thread,
effectively allowing concurrent execution of the two TLBIs on the same core.

\testfig{}

\mynewpage
\myenlargethispage{10\baselineskip}
\Mysubsection{Stage~1 Re-mapping and break-before-make}\label{sec:s1bbm}

\Mysubsubsection{Break-before-make}

\begin{TESTGROUP}{BBM}
\TESTPARAGRAPH{BBM+dsb-tlbiis-dsb}
allow

This is the smallest example of a safe change of output address mapping,
using the `break-before-make' (BBM) pattern.

Thread~1 loads a fixed VA, and
Thread~0 tries to re-map that VA from the initial PA to a new one.

To do this safely,
Arm prescribe a ``break-before-make'' sequence
to ensure that the other threads will not ever see both the new and old mappings at the same time.
Instead, the mapping must be `broken' (unmapped) and cleaned between the two states.

This test is the minimum required to correctly change OA (``output address'') for a given mapping.

\testfig{}

\mynewpage
\TESTPARAGRAPH{BBM.Tf+dsb-tlbiis-dsb}
allow

This illustrates the other allowed outcome of the previous test:
the correct use of break-before-make ensures that the change of output address is safe,
but does not guarantee that the new page table entry is seen.
Thread~1 sees a translation-fault from the transient invalid entry during the break-before-make sequence.

\testfig{}

\end{TESTGROUP}

\mynewpage
\myenlargethispage{10\baselineskip}
\begin{TESTGROUP}{MP.BBM}
\TESTPARAGRAPH{MP.BBM1+dsb-tlbiis-dsb-dsb+dsb-isb}
forbid

In this test,
Thread~0 break-before-makes a new mapping,
and then synchronises with Thread~1 with a message pass.

While this is a slightly unusual setup,
as one would mostly expect break-before-make to happen concurrently with the other thread, rather than be synchronised-before it,
this is still an interesting test to explore the architecture.

Arm forbid both the translation-fault and the translation with the old entry.

\testfignumbered{2}

\mynewpage
\TESTPARAGRAPH{MP.BBM1+dsb-tlbiis-dsb-dsb+ctrl-isb} forbid

\testfig{}

\end{TESTGROUP}

\mynewpage
\Mysubsection{Translation-table-walk ordering}\label{sec:ttwordering}

\Mysubsubsection{Inter-instruction ordering}

Typically, translations of separate instructions are not ordered with respect to each other:
simply having program-order between them introduces no strength.
Earlier, we saw that even same-VA did not give strength (\TEST{CoTTf.inv+po})

\begin{TESTGROUP}{MP.TT.inv}
\TESTPARAGRAPH{MP.TTf.inv+dsb+po}
allow

The translates on the reader side are not ordered.

\testfig{}

\mynewpage
\TESTPARAGRAPH{MP.TTf.inv+dsbs}
allow (unless ETS, then forbid)

The translates on the reader side are not ordered,
as the \asm{DSB SY} does not order them by itself
(it needs an \asm{ISB}),
except with ETS, which would provide order with the faults.

\testfig{}

\mynewpage
\TESTPARAGRAPH{MP.TTf.inv+dsb+dsb-isb}
forbid

The translates on the reader side are ordered by the \asm{DSB SY;ISB} combination.

\testfig{}

\mynewpage
\TESTPARAGRAPH{MP.TTf.inv+dsb+ctrl-isb}
forbid

\testfig{}

\ifwiptests
\mynewpage
\TESTPARAGRAPH{MP.TTf.inv+dsb+addr}
forbid

\testfig{}
\fi

\mynewpage
\TESTPARAGRAPH{MP.TTf.inv+dmb+dsb-isb}
forbid

\xxtodiscuss{How is that compatible with the statement that ``\asm{DSB}s make writes visible to translation table walks''?}

The \asm{DMB SY} on the writer thread is enough
(a \asm{DSB SY} is not needed),
as it is only here to order the writes as normal writes.

\testfig{}

\ifwiptests
\mynewpage
\TESTPARAGRAPH{MP.TTf.inv+dmb+ctrl-isb}
\todojp{???}

\testfig{}
\fi

\mynewpage
\TESTPARAGRAPH{MP.TTf.inv+dmb+po}
allow

Note that although the final translation-fault in the second thread must be ordered-before
the write of the valid entry in Thread 1
(as the fault must come from a non-TLB read of memory),
the translation-fault is not necessarily ordered-after the initial translate
or even the first load's read.

\testfig{}

\mynewpage
\TESTPARAGRAPH{MP.TTf.inv.EL1+dsb-tlbiis-dsb+po}
allow

Program-order between the loads does not induce order between the translates of the loads.

\testfig{}

\mynewpage
\TESTPARAGRAPH{MP.TTf.inv.EL1+dsb-tlbiis-dsb+dsb-isb}
forbid

A \asm{DSB; ISB} between the loads does not induce order between the translates of the loads.

\testfig{}

\end{TESTGROUP}

\mynewpage
\Mysubsubsection{Multi-level translations}

\begin{TESTGROUP}{ROT.inv}
\TESTPARAGRAPH{ROT.inv+dsb}
forbid

In this \TEST{ROT} test (``reorder translation''),
Thread~0 writes to the leaf entry of a fresh (unused) translation-table,
and then replaces an (initially invalid) leaf higher in the table
with a new table entry which points to the freshly created table.

Thread~1 then tries to load an address that would use this new freshly made entry.
If the individual accesses during the translation-table-walk are allowed to re-order,
then it would be possible for Thread~1 to see the updated table
but still see the old leaf entry.

This must be forbidden, %
requiring that the translation-table-walk happens `in-order'
and the ordering on Thread~0 ensures the two writes are visible to the walker in that order.

The exception handler code records which translation level caused the exception.

\testfig{}

\mynewpage
\TESTPARAGRAPH{ROT.inv+dmbst}
forbid

This is like the previous test,
but with a much weaker barrier between the two writes.

This is also forbidden:
any respected ordering between the writes would suffice.

\testfig{}

\end{TESTGROUP}

\mynewpage
\TESTPARAGRAPH{LB+data-trfis}
forbid

This is a variant of LB+datas+WW.

\testfig{}

\mynewpage
\TESTPARAGRAPH{LB+addr-trfis}
forbid

This is a variant of LB+datas+WW.

\testfig{}

\ifwiptests
\mynewpage
\TESTPARAGRAPH{RWC.RTfR.inv+addr+dmb}
forbid

\xxtodiscuss{???}

\testfig{}

\mynewpage
\TESTPARAGRAPH{RWC.RTR.EL1+dsb-isb+dsb-tlbi-dsb}
forbid

\testfig{}

\mynewpage
\TESTPARAGRAPH{RWC.RTR.EL1+ctrl-isb+dsb-tlbi-dsb}
\todojp{???}

\testfig{}

\mynewpage
\TESTPARAGRAPH{SB.TfTf.inv+dsb-isbs}
forbid

\testfig{}

\mynewpage
\TESTPARAGRAPH{SB.TfTf.inv+dmb-ctrl-isbs}
\todojp{???}

\testfig{}

\mynewpage
\TESTPARAGRAPH{SB.TfTf.inv+rfi-ctrl-isbs}
\todojp{???}

\testfig{}

\mynewpage
\TESTPARAGRAPH{IRIW.TTf.TTf.inv+addrs}
\todojp{allow? + also the name probably needs fixing}

\todojp{MCA? not so much...}

\testfig{}

\mynewpage
\TESTPARAGRAPH{ISA2.RRTf.inv+dsb+addr+addr}
forbid

\todojp{I don't think this test is that interesting,
because it does not add to an MP shape}

\testfig{}

\mynewpage
\TESTPARAGRAPH{ISA2.RRTf.inv+dsb+data+addr}
\todojp{should be the same as dsb+addr+addr???}

\testfig{}

\mynewpage
\TESTPARAGRAPH{CoWT.inv+po-ctrl-isb+po}
\todojp{unclear}

\testfig{}

\mynewpage
\TESTPARAGRAPH{MP.RT.inv+trfi-data+addr}
allow

\testfig{}

\mynewpage
\TESTPARAGRAPH{CoWTf.inv+rfi-addr}
\todojp{???}

\testfig{}
\fi

\mynewpage
\TESTPARAGRAPH{WRC.TfRT+po+dsb-isb}
allow

\testfig{}

\mynewpage
\TESTPARAGRAPH{WRC.TfRT+dsb-tlbiis-dsb+dsb-isb}
allow

\testfig{}

\mynewpage
\Mysubsection{Multi-copy atomicity}\label{sec:mca}

\xxtodiscuss{This whole section about MCA}

\Mysubsubsection{MCA translation-table-walk}
A fundamental guarantee given by Armv8 over data memory is that of \emph{multi-copy atomicity},
that is,
once a write is seen by one other core, then all cores must see it or something newer if they try read that location.

Translation-table-walks are a kind of read, and we can ask whether those reads come with the same guarantee.
There are multiple ways in which multi-copy atomicity \emph{could} be violated here:
\begin{enumerate}
    \item If a translation-table-walk reads from a write, must another core's translation-table-walk that reads the same entry read-from that same write or something newer?
    \item If a load reads a translation table entry directly, must another core's translation-table-walk that reads that location read-from that write or something newer?
    \item If a translation-table-walk reads from a write, must another core which loads that entry read-from that write or something newer?
\end{enumerate}

We tackle each in turn.

Note that questions about multi-copy atomicity are only interesting under assumptions about coherence,
and due to the lack of data$\rightarrow$translation coherence in the non-TLB-miss case (c.f. \TEST{CoRT+dsb-isb}),
all the tests below start from an invalid state to avoid those uninteresting cases.

\mynewpage
\TESTPARAGRAPH{CoWTf.inv+po-ctrl-isb+po}
forbid?

Can another core see a write propagate to its translation-table-walk
`before' the writer thread's own translation-table-walker does?

In this test Thread~0 writes a new valid descriptor which Thread~1 uses in
a translation-table-walk before sending a message back to Thread~0;
if Thread~0 sees that message can a later translation-table-walk still see the old invalid entry?

\testfig{}

\mynewpage
\myenlargethispage{10\baselineskip}
\begin{TESTGROUP}{WRC.TRTf.inv}
\TESTPARAGRAPH{WRC.TRTf.inv+dsb+dsb-isb}
forbid

In this WRC-shaped test,
Thread~0 writes a new mapping
before Thread~1 translates using that entry.
Thread~1 then messages Thread~2, which then tries to translate the same location that Thread~1 did.
If Thread~2 were allowed to see a translation fault, then this would be a kind of \emph{non-}multi-copy atomic behaviour.

Multi-copy atomicity would forbid this,
requiring that translation-table walks are multi-copy atomic reads of flat memory.

\testfig{}

\mynewpage
\TESTPARAGRAPH{WRC.TRTf.inv+addrs}
forbid

The address-dependency into the instruction which yields a translation fault
ensures that the translation-table-walk happens after the address is determined,
and so the fault is ordered-after the read which its address depends on.

\testfig{}

\mynewpage
\TESTPARAGRAPH{WRC.TRTf.inv+dsbs}
allow (unless ETS, then forbid)

\testfig{}

\mynewpage
\TESTPARAGRAPH{WRC.TRTf.inv+dmbs}
allow (unless ETS, then forbid)

\testfig{}

\mynewpage
\TESTPARAGRAPH{WRC.TRTf.inv+pos}
allow

\testfig{}
\end{TESTGROUP}

\mynewpage

\TESTPARAGRAPH{WRC.TTTf.inv+addrs}
forbid

\testfig{}

\TESTPARAGRAPH{WRC.TTTf.inv+data+addr}
forbid

\testfig{}

\mynewpage
\begin{TESTGROUP}{WRC.RRTf.inv}
\TESTPARAGRAPH{WRC.RRTf.inv+dsb+dsb-isb}
forbid

This test is like the previous one,
except, instead of loading the unmapped location in Thread~1
(therefore reading from the entry during the translation-table walk),
it loads the entry itself directly.

Arm also forbid this test,
as the load in Thread~1 will ensure that the write is visible to the translation-table-walk
that would be performed if Thread~2 had a translation-fault.

\testfig{}

\mynewpage
\TESTPARAGRAPH{WRC.RRTf.inv+dsb+ctrl-isb}
forbid

\testfig{}

\mynewpage
\TESTPARAGRAPH{WRC.RRTf.inv+dsbs}
allow (unless ETS, then forbid)

\testfig{}

\mynewpage
\TESTPARAGRAPH{WRC.RRTf.inv+dmbs}
allow (unless ETS, then forbid)

\testfig{}

\mynewpage
\TESTPARAGRAPH{WRC.RRTf.inv+pos}
allow

\testfig{}

\mynewpage
\TESTPARAGRAPH{WRC.RRTf.inv+addrs}
forbid

\testfig{}

\end{TESTGROUP}

\mynewpage
\myenlargethispage{\baselineskip}
\begin{TESTGROUP}{WRC.TfRR}
\TESTPARAGRAPH{WRC.TfRR+dsb-isb+dsb}
forbid

This is like the previous tests,
except that, here, Thread~1 loads the unmapped address and suffers a translation-fault.
Can Thread~2 load the entry and read-from a write before the break?

As before, Arm forbid this,
enforcing a kind of multi-copy atomicity for translation-table-walks.

\testfig{}

\TESTPARAGRAPH{WRC.TfRR+ctrl-isb+dsb}
forbid

\testfig{}

\mynewpage
\TESTPARAGRAPH{WRC.TfRR+dsbs}
forbid

\testfig{}

\mynewpage
\TESTPARAGRAPH{WRC.TfRR+po+dsb}
forbid

Note that the translation-fault to store ordering is preserved
(See \TEST{CoTfW.inv+po}).

\testfig{}

\mynewpage
\TESTPARAGRAPH{WRC.TfRR+pos}
allow

\testfig{}
\end{TESTGROUP}

\mynewpage
\Mysubsection{Multi-address-space support with ASIDs}\label{sec:multiprocess}

To support systems software with multiple address spaces,
such as operating systems with many concurrently executing processes,
Arm provide two features that allow the hardware and software to manage the
translation tables for these processes more effectively:
\begin{description}
    \item[\S\ref{sec:multiprocess:subsec:ttbrs}] A \asm{TTBR} (``Translation Table Base Register''), which can be changed to point to a new translation table.
    \item[\S\ref{sec:multiprocess:subsec:asids}] ASIDs (``Address Space Identifiers''), which are used to tag TLB entries with their process/address space, to reduce TLB maintenance burden.
\end{description}

\Mysubsubsection{TTBRs}\label{sec:multiprocess:subsec:ttbrs}

The translation tables are stored in normal memory
in a hierarchical tree structure.
In order for the processor to know where the root of this tree is,
it reads a register called the \asm{TTBR} (or ``Translation-Table Base Register'').
Each \emph{translation regime} has its own base register:
\begin{itemize}
    \item[-] \asm{TTBR0\_EL1}:  for Stage~1 translations in the `low' (positive) portion of the address map, from EL1\&0.
    \item[-] \asm{TTBR1\_EL1}:  for Stage~1 translations in the `high' (negative) portion of the address map, from EL1\&0.
    \item[-] \asm{TTBR0\_EL2}:  for Stage~1 translations from EL2.
    \item[-] \asm{VTTBR\_EL2}:  for Stage~2 translations from accesses from EL1\&0.
\end{itemize}

\mynewpage
\myenlargethispage{10\baselineskip}
\Mysubsubsection{ASIDs}\label{sec:multiprocess:subsec:asids}

\TESTPARAGRAPH{CoWinvTa1.1+dsb-tlbiasidis-dsb-eret}
forbid

In this test, a virtual address is unmapped, and only ASID~\#1 is cleaned;
since the thread uses that ASID, the TLB invalidation affects all translations in that thread,
and so the final outcome is forbidden.

\testfig{}

\mynewpage
\myenlargethispage{10\baselineskip}
\TESTPARAGRAPH{CoWinvTa2.1+dsb-tlbiasidis-dsb-eret}
allow

Same as previous,
but invalidating the `wrong' ASID.

\testfig{}

\mynewpage

\Mysubsection{Additional tests, as-yet unsorted}

\TESTPARAGRAPH{MP.RT.inv+dmb+addr-po-msr-isb}
forbid

This exercises the ctxob edges via \\
{\tt speculative ; [MSR]} + \\
 {\tt [ContextChange] ; po ; [CSE]} + \\
 {\tt [CSE] ; instruction-order}

\testfig{}

\mynewpage
\TESTPARAGRAPH{MP.RT.inv+dmb+addr-po-isb}
allow

\testfig{}

\mynewpage
\TESTPARAGRAPH{MP.TR.inv+dmb+msr}
allow

\testfig{}

\mynewpage
\TESTPARAGRAPH{MP.TR.inv+dmb+isb}
allow

\testfig{}

\mynewpage
\TESTPARAGRAPH{MP.TR.inv+dmb+msr-isb}
forbid

\testfig{}

\mynewpage
\TESTPARAGRAPH{SwitchTable.different-asid+eret}
forbid

If the page the page table root is changed, together with an unused ASID,
then a new translation has to read-from a page table entry from the new page table.

\testfig{}

\mynewpage
\TESTPARAGRAPH{SwitchTable.same-asid+eret}
allow

If the page the page table root is changed, together with an already-used ASID,
then a new translation can read-from a page table entry from the old page table.

\testfig{}

\mynewpage
\TESTPARAGRAPH{WDS+po-dsb-tlbiipa-dsb-tlbiis-dsb-eret}

forbid

Write to Different Stages.

If two stages of translation are updated,
then both stages need to be invalidated, in the right order,
to be guaranteed to see the new mapping.

\testfig{}

\mynewpage
\TESTPARAGRAPH{WDS+po-dsb-tlbiipa-dsb-eret}

allow

\testfig{}

\mynewpage
\TESTPARAGRAPH{WDS+dsb-tlbiipa-dsb-eret-po}

\ \todojp{???}

\testfig{}

\mynewpage
\TESTPARAGRAPH{WDS+dsb-tlbiipa-dsb-po-eret}

\ \todojp{???}

\testfig{}

\mynewpage
\TESTPARAGRAPH{WBM+dsb-tlbiis-dsb}

forbid

\testfig{}

\mynewpage
\TESTPARAGRAPH{WBM+dsb-tlbiis-dsb-[dmb]-dmb}

forbid

\testfig{}

Write to Broken Mapping.

This is a variant of RBS, but with a write to the invalidated location, instead of the read from it.

\mynewpage
\TESTPARAGRAPH{CoWTf.inv.EL2+dsb-tlbiipa-dsb-tlbiis-dsb-eret}
forbid

\testfig{}

\ifwiptests
\mynewpage
\Mysubsection{Stage~2}\label{sec:s2bbm}

In general,
Stage~2 translations share all the same relaxed behaviours as the Stage~1 translations do,
as, microarchitecturally, the same translation-table-walk and TLB caching is permitted for Stage~2 translations
as was permitted for Stage~1 translations.

Instead of repeating all the previous tests but at EL2 with Stage~2 translations,
we focus only on the differences here: break-before-make (\S\ref{sec:stage2:subsec:bbm})
and the set of pKVM-derived tests (\S\ref{sec:pkvm},\ref{app:pkvm:tests}).

\newpage
\myenlargethispage{10\baselineskip}
\subsection{Break-before-make}\label{sec:stage2:subsec:bbm}

\begin{TESTGROUP}{MP.RT.EL2}

\TESTPARAGRAPH{MP.RT.EL2+dsb-tlbiipais-dsb+dsb-isb}
allow

Invalidating \emph{just} the intermediate-physical mappings is not enough,
as an old direct virtual-to-physical mapping could have been cached
and used by the translation, even though the underlying intermediate-physical-to-physical mappings has been invalided.

\testfignumbered{3}

\mynewpage
\myenlargethispage{10\baselineskip}
\TESTPARAGRAPH{MP.RT.EL2+dsb-tlbiipais-dsb-tlbiis-dsb+dsb-isb}
forbid

On the other hand, invalidating \emph{both} the intermediate-physical mapping
and the virtual-intermediate physical mapping,
in that order,
is enough.
Because the intermediate mapping is invalidated first,
a spontaneous translation after the invalidate cannot repopulate the TLB with a virtual-to-physical mapping.

\testfig{}

\mynewpage
\myenlargethispage{10\baselineskip}
\TESTPARAGRAPH{MP.RT.EL2+dsb-tlbiis-dsb-tlbiipais-dsb+dsb-isb}
allow

Here, the two TLB invalidations happen,
but in the incorrect order.
Since the intermediate mapping may still exist after the Stage~1 invalidations,
a spontaneous translation at that point could repopulate the TLB with a virtual-to-physical mapping.

\testfig{}

\end{TESTGROUP}

\fi

\ifwiptests
\mynewpage
\myenlargethispage{10\baselineskip}
\Mysubsection{Access Permissions}\label{sec:ap}

\todojp{This shape has a break-before-make violation, because the contents of the old OA and of the new OA differ. Not clear how to fix.}

\begin{TESTGROUP}{CoTT.ro}
\TESTPARAGRAPH{CoTT.ro+po} allow

In this test, Thread~0 overwrites one read-only mapping
with another to a different physical location.
If Thread~1 translates using that new mapping,
it does not mean that later translations cannot see the old mapping anymore.

\testfig{}

\mynewpage
\TESTPARAGRAPH{CoTT.ro+dmb} allow

\testfig{}

\mynewpage
\TESTPARAGRAPH{CoTT.ro+dsb-isb} allow

\testfig{}
\end{TESTGROUP}
\fi

\newpage

\newpage
\myappendix{Full models}{app:models}%

Here we include the entire strong and weak model
(Note that the main relations are the same as those in \S\ref{sec:models}
but may be presented differently).

\subsection{Common}

The models both include a common core,
which defines the shared set of derived relations and events.

\subsubsection{Barriers}

First we define a hierarchy of barriers,
so that an ordering \cc{[e1] ; dmb ; [e2]}
implies \cc{[e1] ; dsb ; [e2]}.

{
  \noindent{}
  \lstinputlisting[language=cat]{systems-isla-tests/models/barriers.cat}
}

\subsubsection{Common Core}

Here we define all the relations common to both models,
as they are given to isla-axiomatic:

{
  \noindent{}
  \lstinputlisting[language=cat]{systems-isla-tests/models/aarch64_mmu_common.cat}
}

\subsection{Strong Model}
\begin{figure}[H]
  \scalebox{0.8}{
  \begin{minipage}{1.2\textwidth}
  \lstinputlisting[language=cat]{systems-isla-tests/models/aarch64_mmu_strong.cat}
  \end{minipage}}
  \caption{Strong Model}
\end{figure}

\subsubsection{Translation Faults}

To correctly implement ETS and TLBI-completion ordering for translation-faults
we produce \herd{fault} events which exist \herd{iio}-after the \herd{T} event
which causes them.

To get the correct ETS ordering,
we add \herd{FromR} and \herd{FromW} sets for faults that originate from load or store instructions.

Then we duplicate edges from \herd{ob} which end in a \herd{[R]} or \herd{[W]}
to also end in \herd{[Fault \& FromR]} or \herd{[Fault \& FromW]},
and those get included in the \herd{obfault} relation which is included in \herd{ob}.
To model ETS, we can then simply add \herd{[R|W] ; obfault ; [fault] ; iio\^{}-1 ; [T\_f]} to \herd{ob}.

See the relevant part of the Arm ARM (D.5.10.2 --- \textbf{Ordering and completion of TLB maintenance instructions})
\begin{lstlisting}[breaklines=true,language=none]
  A TLB maintenance operation without the nXS qualifier generated by a TLB maintenance instruction is
  finished for a PE when:
  - All memory accesses generated by that PE using in-scope old translation information are complete.
  - All memory accesses RWx generated by that PE are complete.

  RWx is the set of all memory accesses generated by instructions for that PE that appear in program order
  before an instruction I1 executed by that PE where all of the following apply:
  - I1 uses the in-scope old translation information.
  - The use of the in-scope old translation information generates a synchronous Data Abort.
  - If I1 did not generate an abort from use of the in-scope old translation information, I1 would generate
      a memory access that RWx would be locally-ordered-before.
\end{lstlisting}

\subsubsection{Edges justification}

We justify existence of edges in \herd{ob} with the following tests:

\paragraph{obs}
\begin{itemize}
  \item \herd{[W] ; trfe ; [T]}  (\TEST{CoTRpte.inv+dsb-isb})
  \item \herd{[W] ; trfe ; [T\_f]} (\TEST{CoTfRpte+dsb-isb})
\end{itemize}

\paragraph{tob}
\begin{itemize}
  \item \herd{[T\_f] ; tfr ; [W]} (\TEST{CoRpteTf.inv+dsb-isb})
  \item \herd{[T] ; iio ; [R|W] ; po ; [W]} (see also \herd{speculative ; [W]}, \TEST{S.T+dmb+po})
  \item \herd{speculative ; trfi} (\TEST{MP.RT.inv+dmb+ctrl-trfi})
\end{itemize}

\paragraph{obtlbi\_translate}
\begin{itemize}
  \item \herd{tcache1} (\TEST{MP.RT.EL1+dsb-tlbiis-dsb+dsb-isb}) \todojp{added .EL1 to test name}
  \item \herd{tcache2 \& (same-trans ; [T \& Stage1] ; trf\^-1 ; wco\^-1)} (\TEST{WDS+dsb-tlbiipa-dsb-po-eret})
  \item \herd{(tcache2 ; wco? ; [TLBI S2]) \& (same-trans ; [T \& Stage1] ; maybe\_TLB\_cached)} (\TEST{WDS+po-dsb-tlbiipa-dsb-tlbiis-dsb-eret})
\end{itemize}

\paragraph{obtlbi}
\begin{itemize}
  \item \herd{obtlbi\_translate} (see previous)
  \item \herd{[R] ; iio\^-1 ; (obtlbi\_translate \& ext)} (\TEST{RBS+dsb-tlbiis-dsb})
  \item \herd{[W] ; iio\^-1 ; (obtlbi\_translate \& ext)} (\TEST{WBM+dsb-tlbiis-dsb})
  \item \herd{[Fault] ; iio\^-1 ; (obtlbi\_translate \& ext)} (see \herd{obfault} for relevant tests)
\end{itemize}

\paragraph{ctxob}

These edges are over-approximate compared to the assumed real semantics of context-synchronizing events.

\begin{itemize}
  \item \herd{speculative ; [MSR]}   (\TEST{MP.RT.inv+dmb+addr-po-msr})
  \item \herd{[CSE] ; instruction-order}   (MP+dmb+ctrl-isb)
  \item \herd{[ContextChange] ; po ; [CSE]} (\TEST{SwitchTable.different-asid+eret})
  \item \herd{speculative ; [ISB]} (MP+dmb+ctrl-isb, MP+dmb+addr-po-isb, \TEST{MP.TR.inv+dmb+isb})
\end{itemize}

\paragraph{obfault}

\begin{itemize}
  \item \herd{[R] ; data ; [Fault\_W]}  (\TEST{S.RTf.inv.EL1+dsb-tlbiis-dsb+data})
  \item \herd{speculative ; [Fault\_W]}  (\TEST{S.RTf.inv.EL1+dsb-tlbiis-dsb+ctrl})
  \item \herd{[dmb] ; po ; [Fault\_R]}  (\TEST{MP.RTf.inv.EL1+dsb-tlbiis-dsb+dmb})
  \item \herd{[dmb] ; po ; [Fault\_W]}  (\TEST{S.RTf.inv.EL1+dsb-tlbiis-dsb+dmb})
  \item \herd{[A|Q] ; po ; [Fault]}  (\TEST{MP.RTf.inv.EL1+dsb-tlbiis-dsb+poap}, \TEST{MP.RTf.inv.EL1+dsb-tlbiis-dsb+poqp}, \TEST{S.RTf.inv.EL1+dsb-tlbiis-dsb+poap}, \TEST{S.RTf.inv.EL1+dsb-tlbiis-dsb+poqp}%
  \item \herd{[R|W] ; po ; [Fault\_L]}  (\TEST{S.RTf.inv.EL1+dsb-tlbiis-dsb+popl}, \TEST{R.Tf.inv.EL1+dsb-tlbiis-dsb+popl})
\end{itemize}

\paragraph{obETS}

\begin{itemize}
  \item \herd{obfault ; [Fault] ; iio\^-1 ; [T\_f]} (\TEST{MP.RTf.inv+dmbs}, \TEST{MP.RTf.inv+dmb+addr})%
\end{itemize}

\paragraph{dob}

These edges ensure that self-satisfying cycles cannot be constructed,
which could otherwise lead to new translation table entries out of thin air.

\begin{itemize}
  \item \herd{addr ; trfi} (\TEST{LB+addr-trfis})
  \item \herd{data ; trfi} (\TEST{LB+data-trfis})
\end{itemize}

Note the lack of \herd{ctrl ; trfi} here does not imply weakness,
as \herd{tob} already covers this.

\paragraph{axioms}
\begin{itemize}
  \item \herd{acyclic (po-pa | trfi)} (\TEST{CoTW1.inv})
\end{itemize}

\clearpage
\subsection{Weak Model}
{
  \lstinputlisting[language=cat]{systems-isla-tests/models/aarch64_mmu_weak.cat}
}

\clearpage
\subsection{Break-before-make detection predicate}
{
  \lstinputlisting[language=smt2]{systems-isla-tests/models/bbm-violation-check.smt2}
}

\newpage
% !TEX root = top.tex
%\documentclass[11pt,a4paper]{article}

%\usepackage[left=1cm,right=2cm,top=0.5cm,bottom=0.5cm]{geometry}

%\usepackage{amsthm}
%\usepackage{mathtools}

%\newtheorem{lemma}{Lemma}
%\theoremstyle{definition}
%\spnewtheorem{definition}{Definition}{\bfseries}{\itshape}

%\begin{document}

%\title{Model relations and abstraction condition}
%\author{}
%\date{}
%\maketitle

\myappendix{Relationships between models}{app:modelsrelation}

In this appendix, we illustrate how our models support a relatively simple abstraction to higher-level code.
We prove three theorems:
that for static injectively-mapped address spaces, any execution which is consistent in the model with translation, erasing translation events gives an execution that is consistent in the original Armv8-A model without translation
(Theorem~\ref{thm:vaabstraction});
that for any consistent execution in the original Armv8-A model, there is a corresponding consistent execution in our extended model with translations
(Theorem~\ref{thm:vaantiabstraction});
and that our weak model is a sound over-approximation of our full translation model, i.e.,
that for any consistent execution in our full translation model, that same execution is consistent in the weak translation model
(Theorem~\ref{thm:strongrefinesweak}).

%\subsection{Relation between weak and strong models}

%\subsubsection{The weak model refines the strong model}
%
%\begin{lemma}
%  All \verb!ob! edges of the weak model are also \verb!ob! edges of the strong model.
%\end{lemma}
%\begin{proof}
%
%\end{proof}

\subsection{Soundness of the weak model}

\begin{theorem}
\label{thm:strongrefinesweak}
The weak model is a sound over-approximation of the strong model.
\end{theorem}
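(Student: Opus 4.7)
The plan is to proceed axiom-by-axiom: for each clause in the weak model's Cat source (App.~\ref{app:models}) I would exhibit it as a consequence of the axioms of the strong model given in Figure~\ref{fig:strongmodel}. Since the two models share the common core (barrier hierarchy, event sets, and the base Armv8-A \herd{internal}/\herd{atomic}/\herd{external} skeleton), the work reduces to showing that each \emph{additional} weak-model axiom is implied by the additional structure of the strong model --- principally the extra \herd{ob} components (\herd{obs}, \herd{tob}, \herd{obtlbi}, \herd{obtlbi\_translate}, \herd{ctxob}, \herd{dob}) and the \herd{translation-internal} axiom.

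First I would bucket the weak axioms into: (i)~coherence over physical memory; (ii)~no self-satisfying or speculatively-fed translations; (iii)~break/TLBI sequences that must hide writes from future translations (the ``\cc{brk2}''-style axioms); (iv)~make-after-break ordering; and (v)~context-change and translation-regime switches. Group (i) follows immediately, since both models use the same \herd{loc} (physical-address based) and the strong model's \herd{internal} axiom is unchanged. Group (ii) reduces to the strong model's inclusion of \herd{speculative;trfi} and \herd{(addr|data);trfi} in \herd{ob}, together with the \herd{translation-internal} axiom forbidding \herd{po-pa|trfi} cycles; a direct diagram chase closes each forbidden pattern.

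The bulk of the effort is group (iii). For each weak axiom of the form \lstinline[language=cat]|empty (...;[W_invalid];ob;[dsb.sy];po;[TLBI];po;[dsb.sy];ob;[CSE];instruction-order;[T]) & trf|, the plan is to read the \cc{trf} back-edge as closing a cycle in the strong model's \herd{ob}: the prefix provides exactly the \herd{wco} witness that places the invalidated write before the TLBI, which means the \herd{TLB\_barrier} set defined in Figure~\ref{fig:tlbaffectscat} contains the prior coherence-predecessor write, so the final \herd{T} lies in the \herd{tcache1}/\herd{tcache2} component of \herd{obtlbi\_translate}, yielding a cycle with the closing \herd{trf} that contradicts the strong model's \herd{external} acyclicity. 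The Stage~1 and Stage~2 sub-cases must be split because \herd{obtlbi\_translate} treats them asymmetrically (Stage~2 translations require the Stage~1 invalidation clause to fire), and the cross-thread variants in \S\ref{forbidden-past} need the \herd{ob} edges inside the weak axiom to be recognised as \herd{trfe}/\herd{wco} chains in the strong one.

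The main obstacle will be groups (iv) and (v): the weak model's make-after-break and context-switching axioms conflate what the strong model splits across \herd{obtlbi}, \herd{ctxob}, \herd{obfault}, and (when applicable) \herd{obETS}. For these I would argue directly from \herd{ctxob}'s clauses --- \herd{speculative;[MSR]}, \herd{[ContextChange];po;[CSE]}, and \herd{[CSE];instruction-order} --- plus \herd{obs}'s inclusion of \herd{trfe}, to re-derive each weak constraint. A subtle point is that the weak model deliberately omits ETS, so the strong model's \herd{obETS} edges are never required; this asymmetry is helpful rather than harmful, because any strong-consistent execution a~fortiori satisfies the non-ETS portion of \herd{ob}. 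Finally I would cross-validate the metatheoretic argument against the Isla results reported in \S\ref{isla}, which already checks that no test is allowed by the strong model but forbidden by the weak one, providing a concrete sanity check on the case analysis.
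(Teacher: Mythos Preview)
Your overall strategy---go axiom-by-axiom and show each weak-model emptiness constraint is implied by an \herd{ob}-cycle in the strong model---is exactly the paper's approach, and your treatment of the break axioms (your group~(iii)) is close to the paper's. Two points need attention.

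First, your phrase ``the prefix provides exactly the \herd{wco} witness'' understates the structure. \herd{wco} is part of the candidate, not something you derive; the paper's proof for each \texttt{brk}/\texttt{bbm} axiom \emph{case-splits} on the orientation of \herd{wco} between the relevant write and the TLBI. One orientation contradicts the \herd{ob}-chain you already have (since \herd{wco} contributes to \herd{ob}); the other is where the \herd{tlb\_barriered}/\herd{obtlbi\_translate} machinery fires. Your plan may implicitly do this, but it should be explicit, and in the Stage~2 cases (\texttt{brk1s2}, \texttt{brk2s2}) there is a \emph{nested} case-split on the \herd{wco} position of the write feeding the Stage~1 translate, which selects between the two Stage~2 clauses of \herd{obtlbi\_translate}.

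Second, your plan for group~(iv) will not work as stated. For the make-after-break axioms (\texttt{bbm}, \texttt{bbms2}) the translate event is a \emph{faulting} \herd{T\_f}, and the cycle is closed not by \herd{ctxob} or by \herd{trfe} in \herd{obs}, but by the \herd{[T\_f];tfr} edge in \herd{tob}: the translate reads the old invalid write, so \herd{tfr} points forward to the new valid write, which is already \herd{ob}-before the \herd{CSE} and hence the translate. The \herd{trfe} edge you propose points the wrong way (write to translate) and will not close anything. The paper also, in one \herd{wco} sub-case of \texttt{bbm}/\texttt{bbms2}, does not construct a cycle at all but instead observes that the candidate exhibits a break-before-make violation, and since both models share the same BBM-detection predicate the execution is rejected in both; your proposal never mentions this fallback, and you should account for it even if you believe the \herd{tfr} cycle alone suffices.
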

\begin{proof}
The definition of \verb!ob! in the strong model contains all the clauses of the weak model (and more).
\\
The extra axioms of the weak model are subsumed by those of the strong model:

We label events with identifiers, write given edges solid, and derived edges dashed.
\begin{itemize}
\item bbm
\begin{verbatim}
([a:is_IW | W_invalid] ; co ; [b:W_valid]
 ; ob ; [c:CSE] ; instruction-order ; [d:Tf & Stage1])
&
(ob ; [e:dsb] ; po
 ; (([f:TLBI-S1] ; po ; [g:dsb] ; ob ; [h:CSE] ; instruction-order)
    & tlb_affects))
& trf & loc
\end{verbatim}

\newcommand\xxevent[3]{
\node (#1) at (#3) {$\sf{}{#1}{:}{#2}$};
}

\begin{center}
\begin{tikzpicture}
\xxevent{a}{init | W_i}{0,0}
\node (b) at (3,0) {$\sf{}b{:}W_v$};
\node (c) at (6,0) {$\sf{}c{:}CSE$};
\node (d) at (9,0) {$\sf{}d{:}Tf_{S1}$};
\node (e) at (1,-2) {$\sf{}e{:}dsbsy$};
\node (f) at (3,-2) {$\sf{}f{:}TLBI_{S1}$};
\node (g) at (5,-2) {$\sf{}g{:}dsbsy$};
\node (h) at (7,-2) {$\sf{}h{:}CSE$};
\draw[->,blue] (a) -- node [above] {\sf{}co} (b);
\draw[->,green] (b) -- node [above] {\sf{}ob} (c);
\draw[->] (c) -- node [above] {\sf{}io} (d);
\draw[->,green] (a) -- node [right] {\sf{}ob} (e);
\draw[->] (e) -- node [above] {\sf{}po} (f);
\draw[->] (f) -- node [above] {\sf{}po} (g);
\draw[->,green] (g) -- node [above] {\sf{}ob} (h);
\draw[->] (h) -- node [right] {\sf{}io} (d);
\draw[->] (f) -- node [right] {\sf{}tlb\_affects} (d);
\draw[->,red] (a) to [bend left] node [above] {\sf{}trf} (d);
\draw [-{Straight Barb[left]},blue] (f.100) -- (b.260);
\draw [-{Straight Barb[left]},blue] (b.280) -- node [left] {\sf{}wco} (f.80);
\draw[->,green,dashed] (f) to [bend left] node [above] {\sf{}bob} (g);
\draw[->,green,dashed] (c) to [bend left] node [above] {\sf{}bob} (d);
\draw[->,green,dashed] (h) -- node [above] {\sf{}bob} (d);
\draw[->,green,dashed] (d) to [bend right] node [above] {\sf{}tfr} (b);
\draw[->,green,dashed] (f) to [bend right=50] node [above] {\sf{}ob} (d);
\end{tikzpicture}
\end{center}

From \verb![f] ; po ; [g]!,
we have \verb![f] ; bob ; [g]!,
and from \verb![h] ; io ; [d]!,
we have \verb![h] ; bob ; [d]!
Therefore, together with \verb![g] ; ob ; [h]!,
we have
\verb![f] ; ob ; [d]!.
\\
From \verb![c] ; po ; [d]!,
we \verb![c] ; bob ; [d]!.

\verb!wco! relates \verb!b! and \verb!f!.
\\
\begin{itemize}
\item If \verb![f] ; wco ; [b]!, then this is not a BBM violation.

\begin{center}
\begin{tikzpicture}
\xxevent{a}{W_i}{0,0}
\node (b) at (3,0) {$\sf{}b{:}W_v$};
\node (c) at (6,0) {$\sf{}c{:}CSE$};
\node (d) at (9,0) {$\sf{}d{:}Tf_{S1}$};
\node (e) at (1,-2) {$\sf{}e{:}dsbsy$};
\node (f) at (3,-2) {$\sf{}f{:}TLBI_{S1}$};
\node (g) at (5,-2) {$\sf{}g{:}dsbsy$};
\node (h) at (7,-2) {$\sf{}h{:}CSE$};
\draw[->,blue] (a) -- node [above] {\sf{}co} (b);
\draw[->,green] (b) -- node [above] {\sf{}ob} (c);
\draw[->] (c) -- node [above] {\sf{}io} (d);
\draw[->,green] (a) -- node [right] {\sf{}ob} (e);
\draw[->] (e) -- node [above] {\sf{}po} (f);
\draw[->] (f) -- node [above] {\sf{}po} (g);
\draw[->,green] (g) -- node [above] {\sf{}ob} (h);
\draw[->] (h) -- node [right] {\sf{}io} (d);
\draw[->] (f) -- node [right] {\sf{}tlb\_affects} (d);
\draw[->,red] (a) to [bend left] node [above] {\sf{}trf} (d);
\draw [->,blue] (f) -- node [left] {\sf{}wco} (b);
\draw[->,green,dashed] (f) to [bend left] node [above] {\sf{}bob} (g);
\draw[->,green,dashed] (c) to [bend left] node [above] {\sf{}bob} (d);
\draw[->,green,dashed] (h) -- node [above] {\sf{}bob} (d);
\draw[->,green,dashed] (d) to [bend right] node [above] {\sf{}tfr} (b);
\draw[->,green,dashed] (f) to [bend right=50] node [above] {\sf{}ob} (d);
\end{tikzpicture}
\end{center}

From \verb![d] ; tfr ; [b]!, we have \verb![d] ; ob ; [b]!.

Therefore, we have a cycle in \verb!ob!.

\item If \verb![b] ; wco ; [f]!, then:

\begin{center}
\begin{tikzpicture}
\xxevent{a}{W_i}{0,0}
\node (b) at (3,0) {$\sf{}b{:}W_v$};
\node (c) at (6,0) {$\sf{}c{:}CSE$};
\node (d) at (9,0) {$\sf{}d{:}Tf_{S1}$};
\node (e) at (1,-2) {$\sf{}e{:}dsbsy$};
\node (f) at (3,-2) {$\sf{}f{:}TLBI_{S1}$};
\node (g) at (5,-2) {$\sf{}g{:}dsbsy$};
\node (h) at (7,-2) {$\sf{}h{:}CSE$};
\draw[->,blue] (a) -- node [above] {\sf{}co} (b);
\draw[->,green] (b) -- node [above] {\sf{}ob} (c);
\draw[->] (c) -- node [above] {\sf{}io} (d);
\draw[->,green] (a) -- node [right] {\sf{}ob} (e);
\draw[->] (e) -- node [above] {\sf{}po} (f);
\draw[->] (f) -- node [above] {\sf{}po} (g);
\draw[->,green] (g) -- node [above] {\sf{}ob} (h);
\draw[->] (h) -- node [right] {\sf{}io} (d);
\draw[->] (f) -- node [right] {\sf{}tlb\_affects} (d);
\draw[->,red] (a) to [bend left] node [above] {\sf{}trf} (d);
\draw [->,blue] (b) -- node [left] {\sf{}wco} (f);
\draw[->,green,dashed] (f) to [bend left] node [above] {\sf{}bob} (g);
\draw[->,green,dashed] (c) to [bend left] node [above] {\sf{}bob} (d);
\draw[->,green,dashed] (h) -- node [above] {\sf{}bob} (d);
\draw[->,green,dashed] (d) to [bend right] node [above] {\sf{}tfr} (b);
\draw[->,green,dashed] (d) -- node [left] {\sf{}tlb\_barriered $\subseteq$ ob} (f);
\draw[->,green,dashed] (f) to [bend right=50] node [above] {\sf{}ob} (d);
\end{tikzpicture}
\end{center}

\begin{itemize}
\item If there is another TLBI preventing a BBM violation involving \verb!a! and \verb!b!,
then there is another subgraph of the execution that corresponds to the previous case.
\item If not, then there is also a BBM violation in the strong model,
because they have the same execution candidate,
and use the same BBM check.
\end{itemize}

\end{itemize}

\item brk1
\begin{verbatim}
([a:is_IW | W] ; co ; [b:W_invalid] ; ob ; [c:dsbsy] ; po
 ; ([d:TLBI-S1] ; po ; [e:dsbsy] ; ob ; [f:M]; iio^-1; [g:T])
   & tlb_affects & ext)
& trf & loc
\end{verbatim}

\begin{center}
\begin{tikzpicture}
\xxevent{a}{W}{0,0}
\xxevent{b}{W_i}{2,0}
\xxevent{c}{dsbsy}{4,0}
\xxevent{d}{TLBI_{S1}}{6,0}
\xxevent{e}{dsbsy}{8,0}
\xxevent{f}{M}{10,0}
\xxevent{g}{T}{9,-2}
\draw[->,blue] (a) -- node [above] {\sf{}co} (b);
\draw[->,green] (b) -- node [above] {\sf{}ob} (c);
\draw[->] (c) -- node [above] {\sf{}po} (d);
\draw[->] (d) -- node [above] {\sf{}po} (e);
\draw[->,green] (e) -- node [above] {\sf{}ob} (f);
\draw[->] (g) -- node [left] {\sf{}iio} (f);
\draw[->] (d) -- node [above] {\sf{}tlb\_affects,ext} (g);
\draw[->,red] (a) -- node [below] {\sf{}trf,loc} (g);
\draw[->,green,dashed] (g) -- node [above] {\sf{}tfr} (b);
\draw[->,green,dashed] (g) -- node [left] {\sf{}tlb\_barriered} (d);
\draw[->,green,dashed] (f) to [bend right] node [above] {\sf{}ob} (d);
\draw[->,green,dashed] (d) to [bend right=10] node [above] {\sf{}ob} (f);
\draw [-{Straight Barb[left]},blue] (d.100) to [bend right=18] node [above] {\sf{}wco} (b.80);
\draw [-{Straight Barb[left]},blue] (b.100) to [bend left=20] (d.80);
\draw[->,green,dashed] (b) to [bend left=10] node [above] {\sf{}ob} (d);
\draw[->,green,dashed] (d) to [bend left=10] node [above] {\sf{}ob} (e);
\end{tikzpicture}
\end{center}
From \verb![a] ; trf ; [g]! and \verb![a] ; co ; [b]!, we have \verb![g] ; tfr ; [b]!.
\\
\verb!wco! relates \verb!b! and \verb!d!.
\\
\begin{itemize}
\item
If \verb![d] ; wco ; [b]!.
\\
then we have \verb![d] ; ob ; [b]!
\\
Moreover, from \verb![c] ; po ; [d]!, in the weak model, we have \verb![c] ; ob ; [d]!.
\\
From \verb![b] ; ob ; [c]! and \verb![c] ; ob ; [d]!, we have \verb![b] ; ob ; [d]!.
\\
Therefore, we have a cycle in \verb!ob!.

\item
If \verb![b] ; wco ; [d]!.
\\
From \verb![g] ; tfr ; [b]!, \verb![b] ; wco ; [d]!, and \verb![d] ; tlb-affects^-1 ; [g]!, we have \verb![g] ; tlb_barriered ; [d]!,
\\
and therefore \verb![g] ; obtlbi_translate ; [d]!.
\\
From \verb![g] ; obtlbi_translate ; [d]!,
\verb![g] ; ext ; [d]!,
and
\verb![f] ; iio^-1 ; [g]!,
we have \verb![f] ; obtlbi_translate ; [d]!,
\\
and therefore \verb![f] ; ob ; [d]!.
\\
Moreover,
from \verb![d] ; po ; [e]!,
we have \verb![d] ; bob ; [e]!, and therefore \verb![d] ; ob ; [e]!.
\\
From \verb![d] ; ob ; [e]!
and \verb![e] ; ob ; [f]!,
we have \verb![d] ; ob ; [f]!.
\\
Therefore, we have a cycle in \verb!ob!.
\end{itemize}

\item brk2
\begin{verbatim}
([a:is_IW | W] ; co ; [b:W_invalid] ; ob ; [c:dsbsy] ; po
 ; ([d:TLBI-S1] ; po ; [e:dsbsy] ; ob ; [f:CSE] ; instruction-order ; [g:T])
    & tlb-affects)
& trf & loc
\end{verbatim}

\begin{center}
\begin{tikzpicture}
\xxevent{a}{W}{0,0}
\xxevent{b}{W_i}{2,0}
\xxevent{c}{dsbsy}{4,0}
\xxevent{d}{TLBI_{S1}}{6,0}
\xxevent{e}{dsbsy}{8,0}
\xxevent{f}{CSE}{10,0}
\xxevent{g}{T}{12,0}
\draw[->,blue] (a) -- node [above] {\sf{}co} (b);
\draw[->,green] (b) -- node [above] {\sf{}ob} (c);
\draw[->] (c) -- node [above] {\sf{}po} (d);
\draw[->] (d) -- node [above] {\sf{}po} (e);
\draw[->,green] (e) -- node [above] {\sf{}ob} (f);
\draw[->] (f) -- node [above] {\sf{}io} (g);
\draw[->] (d) to [bend right] node [above] {\sf{}tlb\_affects} (g);
\draw[->,red] (a) to [bend right] node [below] {\sf{}trf,loc} (g);
\draw[->,green,dashed] (c) to [bend left] node [above] {\sf{}bob} (d);
\draw[->,green,dashed] (d) to [bend left] node [above] {\sf{}bob} (e);
\draw[->,green,dashed] (f) to [bend left] node [above] {\sf{}bob} (g);
\draw[->,green,dashed] (g) to [bend left] node [left] {\sf{}tlb\_barriered $\subseteq$ ob} (d);
\end{tikzpicture}
\end{center}

From \verb![c] ; po ; [d]!, we have \verb![c] ; bob ; [d]!,
and therefore \verb![c] ; ob ; [d]!.
\\
Therefore, as before, by examination of \verb!wco!, we have \verb![g] ; ob ; [d]!.
\\
From \verb![d] ; po ; [e]!, we have \verb![d] ; bob ; [e]!,
and therefore \verb![d] ; ob ; [e]!.
\\
Moreover, from \verb![f] ; instruction-order ; [g]!,
we have \verb![f] ; bob ; [g]!.
\\
Therefore, we have a cycle in \verb!ob!.

\item
bbms2
\begin{verbatim}
([a:is_IW | W_invalid] ; co ; [b:W_valid] ; ob
 ; [c:CSE] ; instruction-order ; [d:Tf & Stage2])
& (ob ; [e:dsbsy] ; po
   ; ([f:TLBI-S2] ; po ; [g:dsbsy] ; po ;
      ; ([h:TLBI-S1] ; po ; [i:dsbsy] ; ob ; [j:CSE]; instruction-order; [k:T])
         & tlb-affects
      ; iio) & tlb-affects)
& trf & loc
\end{verbatim}

\begin{center}
\begin{tikzpicture}
\xxevent{a}{init|W_i}{0,0}
\xxevent{b}{W_v}{5,0}
\xxevent{c}{CSE}{10,0}
\xxevent{d}{Tf_{S2}}{15,0}
\xxevent{e}{dsbsy}{1,-2}
\xxevent{f}{TLBI_{S2}}{3,-2}
\xxevent{g}{dsbsy}{5,-2}
\xxevent{h}{TLBI_{S1}}{7,-2}
\xxevent{i}{dsbsy}{9,-2}
\xxevent{j}{CSE}{11,-2}
\xxevent{k}{T}{13,-2}
\draw[->,blue] (a) -- node [above] {\sf{}co} (b);
\draw[->,green] (b) -- node [above] {\sf{}ob} (c);
\draw[->] (c) -- node [above] {\sf{}io} (d);
\draw[->,green] (a) -- node [left] {\sf{}ob} (e);
\draw[->] (e) -- node [above] {\sf{}po} (f);
\draw[->] (f) -- node [above] {\sf{}po} (g);
\draw[->] (g) -- node [above] {\sf{}po} (h);
\draw[->] (h) -- node [above] {\sf{}po} (i);
\draw[->,green] (i) -- node [above] {\sf{}ob} (j);
\draw[->] (j) -- node [above] {\sf{}io} (k);
\draw[->] (k) -- node [above] {\sf{}iio} (d);
\draw[->] (f) -- node [above] {\sf{}tlb\_affects} (d);
\draw[->] (h) to [bend right] node [above] {\sf{}tlb\_affects} (k);
\draw[->,red] (a) to [bend left] node [below] {\sf{}trf,loc} (d);
\draw[->,green,dashed] (c) to [bend left] node [above] {\sf{}bob} (d);
\draw[->,green,dashed] (j) to [bend left] node [above] {\sf{}bob} (k);
\draw[->,green,dashed] (j) to [bend left] node [above] {\sf{}bob} (d);
\draw [-{Straight Barb[left]},blue] (h.100) -- node [right] {\sf{}wco} (b.260);
\draw [-{Straight Barb[left]},blue] (b.280) -- (h.80);
\draw[->,green,dashed] (e) to [bend left] node [above] {\sf{}bob} (f);
\draw[->,green,dashed] (f) to [bend left] node [above] {\sf{}bob} (h);
\draw[->,green,dashed] (h) to [bend left] node [above] {\sf{}bob} (i);
\draw[->,green,dashed] (j) -- node [above] {\sf{}bob} (d);
\draw[->,green,dashed] (d) to [bend right] node [above] {\sf{}tfr} (b);
\draw[->,cyan,dashed] (d) -- node [left] {\sf{}tlb\_barriered $\subseteq$ ob} (f);
\end{tikzpicture}
\end{center}
From \verb![c] ; instruction-order ; [d]!,
we have \verb![c] ; bob ; [d]!,
and therefore \verb![c] ; ob ; [d]!.
\\
From \verb![j] ; instruction-order ; [k]!,
we have \verb![j] ; bob ; [k]!,
and therefore \verb![j] ; ob ; [k]!.
\\
From \verb![e] ; po ; [f]!,
we have \verb![e] ; bob ; [f]!.
\\
From \verb![f] ; po [g:dsbsy] ; po ; [h]!,
we have \verb![f] ; bob ; [h]!.
\\
From \verb![h] ; po ; [i]!,
we have \verb![h] ; bob ; [i]!.
\\
From \verb![a] ; trf ; [d]! and \verb![a] ; co ; [b]!, we have \verb![d] ; tfr ; [b]!.
\\
From \verb![j] ; instruction-order ; [k] ; iio ; [d]!,
we have \verb![j] ; bob ; [d]!.
\\
\verb!wco! relates \verb!b! and \verb!h!.
\begin{itemize}
\item
Assume \verb![h] ; wco ; [b]!. Then this is not a BBM violation.
\\
From \verb![d] ; tfr ; [b]!, we have \verb![d] ; ob ; [b]!.

\item
Assume \verb![b] ; wco ; [h]!.
\begin{itemize}
\item If there is another TLBI preventing a BBM violation involving \verb!a! and \verb!b!,
then there is another subgraph of the execution that corresponds to the previous case.
\item If not, then there is also a BBM violation in the strong model,
because they have the same execution candidate,
and use the same BBM check.
\end{itemize}
\end{itemize}

\item brk1s2
\begin{verbatim}
([a:is_IW | W] ; co ; [b:W_invalid] ; ob ; [c:dsbsy] ; po
 ; ([d:TLBI-S2] ; po ; [e:dsbsy] ; po ;
    ([f:TLBI-S1] ; po ; [g:dsbsy] ; ob ; [h:M]; iio^-1; [i:T & Stage1])
     & tlb-affects & ext
; iio ; [j:T & Stage2]) & tlb-affects & ext
) & trf & loc
\end{verbatim}

\begin{center}
\scalebox{0.9}{
\begin{tikzpicture}
\xxevent{a}{W_v}{0,0}
\xxevent{b}{W_i}{2,0}
\xxevent{c}{dsbsy}{4,0}
\xxevent{d}{TLBI_{S2}}{6,0}
\xxevent{e}{dsbsy}{8,0}
\xxevent{f}{TLBI_{S1}}{10,0}
\xxevent{g}{dsbsy}{12,0}
\xxevent{h}{M}{14,0}
\xxevent{i}{T_{S1}}{13,-2}
\xxevent{j}{T_{S2}}{15,-4}
\xxevent{k}{W}{10,2}
\draw[->,blue] (a) -- node [above] {\sf{}co} (b);
\draw[->,green] (b) -- node [above] {\sf{}ob} (c);
\draw[->] (c) -- node [above] {\sf{}po} (d);
\draw[->] (d) -- node [above] {\sf{}po} (e);
\draw[->] (e) -- node [above] {\sf{}po} (f);
\draw[->] (f) -- node [above] {\sf{}po} (g);
\draw[->,green] (g) -- node [above] {\sf{}ob} (h);
\draw[->] (i) -- node [left] {\sf{}iio} (h);
\draw[->] (i) -- node [left] {\sf{}iio} (j);
\draw[->] (d) -- node [above] {\sf{}tlb\_affects,ext} (j);
\draw[->] (f) -- node [above] {\sf{}tlb\_affects,ext} (i);
\draw[->,red] (a) to [bend right] node [below] {\sf{}trf,loc} (j);
\draw[->,red,dashed] (k) to [bend left] node [right] {\sf{}trf,loc} (i);
\draw[->,green,dashed] (c) to [bend left] node [above] {\sf{}bob} (d);
\draw[->,green,dashed] (d) to [bend left] node [above] {\sf{}bob} (e);
\draw[->,green,dashed] (e) to [bend left] node [above] {\sf{}bob} (f);
\draw[->,green,dashed] (f) to [bend left] node [above] {\sf{}bob} (g);
\draw[->,cyan,dashed] (j) to [bend left] node [above] {\sf{}tlb\_barriered} (d);
\draw [-{Straight Barb[left]},blue] (f.100) -- node [right] {\sf{}wco} (k.260);
\draw [-{Straight Barb[left]},blue] (k.280) -- (f.80);
\end{tikzpicture}
}
\end{center}

Consider \verb!wco!: \\
From \verb![c] ; po ; [d]!, we have \verb![c] ; bob ; [d]!,
and therefore \verb![c] ; ob ; [d]!.
\\
From \verb![d] ; po ; [e]! and \verb![e] ; po ; [f]! we have \verb![d] ; ob ; [f]!.

Therefore \verb!wco! must give us \verb![a] ; wco ; [b] ; wco ; [d] ; wco ; [f]!
otherwise there is a cycle in \verb!ob! and the execution is trivially forbidden.

For \verb![d] ; ob ; [e]!,
\verb![e] ; ob ; [f]!,
and \verb![f] ; ob ; [g]!.

\verb!wco! relates \verb!b! and \verb!d!.
\\
\begin{itemize}
\item
If \verb![d] ; wco ; [b]!, we have a cycle in \verb!ob!.

\item
If \verb![b] ; wco ; [d]!.
\\
From \verb![j] ; tfr ; [b]!, \verb![b] ; wco ; [d]!, and \verb![j] ; tlb-affects^-1 ; [d]!,
we have \verb![j] ; tlb_barriered ; [d]!.
\\
Moreover, there must exist \verb!k! such that \verb![k] ; trf ; [i]!,
and \verb!k! must be related by \verb!wco! to \verb!f!.
\begin{itemize}
\item
If \verb![k] ; wco ; [f]!:

\begin{center}
\scalebox{0.9}{
\begin{tikzpicture}
\xxevent{a}{W_v}{0,0}
\xxevent{b}{W_i}{2,0}
\xxevent{c}{dsbsy}{4,0}
\xxevent{d}{TLBI_{S2}}{6,0}
\xxevent{e}{dsbsy}{8,0}
\xxevent{f}{TLBI_{S1}}{10,0}
\xxevent{g}{dsbsy}{12,0}
\xxevent{h}{M}{14,0}
\xxevent{i}{T_{S1}}{13,-2}
\xxevent{j}{T_{S2}}{15,-4}
\xxevent{k}{W}{10,2}
\draw[->,blue] (a) -- node [above] {\sf{}co} (b);
\draw[->,green] (b) -- node [above] {\sf{}ob} (c);
\draw[->] (c) -- node [above] {\sf{}po} (d);
\draw[->] (d) -- node [above] {\sf{}po} (e);
\draw[->] (e) -- node [above] {\sf{}po} (f);
\draw[->] (f) -- node [above] {\sf{}po} (g);
\draw[->,green] (g) -- node [above] {\sf{}ob} (h);
\draw[->] (i) -- node [left] {\sf{}iio} (h);
\draw[->] (i) -- node [left] {\sf{}iio} (j);
\draw[->] (d) -- node [above] {\sf{}tlb\_affects,ext} (j);
\draw[->] (f) -- node [above] {\sf{}tlb\_affects,ext} (i);
\draw[->,red] (a) to [bend right] node [below] {\sf{}trf,loc} (j);
\draw[->,red,dashed] (k) to [bend left] node [right] {\sf{}trf,loc} (i);
\draw[->,green,dashed] (c) to [bend left] node [above] {\sf{}bob} (d);
\draw[->,green,dashed] (d) to [bend left] node [above] {\sf{}bob} (e);
\draw[->,green,dashed] (e) to [bend left] node [above] {\sf{}bob} (f);
\draw[->,green,dashed] (f) to [bend left] node [above] {\sf{}bob} (g);
\draw[->,cyan,dashed] (j) to [bend left] node [above] {\sf{}tlb\_barriered} (d);
\draw[->,cyan,dashed] (i) to [bend left] node [above] {\sf{}maybe\_TLB\_cached} (f);
\draw [->,blue] (k.280) -- node [right] {\sf{}wco} (f.80);
%\draw[->,cyan,dashed] (i) -- node [left] {\sf{}obtlbi\_translate $\subseteq$ ob} (f);
\draw[->,green,dashed] (h) to [bend right] node [above] {\sf{}obtlbi} (f);
\end{tikzpicture}
}
\end{center}

Then we have \verb![j] ; tlb_barriered ; [d]!
and \verb![i] ; maybe_TLB_cached ; [f]!,
then from third clause of \verb!obltbi_translate!
we get \verb![j] ; obtlbi_translate ; [f]!.
From the second clause of \verb!obtlbi! we have \verb![h] ; obtlbi ; [f]!,
and so \verb![h] ; ob ; [f]!,
and therefore we have a cycle in \verb!ob!.
\item
If \verb![f] ; wco ; [k]!:

\begin{center}
\scalebox{0.9}{
\begin{tikzpicture}
\xxevent{a}{W_v}{0,0}
\xxevent{b}{W_i}{2,0}
\xxevent{c}{dsbsy}{4,0}
\xxevent{d}{TLBI_{S2}}{6,0}
\xxevent{e}{dsbsy}{8,0}
\xxevent{f}{TLBI_{S1}}{10,0}
\xxevent{g}{dsbsy}{12,0}
\xxevent{h}{M}{14,0}
\xxevent{i}{T_{S1}}{13,-2}
\xxevent{j}{T_{S2}}{15,-4}
\xxevent{k}{W}{10,2}
\draw[->,blue] (a) -- node [above] {\sf{}co} (b);
\draw[->,green] (b) -- node [above] {\sf{}ob} (c);
\draw[->] (c) -- node [above] {\sf{}po} (d);
\draw[->] (d) -- node [above] {\sf{}po} (e);
\draw[->] (e) -- node [above] {\sf{}po} (f);
\draw[->] (f) -- node [above] {\sf{}po} (g);
\draw[->,green] (g) -- node [above] {\sf{}ob} (h);
\draw[->] (i) -- node [left] {\sf{}iio} (h);
\draw[->] (i) -- node [left] {\sf{}iio} (j);
\draw[->] (d) -- node [above] {\sf{}tlb\_affects,ext} (j);
\draw[->] (f) -- node [above] {\sf{}tlb\_affects,ext} (i);
\draw[->,red] (a) to [bend right] node [below] {\sf{}trf,loc} (j);
\draw[->,red,dashed] (k) to [bend left] node [right] {\sf{}trf,loc} (i);
\draw[->,green,dashed] (c) to [bend left] node [above] {\sf{}bob} (d);
\draw[->,green,dashed] (d) to [bend left] node [above] {\sf{}bob} (e);
\draw[->,green,dashed] (e) to [bend left] node [above] {\sf{}bob} (f);
\draw[->,green,dashed] (f) to [bend left] node [above] {\sf{}bob} (g);
\draw[->,cyan,dashed] (j) to [bend left] node [above] {\sf{}tlb\_barriered} (d);
\draw [->,blue] (f.100) -- node [right] {\sf{}wco} (k.260);
\end{tikzpicture}
}
\end{center}

Then \verb![d] ; wco ; [k]!,
and \verb![j] ; tlb_barriered ; [d]!.
Then from the second clause of \verb!obtlbi_translate!
we have \verb![j] ; obtlbi_translate ; [d]!
From the second clause of \verb!obtlbi! we have \verb![h] ; obtlbi ; [d]!,
Which implies \verb![h] ; ob ; [d]!,
but \verb![d] ; ob ; [h]! by \verb!bob! so we have a cycle in \verb!ob!.
\end{itemize}

\end{itemize}

\item brk2s2
\begin{verbatim}
([a:is_IW | W] ; co ; [b:W_invalid] ; ob ; [c:dsbsy] ; po
 ; ([d:TLBI-S2] ; po ; [e:dsbsy] ; po ;
    ([f:TLBI-S1] ; po ; [g:dsbsy] ; ob ; [h:CSE] ; instruction-order; [i:T & Stage1])
     & tlb-affects
    ; iio ; [j:T & Stage2]) & tlb-affects)
& trf & loc
\end{verbatim}

\begin{center}
\scalebox{0.7}{
\begin{tikzpicture}
\xxevent{a}{W_v}{0,0}
\xxevent{b}{W_i}{2,0}
\xxevent{c}{dsbsy}{4,0}
\xxevent{d}{TLBI_{S2}}{6,0}
\xxevent{e}{dsbsy}{8,0}
\xxevent{f}{TLBI_{S1}}{10,0}
\xxevent{g}{dsbsy}{12,0}
\xxevent{h}{CSE}{14,0}
\xxevent{i}{T_{S1}}{16,0}
\xxevent{j}{T_{S2}}{18,0}
\xxevent{k}{W}{10,2}
\draw[->,blue] (a) -- node [above] {\sf{}co} (b);
\draw[->,green] (b) -- node [above] {\sf{}ob} (c);
\draw[->] (c) -- node [above] {\sf{}po} (d);
\draw[->] (d) -- node [above] {\sf{}po} (e);
\draw[->] (e) -- node [above] {\sf{}po} (f);
\draw[->] (f) -- node [above] {\sf{}po} (g);
\draw[->,green] (g) -- node [above] {\sf{}ob} (h);
\draw[->] (h) -- node [above] {\sf{}io} (i);
\draw[->] (i) -- node [above] {\sf{}iio} (j);
\draw[->] (d) to [bend right] node [above] {\sf{}tlb\_affects,ext} (j);
\draw[->] (f) to [bend right] node [above] {\sf{}tlb\_affects,ext} (i);
\draw[->,red] (a) to [bend right] node [below] {\sf{}trf,loc} (j);
\draw[->,red,dashed] (k) to [bend left] node [above] {\sf{}trf,loc} (i);
\draw[->,green,dashed] (c) to [bend left] node [above] {\sf{}bob} (d);
\draw[->,green,dashed] (d) to [bend left] node [above] {\sf{}bob} (e);
\draw[->,green,dashed] (e) to [bend left] node [above] {\sf{}bob} (f);
\draw[->,green,dashed] (f) to [bend left] node [above] {\sf{}bob} (g);
\draw[->,green,dashed] (h) to [bend left] node [above] {\sf{}bob} (i);
\draw[->,green,dashed] (h) to [bend left] node [above] {\sf{}bob} (j);
\draw[->,cyan,dashed] (j) to [bend left] node [below] {\sf{}tlb\_barriered} (d);
\draw[->,green,dashed] (j) to [bend right] node [above] {\sf{}obtlbi\_translate $\subseteq$ ob} (f);
\draw [-{Straight Barb[left]},blue] (f.100) -- node [right] {\sf{}wco} (k.260);
\draw [-{Straight Barb[left]},blue] (k.280) -- (f.80);
\end{tikzpicture}
}
\end{center}

Similar to the previous case,
we have \verb![c] ; ob ; [d]!,
\verb![d] ; ob ; [e]!,
\verb![e] ; ob ; [f]!,
\verb![f] ; ob ; [g]!,
\verb![h] ; ob ; [i]!,
and \verb![h] ; ob ; [j]!.

\verb!wco! relates \verb!b! and \verb!d!.
\\
\begin{itemize}
\item
If \verb![d] ; wco ; [b]!, we have a cycle in \verb!ob!.

\item
If \verb![b] ; wco ; [d]!.
\\
From \verb![j] ; tfr ; [b]!, \verb![b] ; wco ; [d]!, and \verb![j] ; tlb-affects^-1 ; [d]!,
we have \verb![j] ; tlb_barriered ; [d]!.
\\
Moreover, there must exist \verb!k! such that \verb![k] ; trf ; [i]!,
and \verb!k! must be related by \verb!wco! to \verb!f!.
\begin{itemize}
\item
If \verb![k] ; wco ; [f]!:

\begin{center}
\scalebox{0.7}{
\begin{tikzpicture}
\xxevent{a}{W_v}{0,0}
\xxevent{b}{W_i}{2,0}
\xxevent{c}{dsbsy}{4,0}
\xxevent{d}{TLBI_{S2}}{6,0}
\xxevent{e}{dsbsy}{8,0}
\xxevent{f}{TLBI_{S1}}{10,0}
\xxevent{g}{dsbsy}{12,0}
\xxevent{h}{CSE}{14,0}
\xxevent{i}{T_{S1}}{16,0}
\xxevent{j}{T_{S2}}{18,0}
\xxevent{k}{W}{10,2}
\draw[->,blue] (a) -- node [above] {\sf{}co} (b);
\draw[->,green] (b) -- node [above] {\sf{}ob} (c);
\draw[->] (c) -- node [above] {\sf{}po} (d);
\draw[->] (d) -- node [above] {\sf{}po} (e);
\draw[->] (e) -- node [above] {\sf{}po} (f);
\draw[->] (f) -- node [above] {\sf{}po} (g);
\draw[->,green] (g) -- node [above] {\sf{}ob} (h);
\draw[->] (h) -- node [above] {\sf{}io} (i);
\draw[->] (i) -- node [above] {\sf{}iio} (j);
\draw[->] (d) to [bend right] node [above] {\sf{}tlb\_affects,ext} (j);
\draw[->] (f) to [bend right] node [above] {\sf{}tlb\_affects,ext} (i);
\draw[->,red] (a) to [bend right] node [below] {\sf{}trf,loc} (j);
\draw[->,red,dashed] (k) to [bend left] node [above] {\sf{}trf,loc} (i);
\draw[->,green,dashed] (c) to [bend left] node [above] {\sf{}bob} (d);
\draw[->,green,dashed] (d) to [bend left] node [above] {\sf{}bob} (e);
\draw[->,green,dashed] (e) to [bend left] node [above] {\sf{}bob} (f);
\draw[->,green,dashed] (f) to [bend left] node [above] {\sf{}bob} (g);
\draw[->,green,dashed] (h) to [bend left] node [above] {\sf{}bob} (i);
\draw[->,green,dashed] (h) to [bend left] node [above] {\sf{}bob} (j);
\draw[->,cyan,dashed] (j) to [bend left] node [below] {\sf{}tlb\_barriered} (d);
\draw[->,green,dashed] (j) to [bend right] node [above] {\sf{}obtlbi\_translate $\subseteq$ ob} (f);
\draw [->,blue] (k.280) -- node [right] {wco} (f.80);
\draw[->,green,dashed] (j) to [bend right=50] node [above] {\sf{}obtlbi\_translate $\subseteq$ ob} (d);
\end{tikzpicture}
}
\end{center}

then we have \verb![i] ; maybe_TLB_cached ; [f]!,
and therefore \verb![j] ; obtlbi_translate ; [f]!,
and therefore \verb![j] ; ob ; [f]!
so there is a cycle in \verb!ob!.

\item
If \verb![f] ; wco ; [k]!:

\begin{center}
\scalebox{0.7}{
\begin{tikzpicture}
\xxevent{a}{W_v}{0,0}
\xxevent{b}{W_i}{2,0}
\xxevent{c}{dsbsy}{4,0}
\xxevent{d}{TLBI_{S2}}{6,0}
\xxevent{e}{dsbsy}{8,0}
\xxevent{f}{TLBI_{S1}}{10,0}
\xxevent{g}{dsbsy}{12,0}
\xxevent{h}{CSE}{14,0}
\xxevent{i}{T_{S1}}{16,0}
\xxevent{j}{T_{S2}}{18,0}
\xxevent{k}{W}{10,2}
\draw[->,blue] (a) -- node [above] {\sf{}co} (b);
\draw[->,green] (b) -- node [above] {\sf{}ob} (c);
\draw[->] (c) -- node [above] {\sf{}po} (d);
\draw[->] (d) -- node [above] {\sf{}po} (e);
\draw[->] (e) -- node [above] {\sf{}po} (f);
\draw[->] (f) -- node [above] {\sf{}po} (g);
\draw[->,green] (g) -- node [above] {\sf{}ob} (h);
\draw[->] (h) -- node [above] {\sf{}io} (i);
\draw[->] (i) -- node [above] {\sf{}iio} (j);
\draw[->] (d) to [bend right] node [above] {\sf{}tlb\_affects,ext} (j);
\draw[->] (f) to [bend right] node [above] {\sf{}tlb\_affects,ext} (i);
\draw[->,red] (a) to [bend right] node [below] {\sf{}trf,loc} (j);
\draw[->,red,dashed] (k) to [bend left] node [above] {\sf{}trf,loc} (i);
\draw[->,green,dashed] (c) to [bend left] node [above] {\sf{}bob} (d);
\draw[->,green,dashed] (d) to [bend left] node [above] {\sf{}bob} (e);
\draw[->,green,dashed] (e) to [bend left] node [above] {\sf{}bob} (f);
\draw[->,green,dashed] (f) to [bend left] node [above] {\sf{}bob} (g);
\draw[->,green,dashed] (h) to [bend left] node [above] {\sf{}bob} (i);
\draw[->,green,dashed] (h) to [bend left] node [above] {\sf{}bob} (j);
\draw[->,cyan,dashed] (j) to [bend left] node [below] {\sf{}tlb\_barriered} (d);
\draw[->,green,dashed] (j) to [bend right] node [above] {\sf{}obtlbi\_translate $\subseteq$ ob} (d);
\draw [->,blue] (f.100) -- node [right] {\sf{}wco} (k.260);
\end{tikzpicture}
}
\end{center}

then we have \verb![j] ; obtlbi_translate ; [d]!,
and therefore \verb![j] ; ob ; [d]!,
so there is a cycle in \verb!ob!.
\end{itemize}
\end{itemize}
\end{itemize}
\end{proof}

%\newpage

\subsection{Virtual address abstraction and anti-abstraction}

We consider a simple case when the virtual address abstraction ought to hold:
under some conditions, the model with translation and the original model without
translations coincide. Here, we only consider the consistency of the
pre-executions, but not how these pre-executions arise.

\subsubsection{Abstraction}

\begin{definition}[VA abstraction subcondition]
  \verb!G! satisfies the \emph{VA abstraction subcondition} when it has no
  page-table-affecting instructions: no TLBI, no context-changing operations
  (for example via writing to registers, for example via MSR TTBR), etc.
\end{definition}

\begin{definition}[VA abstraction condition]
  \verb!Gtr! satisfies the \emph{VA abstraction condition} when it satisfies the
  VA abstraction subcondition, and has a static injective page table.
\end{definition}

\begin{theorem}[VA abstraction]
\label{thm:vaabstraction}
  For all
  (\verb!Gtr : concrete execution!) \\
  if \verb!Gtr! is consistent wrt. the model with translation \\
  and respects the VA abstraction condition, then \\
  let \verb!Gabs! = erase \verb!Gtr! in \\
  \verb!Gabs! is consistent wrt. the model without translation.
\end{theorem}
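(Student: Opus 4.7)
The plan is to proceed by explicit construction of the erasure and a case analysis on the axioms of the base model. First I would fix the erasure operation precisely: given $\mathit{Gtr}$, delete every $\mathbf{T}$, $\mathbf{TLBI}$, $\mathbf{MSR}$, $\mathbf{TE}$, $\mathbf{ERET}$, and $\mathbf{DSB}$-style translation-only event (the cache-maintenance and context-changing sets are empty by the VA abstraction subcondition, so only the $\mathbf{T}$ events really need removing), and use the fixed injective page-table to identify each surviving $\mathbf{R}/\mathbf{W}$ event, which in $\mathit{Gtr}$ carries a physical address, with the unique virtual-address access it originated from. Under static injectivity the functions $\mathtt{loc}$, $\mathtt{same\text{-}va}$, $\mathtt{same\text{-}pa}$ all coincide on surviving events, so $\mathtt{co}$ and $\mathtt{rf}$ restricted from $\mathit{Gtr}$ become exactly the base $\mathtt{co}$ and $\mathtt{rf}$ over virtual addresses in $\mathit{Gabs}$.

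Second I would transfer the syntactic relations. The events $\mathbf{R}$, $\mathbf{W}$, $\mathbf{DMB}$, $\mathbf{ISB}$, and the program-order relation $\mathtt{po}$ are simply inherited. The base dependency relations $\mathtt{addr}$, $\mathtt{data}$, $\mathtt{ctrl}$ in $\mathit{Gabs}$ are recovered from $\mathit{Gtr}$ by composing the extended $\mathtt{tdata}$ (or $\mathtt{data}$, $\mathtt{ctrl}$) with the $\mathtt{iio}$-path from the relevant $\mathbf{T}$-events to the explicit access: because the page-table is static and injective, for every surviving $\mathbf{R}/\mathbf{W}$ the chain of translation-reads is deterministic and reads-from a fixed set of initialisation writes, which do not interfere with dependency structure. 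Consequently the base derived relations $\mathtt{obs}$, $\mathtt{dob}$, $\mathtt{bob}$ on erased events are precisely the restrictions to surviving events of their extended counterparts.

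Third I would verify the three base axioms for $\mathit{Gabs}$ by reduction. The $\mathtt{atomic}$ axiom transfers directly since exclusives and their immediate structure are preserved. The $\mathtt{internal}$ (coherence) axiom follows from the extended $\mathtt{internal}$ axiom together with the coincidence of $\mathtt{loc}$ on surviving events. For the $\mathtt{external}$ axiom, assume for contradiction that the base $\mathtt{ob}$ on $\mathit{Gabs}$ contains a cycle $e_1 \to e_2 \to \cdots \to e_1$; each edge $e_i \to e_{i+1}$ lifts to a non-empty extended $\mathtt{ob}$-path in $\mathit{Gtr}$ between the same surviving endpoints (possibly routed through $\mathbf{T}$ events via $\mathtt{iio}$, which is itself contained in the extended $\mathtt{ob}$), yielding a cycle in the extended $\mathtt{ob}$ and contradicting consistency of $\mathit{Gtr}$.

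The main obstacle I expect is the last step, specifically showing that the \emph{extra} components of extended $\mathtt{ob}$ do not create spurious cycles that might need reverse-projection, and, more subtly, that any $\mathtt{trfi}$ forwarding in $\mathit{Gtr}$ is already covered by the base $\mathtt{ob}$ on $\mathit{Gabs}$. Under the VA abstraction subcondition the sets $\mathbf{TLBI}$, $\mathbf{ContextChange}$, and $\mathbf{T}_f$ are empty, so $\mathtt{obtlbi}$, $\mathtt{ctxob}$, $\mathtt{obETS}$, and the $[\mathbf{T}_f]\,;\mathtt{tfr}$ clause all collapse to the empty relation; and since the page-table is static, every $\mathtt{trf}$-edge between translation-reads and the fixed initial page-table writes is trivially ordered. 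What remains is to check that the $\mathtt{speculative}\,;\mathtt{trfi}$ and $(\mathtt{addr}\mid\mathtt{data})\,;\mathtt{trfi}$ clauses of $\mathtt{tob}$ and $\mathtt{dob}$ either correspond to existing base-$\mathtt{ob}$ edges (because the $\mathbf{T}$ target is $\mathtt{iio}$-before a surviving access preserving the dependency shape) or vanish entirely; confirming this requires a small but careful case split on the structure of $\mathtt{iio}$ fragments within each instruction.
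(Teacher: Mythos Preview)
Your core argument is correct and matches the paper's. The paper's proof is essentially two lines: it assumes the derived \texttt{addr} of the translation model coincides with the primitive \texttt{addr} of the base model (this is taken as part of the definition of erasure, not proved), and then observes that every clause defining base-\texttt{ob} appears verbatim among the clauses of the extended model. Hence base-\texttt{ob} $\subseteq$ extended-\texttt{ob} on the surviving events, and a cycle in the former is literally a cycle in the latter --- no ``routing through $\mathbf{T}$ events via $\mathtt{iio}$'' is needed, because the same $\mathbf{R}/\mathbf{W}$-to-$\mathbf{R}/\mathbf{W}$ edge is already present in the extended model. Your more explicit treatment of the erasure and of the transfer of \texttt{loc}, \texttt{co}, \texttt{rf}, \texttt{internal}, \texttt{atomic} is fine but not different in substance.

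Your final paragraph, however, is doing work for the wrong theorem. For this direction you only need base-\texttt{ob} $\subseteq$ extended-\texttt{ob}. You do \emph{not} need the extra extended-\texttt{ob} clauses (\texttt{obtlbi}, \texttt{ctxob}, \texttt{obETS}, the $\mathtt{trfi}$ parts of \texttt{tob}/\texttt{dob}) to collapse or to correspond to base-\texttt{ob} edges: those clauses only \emph{add} edges to extended-\texttt{ob}, which can make cycles easier to find there and so cannot obstruct the inclusion. The concerns you raise --- showing those clauses vanish under the VA abstraction condition, and checking that $\mathtt{trfi}$-routed or $\mathtt{iio}$-routed paths project back to base-\texttt{ob} --- are precisely the content of the \emph{anti}-abstraction theorem (consistency of $\mathit{Gabs}$ implies consistency of the constructed $\mathit{Gtr}$), and that is where the paper carries out exactly the case analysis you sketch. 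For the present theorem they are unnecessary.
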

\begin{proof}
  First, the builtin \verb!addr! of the abstract model is assumed to coincide with
  the derived \verb!addr! of the concrete model by the erasure. Showing that the two
  definitions of pre-executions do relate in this way is outside of our scope.
  Given that the definitions \verb!addr! coincide, the definitions of all the other
  derived relations of the abstract model, including \verb!ob! in the translation
  model, are syntactically supersets of their definition in the concrete model,
  so a cycle in \verb!ob! in the abstract model is also a cycle in \verb!ob! in the
  concrete model.
\end{proof}

\subsubsection{Anti-abstraction}

For this direction, we need to be able to put the translation table somewhere.

\paragraph{Step 1: Building the candidate execution in the translation model}

\begin{definition}[translation extension condition]
  The \emph{translation extension condition} is the data of \\
  (\verb!Gabs : execution!) \\
  such that \verb!Gabs! is consistent wrt. the model without translation \\
  and has no TLBI, and no MSR TTBR \\
  and \\
  (\verb!va_space : va_address -> bool!) \\
  such that all the memory accesses of \verb!Gabs! are in \verb!va_space! \\
  and \\
  (\verb!pt_pa_space : pa_address -> bool!) \\
  (\verb!pt_initial_state : pa_address -> option (list byte)!), \\
  such that the domains of \verb!pt_pa_space! and \verb!pt_initial_state! coincide \\
  and \\
  (\verb!tr_ctxt : translation_context!), \\
  such that \verb!id_map_lifted va_space! and \verb!pt_pa_space! are disjoint address spaces \\
  and \\
  (\verb!translate : translation_function!), \\
  such that translating \verb!abstract_va_space! translate-reads from within \verb!pt_pa_space! and gives the injective map.
\end{definition}

\begin{definition}[translation extension]
  Given the translation extension condition, the \emph{translation extension} \verb!Gtr! of
  \verb!Gabs! is constructed by:
  \begin{itemize}
  \item adding all the initial writes for the page tables,
  \item adding all the translate reads obtained by running the \verb!translate! function
    with the \verb!tr_ctxt!,
  \item adding the translate reads in \verb!iio! between the fetch and the explicit event,
  \item adding \verb!tdata! to match \verb!addr!,
  \item adding \verb!trf! from the corresponding initial writes to the translates.
  \end{itemize}
\end{definition}

\begin{definition}[VA anti abstraction condition]
  \verb!Gtr! satisfies the \emph{VA anti-abstraction condition} when it is derived from a
  consistent execution which satisfies the VA abstraction subcondition by the
  translation extension.
\end{definition}

\begin{lemma}[VA abstraction condition for translation extension]
  If \verb!Gtr! satisfies the \emph{VA anti-abstraction condition}, then \verb!Gtr! satisfies the
  VA abstraction condition.
\end{lemma}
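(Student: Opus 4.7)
The plan is to unfold both definitions and verify the two conjuncts of the VA abstraction condition directly from the data packaged into the VA anti-abstraction condition. Recall that satisfying the anti-abstraction condition provides a source execution \verb!Gabs! that is consistent and itself satisfies the VA abstraction subcondition (no TLBI, no MSR TTBR, no other context-changing operations), together with disjoint address spaces \verb!id_map_lifted va_space! and \verb!pt_pa_space!, and a \verb!translate! function whose output on \verb!va_space! is an injective map that translate-reads only from within \verb!pt_pa_space!. The translation extension then enriches \verb!Gabs! with initial writes for the page tables, the translate reads generated by running \verb!translate!, the intra-instruction ordering \verb!iio! between fetch and explicit event, the \verb!tdata! relation matching \verb!addr!, and \verb!trf! edges from the added initial writes.

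First I would establish the subcondition part of the conclusion. The extension construction only adds events of three kinds: translate-reads, initial page-table writes, and their associated relational edges; it introduces no TLBI events, no MSR events to translation-controlling system registers, and no other context-changing operations. Since \verb!Gabs! contains none of these by hypothesis, neither does \verb!Gtr!, so the VA abstraction subcondition lifts directly.

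Then I would argue staticity and injectivity of the page table. Injectivity is essentially immediate from the translation-extension condition: the hypothesis on \verb!translate! states that translating \verb!va_space! yields the injective map, and by construction the added initial writes to \verb!pt_pa_space! encode exactly that map, so the resulting page table is injective. For staticity, it suffices to show that no non-initial event in \verb!Gtr! writes into \verb!pt_pa_space!. The only non-initial writes in \verb!Gtr! are those inherited from \verb!Gabs!, whose virtual addresses lie in \verb!va_space! and whose physical addresses therefore lie in \verb!id_map_lifted va_space!; by the disjointness assumption this space is disjoint from \verb!pt_pa_space!, so none of them touch page-table memory. Combined with the absence of TLBI and of writes to translation-controlling registers (from the subcondition already established), the page-table configuration never changes.

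The main obstacle, and the only step requiring real care, is reconciling the abstract statement ``all memory accesses of \verb!Gabs! are in \verb!va_space!'' with the physical-address bookkeeping used in \verb!Gtr!: one must verify that the translation extension interprets each access event of \verb!Gabs! as an event in \verb!Gtr! whose physical address lies in \verb!id_map_lifted va_space! (rather than, say, aliasing into \verb!pt_pa_space! through some artefact of the lifting). This is a routine but fiddly check against the precise formalisation of \verb!id_map_lifted! and of how events of \verb!Gabs! are re-interpreted inside \verb!Gtr!, and should be isolated as an auxiliary lemma before appealing to disjointness.
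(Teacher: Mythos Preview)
Your proposal is correct and follows the same approach as the paper, which simply observes that the translation extension adds no extra instructions (hence the subcondition is preserved) and that it sets up a static injective page table; you have merely unpacked these two observations in more detail, including the disjointness argument for staticity that the paper leaves implicit.
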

\begin{proof}
  The translation extension does not add any extra instructions, and sets up
  static injective page tables.
\end{proof}

\begin{lemma}[obtlbi-empty]
  If \verb!Gtr! satisfies the VA anti-abstraction condition, then \verb!obtlbi! is empty.
\end{lemma}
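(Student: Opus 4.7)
The plan is to proceed by unfolding the definition of \verb!obtlbi! in the strong model and showing that every clause is vacuous under the anti-abstraction hypothesis, because there are no \verb!TLBI! events in \verb!Gtr!.

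First I would invoke the VA abstraction subcondition inherited by \verb!Gabs!: by definition it contains no \verb!TLBI! instruction (and indeed no page-table-affecting instruction). Then I would inspect the translation extension construction, checking each of its five bullets: it only adds initial writes for the page tables, translate-read events, \verb!iio! edges among those, \verb!tdata! edges, and \verb!trf! edges from initial writes. None of these steps introduces a \verb!TLBI! event; hence the set of \verb!TLBI! events in \verb!Gtr! is the same as in \verb!Gabs!, namely empty.

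Next I would unfold \verb!obtlbi! as defined in Figure~\ref{fig:strongmodel} (and recalled in the justification list of Appendix~\ref{app:models}). Every clause of \verb!obtlbi! factors through \verb!obtlbi_translate!, whose clauses in turn factor through \verb!tcache1!, \verb!tcache2! and \verb!TLB_barrier!; each of these is defined using the event set \verb!TLBI! (and the auxiliary sets \verb!TLBI-S1!, \verb!TLBI-S2!, \verb!TLBI-VA!, etc., all subsets of \verb!TLBI!) and the relation \verb!tlb-affects! (Figure~\ref{fig:tlbaffectscat}), which is itself restricted to pairs whose left endpoint lies in \verb!TLBI!. Thus every pair in \verb!obtlbi! must contain at least one \verb!TLBI! event as an intermediate or endpoint. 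Since no such event exists in \verb!Gtr!, the relation \verb!obtlbi! is empty.

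I do not foresee a substantial obstacle: the argument is essentially a syntactic check that each clause of \verb!obtlbi! mentions \verb!TLBI!, combined with the observation that the extension does not manufacture \verb!TLBI! events. The only mildly delicate point is to make sure the enumeration of clauses is exhaustive (including the Stage~2 clauses of \verb!obtlbi_translate! involving \verb!maybe_TLB_cached!), but each of those also requires a \verb!TLBI! endpoint, so the conclusion follows uniformly.
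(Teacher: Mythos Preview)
Your proposal is correct and matches the paper's own proof: both arguments unfold \verb!obtlbi! clause by clause, observe that every clause (via \verb!tcache1!, \verb!tcache2!, etc.) requires a \verb!TLBI! event, and conclude emptiness from the absence of any \verb!TLBI! in \verb!Gtr! under the anti-abstraction hypothesis.
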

\begin{proof}
  \verb!obtlbi! has
  \begin{itemize}
  \item \verb!obtlbi_translate!
    which has
    \begin{itemize}
    \item \verb!tcache1! \\
      which is
      \verb![T & Stage1] ; tfr ; tseq1!
      \\
      the latter is
      \\
      \verb![W] ; (maybe_TLB_barriered_by_va & ob) ; [TLBI VA]!
      \\
      which requires a TLBI, so it is empty
    \item \verb!tcache2 & ...! \\
      which requires a TLBI, so it is empty
    \item \verb!(tcache2 ; ...) & ...! \\
      which requires a TLBI, so it is empty
   \end{itemize}
  \item \verb![M] ; iio^-1 ; obtlbi_translate!
  \\
    to which the same reasoning applies
    \end{itemize}
\end{proof}

\paragraph{Step 2: Consistency}

\begin{lemma}
  If \verb!Gtr! satisfies the VA anti-abstraction condition, then
  translation-internal is acylic.
\end{lemma}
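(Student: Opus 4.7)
The plan is to unfold the \herd{translation-internal} axiom, which in the strong model is the acyclicity of a relation of the form \herd{po-pa | trfi} (or a slight variant built from per-location program order and intra-thread translation-reads-from), and show that under the VA anti-abstraction condition each of its components is either empty or already acyclic.

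First, I would inspect how the translation extension populates \herd{trf}. By construction, every \herdevent{T} event that the extension adds is linked by \herd{trf} back to one of the freshly inserted \emph{initial writes} for the page tables; no other \herdevent{W} event in $G_{\mathrm{tr}}$ writes to a page-table location, because $G_{\mathrm{abs}}$ satisfies the VA abstraction subcondition (no MSR, no TLBI, no context-changing writes) and the page-table physical address space $\mathtt{pt\_pa\_space}$ is disjoint from $\mathtt{id\_map\_lifted}\ \mathtt{va\_space}$, so no explicit write of $G_{\mathrm{abs}}$ can alias a page-table descriptor. Since initial writes are external to every thread, every such \herd{trf} edge is an external edge, hence \herd{trfi} is empty. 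The analogous argument handles any other intra-thread translation-level read-from that appears in the definition.

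With \herd{trfi} removed, the axiom reduces to acyclicity of \herd{po-pa} (program order restricted to events with the same physical address). This follows immediately from the fact that \herd{po} is a strict partial order per thread and \herd{po-pa} is a sub-relation of \herd{po}; acyclicity is inherited. Combining the two observations gives acyclicity of \herd{po-pa | trfi}, which is the \herd{translation-internal} axiom.

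The main obstacle I anticipate is purely a matter of bookkeeping rather than of mathematical depth: confirming that the concrete definition of \herd{translation-internal} in the strong model of \S\ref{sec:models} does not include any additional edges (for example \herd{iio}-induced orderings between translate events of the same instruction, or \herd{po}-\herd{tfri} compositions) that could survive when the page tables are static. If such edges are present, I would need to argue separately that each of them is either empty under the translation extension (because no non-initial writes target page-table addresses, so \herd{tfr} into non-initial writes is vacuous) or is itself a sub-relation of \herd{po}. Once each clause is dispatched in this way, the union remains a sub-relation of \herd{po} and acyclicity follows.
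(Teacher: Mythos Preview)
Your proposal is correct and follows essentially the same approach as the paper: both arguments rest on the observation that, by construction of the translation extension, every \herd{trf} edge originates at an initial write to page-table memory (there are no non-initial writes to page tables), so the intra-thread component \herd{trfi} is empty and the axiom reduces to acyclicity of \herd{po-pa}$\,\subseteq\,$\herd{po}. The paper phrases this as ``\verb!po-pa; [W]; trf! is empty'' (any \herd{W} reached via \herd{po-pa} is non-initial, hence cannot be the source of a \herd{trf} edge), which is the same content packaged slightly differently; your direct argument that \herd{trfi} is empty is if anything a touch cleaner, and your hedge about possible extra clauses in the axiom is unnecessary here since the axiom is exactly \herd{acyclic (po-pa | trfi)}.
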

\begin{proof}
  \verb!po-pa; [W]; trf! is empty
  \\
  because by the VA anti-abstraction condition there are no non-initial writes to page tables.
\end{proof}

So we only need to show \verb!external! is acyclic.

\begin{lemma}[ob-to-T]
  If \verb!G! satisfies the VA anti-abstraction condition, then, for all $n \geq 1$,
\begin{verbatim}
  imm(ob)^n ; [T] ==
    iio
    | imm(ob)^(n-1) ; trfe
    | imm(ob)^(n-1) ; [T] ; iio ; [T]
    | imm(ob)^(n-1) ; [CSE] ; instruction-order
    | imm(ob)^(n-1) ; po ; [ERET] ; instruction-order ; [T]
\end{verbatim}
\end{lemma}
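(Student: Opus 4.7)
The plan is induction on $n$, with the base case $n=1$ doing the substantive work by case analysis on the clauses of \herd{ob}, and the inductive step following by prepending.

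For the base case, I would unfold \texttt{imm(ob)} into its constituent clauses from the strong model (\herd{obs}, \herd{dob}, \herd{aob}, \herd{bob}, \herd{tob}, \herd{obtlbi}, \herd{ctxob}, \herd{obETS}, \herd{obfault}, and \herd{iio}) and, for each, identify which edges can land on a \herdevent{T} event. The VA anti-abstraction condition together with the already-established Lemma \emph{obtlbi-empty} eliminates or restricts most of these: \herd{obtlbi} vanishes outright; the absence of non-initial writes to page-table memory makes \herd{trfi} empty, which kills the \herd{(addr|data);trfi}, \herd{speculative;trfi}, and related sub-clauses inside \herd{dob} and \herd{tob}; the absence of \herdevent{MSR}s to translation-controlling system registers and of \herdevent{TLBI}s collapses \herd{ctxob} to only its \herdevent{CSE}- and \herdevent{ERET}-based clauses; and \herd{obfault} drops out because \herdevent{Fault} and \herdevent{T} are disjoint event sets. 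What survives is precisely the five listed forms: an \herd{iio} edge into a \herdevent{T}, a \herd{trfe} edge into a \herdevent{T}, an intra-walk \herd{[T];iio;[T]}, a \herd{[CSE];instruction-order} edge, and the \herdevent{ERET}-shaped \herd{po;[ERET];instruction-order;[T]} edge.

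For the inductive step I would decompose $\texttt{imm(ob)}^{n};[T]$ as $\texttt{imm(ob)}^{n-1};(\texttt{imm(ob)};[T])$, apply the base case to the final factor, and distribute $\texttt{imm(ob)}^{n-1}$ over the resulting five-way union. Four of the five cases then carry the $\texttt{imm(ob)}^{n-1}$ prefix as required. The $\subseteq$ direction follows directly; the $\supseteq$ direction comes from each of the five RHS terms being contained in $\texttt{imm(ob)}^{n};[T]$ (the \herd{trfe}, \herd{[T];iio;[T]}, \herd{[CSE];instruction-order}, and \herd{po;[ERET];\dots} terms because the trailing edge is a single \texttt{imm(ob)} step into a \herdevent{T}, and the \herd{iio} term because \herd{iio} is itself contained in \texttt{imm(ob)}).

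The main obstacle is the bookkeeping around the \herd{iio} term, which on the RHS appears without an $\texttt{imm(ob)}^{n-1}$ prefix. The claim requires that any \texttt{imm(ob)}-chain ending in an \herd{iio}-edge to a \herdevent{T} can be collapsed to a single \herd{iio} edge directly to that \herdevent{T}; equivalently, that \texttt{imm(ob)}-chains cannot enter the middle of an instruction-walk except through one of the other four listed patterns. This rests on the precise definition of \herd{iio} as the ASL-defined intra-instruction order together with the fact that, once the anti-abstraction condition has excluded \herd{trfi}, \herd{TLBI}-related edges, and \herd{ContextChange}-based edges, the only way for an \herd{ob} edge to target an event strictly inside another instruction's translation-walk is via \herd{trfe} (handled by its own case) or via \herd{iio} from an event of that same instruction (which the transitive/composition structure of \herd{iio} then absorbs). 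I expect this collapse argument to require the most care, whereas the rest of the proof is routine unfolding.
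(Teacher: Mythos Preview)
Your case analysis is correct and matches the paper's proof, which simply walks through the clauses of \texttt{ob} and eliminates those that cannot end at a \texttt{[T]} under the VA anti-abstraction condition (no faults, no non-initial page-table writes, no \texttt{MSR TTBR}, no \texttt{TLBI}s). The paper does not frame this as an induction on $n$: it just characterises the single last \texttt{imm(ob)} step into a \texttt{[T]}, and the $\texttt{imm(ob)}^{n-1}$ prefix is carried along tacitly. Your inductive decomposition $\texttt{imm(ob)}^{n};[T] = \texttt{imm(ob)}^{n-1};(\texttt{imm(ob)};[T])$ is sound and arrives at the same place, just with slightly more scaffolding than the paper bothers with.

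Your worry about the bare \texttt{iio} term is well spotted but is an artefact of a loosely written statement rather than a genuine proof obligation. The intended reading---confirmed by how the lemma is actually invoked in Lemma~\textit{ob-acyclic-preserved}---is that the last step is \texttt{iio}, i.e.\ the first disjunct should really be $\texttt{imm(ob)}^{n-1};\texttt{iio};[T]$. In that usage the paper immediately composes this \texttt{iio} with an adjacent \texttt{iio} edge to shorten a cycle; there is no claim that the entire prefix collapses. So you do not need (and cannot prove) the ``collapse to a single \texttt{iio}'' argument you sketch; simply carry the prefix through in the \texttt{iio} case exactly as you do for the other four cases, and the lemma goes through.
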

\begin{proof}
\begin{itemize}
\item  The \verb!addr! clause
  \\
  \verb!| tdata ; [T_f]!
  \\
  is empty because there are no translation failures.

\item  \verb!tob! does not contribute: there are no faults, and no non-initial writes to
  page table entries.

\item  The first clause of \verb!ctxob! is empty because there are no \verb!MSR TTBR!.
  The third and fourth are also empty, because they do not end in a \verb![T]!.

\item  Given a static injective mapping, the new \verb!| (addr | data | ctrl) ; trfi!
  clause of \verb!dob! is empty.
\end{itemize}
\end{proof}

\begin{lemma}[no-cycle-ob-to-init]
  If \verb!Gtr! is well-formed and consistent (in either model),
  then there is cycle in \verb!ob! via the initial writes.
\end{lemma}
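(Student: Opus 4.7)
The statement should evidently read ``there is \emph{no} cycle in \herd{ob} via the initial writes'' (matching the lemma name \texttt{no-cycle-ob-to-init}). The plan is to prove the stronger fact that initial writes have no incoming \herd{ob}-edges at all, from which the absence of any \herd{ob}-cycle through an initial write follows immediately (a cycle requires each of its nodes to have at least one incoming edge).

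The strategy is a case analysis over the clauses making up \herd{ob} in the strong model: \herd{obs} (which bundles \herd{rfe}, \herd{coe}, \herd{fre}, \herd{trfe}, and \herd{tfre}), \herd{dob}, \herd{aob}, \herd{bob}, \herd{tob}, \herd{obtlbi}, \herd{ctxob}, and \herd{obfault}. For each clause I would show that an initial-write event cannot appear as its target. The heart of the argument is that initial writes are (i) coherence-minimal per physical location, and (ii) not preceded by anything in program order, since they sit in a distinguished initialisation context rather than inside any thread's instruction stream.

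First I would dispatch \herd{obs}: \herd{rfe} and \herd{trfe} have reads (respectively translation-reads) as targets, so they cannot end at a write; \herd{coe} targets writes but by construction of \herd{co} no write is \herd{co}-before an initial write; \herd{fre} and \herd{tfre} end at \herd{co}-successors of the write read from, which again excludes initial writes by \herd{co}-minimality. Next, \herd{dob}, \herd{aob}, \herd{bob}, and the non-\herd{ob}-closed parts of \herd{tob} are all built from \herd{po}, \herd{addr}, \herd{data}, \herd{ctrl}, \herd{iio}, and similar intra-thread relations whose targets live inside a thread; an initial write, lying in the initialisation context with no \herd{po}-predecessor, cannot be such a target. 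For \herd{obtlbi} and \herd{ctxob}, the candidate targets are \herd{TLBI}s, context-synchronising events, and explicit/translation events following program order from them, none of which are initial writes. The ETS relation \herd{obfault} ends at fault events, which are disjoint from initial writes. Finally, the composed clauses (e.g.\ \herd{(addr|data);trfi}, \herd{[T\_f];tfr}, and the various \herd{obtlbi\_translate} shapes) end in events that are either reads, translation-reads, \herd{CSE}s, or non-initial writes, again excluding initial writes.

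The only mildly delicate point, and the main obstacle, is being exhaustive: the strong model's \herd{ob} has many clauses (especially once \herd{obtlbi\_translate}, \herd{tcache1}, \herd{tcache2}, and the ETS \herd{obfault}-lifted edges are unfolded), and one must check each disjunct rather than reasoning at a coarser level. A clean way to organise the proof is to classify each clause by the event kind of its right-hand endpoint (read, translation-read, \herd{TLBI}, \herd{CSE}, \herd{MSR}, fault, or non-initial write), and then observe in a single line that initial writes lie outside every one of these categories. With this classification the lemma reduces to a table; no deeper invariant of the candidate is needed.
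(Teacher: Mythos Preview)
Your approach is correct and is essentially what the paper does: its proof is just the two lines ``By well-formedness, \verb!wco ; [INIT] = [INIT] ; wco ; [INIT]!, and \verb!wco! is acyclic. By examination of the other edges'', and your case analysis is precisely that examination spelled out. The one thing you do not mention explicitly is \verb!wco!, which the paper singles out because several \herd{ob} clauses (the \herd{obtlbi} family) are defined through it; but your endpoint classification already handles these, since those clauses terminate at \herd{TLBI} events rather than at writes, so nothing is missing.
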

\begin{proof}
  By well-formedness, \verb!wco ; [INIT] = [INIT] ; wco ; [INIT]!, and \verb!wco! is acyclic.
\\
  By examination of the other edges.
\end{proof}

\begin{lemma}[ob-from-T]
  If \verb!Gtr! satisfies the VA anti-abstraction condition, then
\begin{verbatim}
[T] ; imm(ob) ==
  iio
  | [T] ; iio ; [M] ; po ; [W]
\end{verbatim}
\end{lemma}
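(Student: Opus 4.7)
The plan is to enumerate each clause in the definition of \verb!ob! and determine, for each, which edges originating at a \verb!T! event survive under the VA anti-abstraction condition, mirroring the structure of the preceding \emph{ob-to-T} lemma. First I would decompose the problem: since \verb!imm(ob)! is the transitive reduction, it suffices to compute \verb![T] ; ob_step! for each constructor \verb!ob_step! of the union defining \verb!ob! in Figure~\ref{fig:strongmodel}, and then verify that the disjuncts on the right-hand side really are immediate (i.e.~no intermediate \verb!ob!-reachable event splits them).

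The case split is as follows. (i) \verb!obs! cannot start from \verb!T!: \verb!rfe!, \verb!coe!, \verb!fre!, \verb!trfe! all have \verb!W! or \verb!R! sources, and although \verb!tfre! does start at a \verb!T!, the VA anti-abstraction condition forbids non-initial writes to page-table entries, so there is no coherence-later \verb!W! to target. (ii) \verb!dob! and \verb!aob! have \verb!R!/\verb!W! sources and contribute nothing from a \verb![T]!-prefix. (iii) \verb!bob! clauses require barrier/acquire/release events, none of which is a \verb!T!. (iv) In \verb!tob!, the \verb![T_f];tfr;[W]! clause is empty as there are no faults, and the \verb!speculative;trfi! clause has an \verb!R!-source; what remains is exactly \verb![T];iio;[R|W];po;[W]!, yielding the second disjunct. (v) \verb!obtlbi! and \verb!obtlbi_translate! are empty by the \emph{obtlbi-empty} lemma. (vi) \verb!ctxob! clauses starting from \verb!T! would require \verb!MSR!/\verb!CSE!/\verb!ISB! sources, none of which is a \verb!T!. (vii) \verb!obfault! and \verb!obETS! are empty, as there are no faults under the anti-abstraction condition. (viii) Finally, the bare \verb!iio! inclusion contributes the first disjunct.

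The main obstacle is not the enumeration itself but the \emph{immediacy} check. I must verify that the composite edge \verb![T];iio;[M];po;[W]! is a single immediate \verb!ob!-edge rather than a decomposable chain: concretely, that no intermediate event sits strictly \verb!ob!-between the \verb!T! and the \verb!W! in a way that would split it into two shorter \verb!imm(ob)! steps. The \verb!iio!-segment is handled by the fact that \verb!iio! is itself the first disjunct, so the \verb!tob!-edge is only needed when the \verb!po!-step crosses instruction boundaries; there, I would rule out candidate splitters by arguing that (a) under the anti-abstraction condition the only \verb!ob!-successors of the intermediate \verb!M! that are \verb!ob!-predecessors of \verb!W! are themselves already on the chain, and (b) no \verb!T!-sourced \verb!ob!-edge other than to \verb!iio!-siblings exists, by the enumeration above, so the chain cannot be re-entered. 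The argument is delicate because one must appeal to both the enumeration for \verb![T];imm(ob)! and the dual \emph{ob-to-T} lemma, and care is needed not to circularly presuppose the statement being proved.
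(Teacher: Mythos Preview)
Your enumeration is essentially the paper's approach---the paper simply says ``by examination of the edges''---and your case analysis is largely sound. Two points are worth flagging.

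First, the \emph{immediacy} concern is misplaced. In this paper \verb!imm(ob)! is not the graph-theoretic transitive reduction of the (already closed) \verb!ob!; it is the \emph{generating} relation, i.e.\ the union of the individual clauses (\verb!obs!, \verb!dob!, \verb!tob!, \verb!iio!, \ldots) before closure. This is how it is used throughout the \emph{ob-acyclic-preserved} proof: a minimal cycle is a sequence of steps, and each step is one of the named clauses. Hence once you have shown that \verb![T];iio;[M];po;[W]! is a clause of \verb!tob!, it is automatically an element of \verb!imm(ob)!; there is no separate check that it cannot be ``split'' through an intermediate \verb!ob!-reachable event. The entire second paragraph of your proposal can be dropped.

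Second, your claim that \verb!dob! ``contributes nothing from a \verb![T]!-prefix'' is not quite right. In the extended model, \verb!speculative! is enlarged to include \verb![T];instruction-order! (this is visible in the paper's analysis of the \verb!speculative;[CSE]! and ``extended \verb!dob!'' cases inside the \emph{ob-acyclic-preserved} proof). Since the base \verb!dob! contains \verb!speculative;[W]!, one obtains a \verb!dob! edge \verb![T];instruction-order;[W]! starting from a translate event. However, unfolding \verb!instruction-order! shows this is contained in \verb![T];iio;[M];po;[W]!, so it lands in your second disjunct and the stated conclusion is unaffected. You should acknowledge this case rather than dismiss \verb!dob! outright.
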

\begin{proof}
  By examination of the edges.
\end{proof}

\begin{lemma}[instruction-order-compress]{\ }\\
  \verb!instruction-order ; [T] ; iio ; [M] ; po! $\subseteq$ \verb!instruction-order!
\end{lemma}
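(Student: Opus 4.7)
The plan is to unfold the three relations on the left-hand side and show that composing them cannot leave the \verb!instruction-order! relation. Recall that \verb!iio! is intra-instruction order, so \verb![T] ; iio ; [M]! can only connect events of the \emph{same} instruction; and \verb!po! is the program order on all events, which decomposes into its intra-instruction part (a subset of \verb!iio!) and its inter-instruction part (which is \verb!instruction-order! by definition). Finally, \verb!instruction-order! is transitive and closed under extension by same-instruction events on the right (i.e., \verb!instruction-order ; iio! $\subseteq$ \verb!instruction-order!), since it simply asks that the source event belong to a strictly program-order-earlier instruction than the target's.

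First, I would fix a witnessing chain $a \xrightarrow{\mathrm{io}} t \xrightarrow{\mathrm{iio}} t' \xrightarrow{\mathrm{iio}} m \xrightarrow{\mathrm{po}} e$ with $t \in \mathit{T}$ and $m \in \mathit{M}$, and observe that $t$ and $m$ live in the same instruction $I$ (because they are related by \verb!iio!). Hence $a$ belongs to some instruction $I_a$ strictly program-order earlier than $I$, i.e.\ earlier than the instruction containing $m$.

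Next, I would split on the final \verb!po! step. If $e$ also belongs to $I$, then $I_a$ is still strictly program-order earlier than the instruction of $e$, so $(a,e) \in \verb!instruction-order!$ directly. If $e$ belongs to an instruction $I_e$ strictly later than $I$ in program order, then $I_a$ is strictly earlier than $I_e$ by transitivity of the program order on instructions, and again $(a,e) \in \verb!instruction-order!$.

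The main obstacle is purely bookkeeping: being careful that \verb!instruction-order! is indeed defined as the inter-instruction fragment of program order lifted through the membership-in-instruction function, and that \verb!iio! is confined to a single instruction. Once these definitional facts are in hand, the lemma reduces to the inclusion \verb!instruction-order ; (iio $\cup$ po)! $\subseteq$ \verb!instruction-order! composed with a \verb![T]!-preserving step, which is immediate. I expect no nontrivial case analysis beyond the \verb!po!-split above, so the proof should be very short once the definitions are unfolded.
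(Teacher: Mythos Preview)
Your approach is essentially the paper's, argued semantically rather than by algebraic unfolding. The paper simply expands \verb!instruction-order! to \verb!iio^-1 ; fpo ; iio! and \verb!po! to \verb![M|F|C] ; iio^-1 ; fpo ; iio ; [M|F|C]!, notes that the middle segment \verb!iio ; [T] ; iio ; [M] ; [M|F|C] ; iio^-1! stays within a single instruction and therefore collapses, and is left with \verb!iio^-1 ; fpo ; fpo ; iio ; [M|F|C]!, which is contained in \verb!instruction-order! by transitivity of \verb!fpo!. That is exactly your ``$I_a$ earlier than $I$ earlier than $I_e$'' argument in relational form.

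One small definitional mismatch worth fixing: in this model \verb!po! is defined through \verb!fpo! and is therefore strictly inter-instruction (and restricted to \verb!M|F|C! endpoints), so your case ``$e$ also belongs to $I$'' is vacuous. This does not affect correctness---it just means your case split is unnecessary and the argument is even shorter than you anticipated.
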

\begin{proof}
  If we unfold the definitions of \verb!instruction-order! and \verb!po!, we have
  \\
  \verb!iio^-1 ; fpo ; iio ; [T] ; iio ; [M] ; [M|F|C] ; iio^-1 ; fpo ; iio ; [M|F|C]!
  \\
  which we can simplify into
  \\
  \verb!iio^-1 ; fpo ; fpo ; iio ; [M|F|C]!
  \\
  which means we have
  \\
  \verb!instruction-order!.
\end{proof}

\begin{lemma}[instruction-order-compress-iio]
  \verb!instruction-order ; iio ; po! $\subseteq$ \verb!instruction-order!
\end{lemma}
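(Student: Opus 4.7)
The plan is to mirror the proof of the preceding lemma \textbf{instruction-order-compress} essentially verbatim, just with one extra \verb!iio! in the middle to absorb. First I would unfold both \verb!instruction-order! on the left and \verb!po! on the right, using the same expansions as before (\verb!instruction-order! $=$ \verb!iio^-1 ; fpo ; iio! and \verb!po! $=$ \verb![M|F|C] ; iio^-1 ; fpo ; iio ; [M|F|C]!), so that the composite \verb!instruction-order ; iio ; po! reads
\[
\texttt{iio\textasciicircum-1 ; fpo ; iio ; iio ; [M|F|C] ; iio\textasciicircum-1 ; fpo ; iio ; [M|F|C]}.
\]

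Next I would collapse the middle \verb!iio ; iio ; [M|F|C] ; iio^-1! segment in the same way the previous lemma collapses its \verb!iio ; [T] ; iio ; [M] ; [M|F|C] ; iio^-1! segment: the two adjacent \verb!iio! steps compose (\verb!iio! being a within-instruction total order, hence transitive), and then walking forward through \verb!iio! to an \verb![M|F|C]! event and back through \verb!iio^-1! stays inside one instruction, so the whole segment fits between the two \verb!fpo! edges at the same instruction boundary. That produces \verb!iio^-1 ; fpo ; fpo ; iio ; [M|F|C]!, which, by transitivity (or idempotence on compositions) of \verb!fpo!, reduces to \verb!iio^-1 ; fpo ; iio ; [M|F|C]!, i.e.\ exactly \verb!instruction-order! (with the harmless trailing \verb![M|F|C]! restriction already present in its codomain).

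The main obstacle is justifying the collapse of the middle \verb!iio ; iio ; [M|F|C] ; iio^-1! segment cleanly. This relies on the same algebraic facts about \verb!iio! and \verb!fpo! that the previous lemma's one-line simplification already uses implicitly: that \verb!iio! is a per-instruction total order (so \verb!iio ; iio! $\subseteq$ \verb!iio! and \verb!iio ; [M|F|C] ; iio^-1! stays within a single instruction) and that \verb!fpo!'s boundary events are precisely the \verb![M|F|C]!s, so the forward-then-backward \verb!iio! detour can be glued onto the adjacent \verb!fpo! step without changing its endpoints at the instruction boundary. If those facts are not already treated as implicit background (as they appear to be in the previous proof), a small auxiliary lemma stating \verb!iio ; [M|F|C] ; iio^-1! $\subseteq$ \verb!iio* ; iio^-1*! restricted to one instruction would suffice; given the earlier proof's terseness, I would expect to just cite it and reuse the same simplification step.
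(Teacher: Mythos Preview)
Your proposal is correct, but considerably more elaborate than what the paper actually does. The paper's entire proof is one line: ``\verb!iio! is transitive, and is the RHS of \verb!instruction-order!.'' That is, since \verb!instruction-order! is (after unfolding) \verb!iio^-1 ; fpo ; iio!, the trailing \verb!iio! absorbs the extra \verb!iio! directly by transitivity, and the remaining composition with \verb!po! then collapses exactly as in the previous lemma without needing to be spelled out again. Your full unfolding reaches the same conclusion via the same algebraic facts, just made explicit; the paper simply points at the one new observation (absorbing the added \verb!iio! into the right-hand \verb!iio! of \verb!instruction-order!) and leaves the rest implicit.
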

\begin{proof}
  \verb!iio! is transitive, and is the RHS of \verb!instruction-order!.
\end{proof}

\begin{lemma}[ob-acyclic-preserved]
  If \verb!G! satisfies the VA anti-abstraction condition,
  if there is a cycle in translate-\verb!ob!,
  then there is a cycle in plain-\verb!ob!.
\end{lemma}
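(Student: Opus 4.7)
The plan is to take a cycle in translate-\verb!ob! and systematically eliminate every translate event \verb!T! from it, showing that what remains is a cycle in plain-\verb!ob!. First, I would invoke the previous lemma \textbf{no-cycle-ob-to-init} to assume without loss of generality that the cycle does not pass through an initial write. Since under the VA anti-abstraction condition the static injective page tables are written only by initial writes, every \verb!trfe! edge goes from an initial write, so appealing to \textbf{no-cycle-ob-to-init} lets me discard cycles that use \verb!trfe! altogether. This immediately prunes the \verb!obs! additions that distinguish translate-\verb!ob! from plain-\verb!ob! on the \verb![T]!-source side.

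Second, I would pick a minimal cycle and perform a case analysis on each occurrence of a \verb![T]! event, using \textbf{ob-to-T} for the incoming edge and \textbf{ob-from-T} for the outgoing edge. Since \verb!trfe!-incoming edges are ruled out by the first step, the incoming possibilities reduce to \verb!iio!, \verb![T];iio;[T]!, \verb![CSE];instruction-order!, and \verb!po;[ERET];instruction-order!. The outgoing possibilities are \verb!iio! or \verb![T];iio;[M];po;[W]!. Each of the (at most) eight combinations is a short composition that I would show collapses into a plain-\verb!ob! edge between the predecessor and successor of the eliminated \verb![T]!: the \verb!iio!/\verb!iio! case collapses by transitivity of \verb!iio!; the \verb![CSE];instruction-order! case followed by \verb!iio;[M];po;[W]! collapses by \textbf{instruction-order-compress} into a plain \verb!bob! edge; the \verb!po;[ERET];instruction-order! cases collapse similarly; and the \verb![T];iio;[T]! case is handled by induction, since it merely shifts the problem to an adjacent \verb![T]!, which being part of the same instruction must ultimately exit via one of the other clauses.

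Third, I would note that because the VA anti-abstraction condition kills all the translation-specific contributions to \verb!ob! (no faults so the \verb!tdata;[T_f]! clause of \verb!addr! is empty; no \verb!MSR TTBR! so the first \verb!ctxob! clause vanishes; static injective tables so \verb!(addr|data|ctrl);trfi! in \verb!dob! is empty; and \verb!obtlbi! is empty by the previous lemma), every edge between non-\verb!T! events appearing in translate-\verb!ob! is already an edge of plain-\verb!ob!. After the elimination step above, the remaining cycle therefore consists entirely of plain-\verb!ob! edges between non-\verb!T! events, which is exactly a cycle in plain-\verb!ob!.

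The main obstacle I anticipate is keeping the case analysis tight around the \verb!instruction-order! edges: the \verb![CSE];instruction-order! and \verb!po;[ERET];instruction-order! clauses arrive at \verb![T]! from an arbitrary \verb!imm(ob)! prefix, and after combining with an \verb!iio;[M];po;[W]! exit one must produce an edge of the \emph{right} shape in plain-\verb!ob!. The two compression lemmas \textbf{instruction-order-compress} and \textbf{instruction-order-compress-iio} are tailored for exactly these shapes, but a subtlety is that \verb!instruction-order! is not itself a clause of plain-\verb!ob!; rather it must be witnessed by a subsequent plain-\verb!ob! edge (a \verb!bob! or \verb!dob! edge from the later instruction's fetch/explicit events). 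Making this precise may require tracking not just the compressed edge but also the immediately following plain-\verb!ob! edge in the cycle, effectively performing the elimination on pairs of consecutive edges rather than individual ones. If that still leaves a gap, a fallback would be to reason semantically at the level of the candidate execution's event order and reconstruct the cycle in the abstract model by induction on the number of \verb!T! events, which is finite.
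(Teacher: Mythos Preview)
Your plan is essentially the paper's proof, using the same ingredients (\textbf{no-cycle-ob-to-init}, \textbf{ob-to-T}, \textbf{ob-from-T}, the two compression lemmas, and minimality) and the same overall architecture of taking a minimal cycle and deriving a contradiction. The only organisational difference is that the paper case-splits on the new \emph{edge types} (\texttt{iio}, \texttt{tob}, \texttt{obtlbi}, \texttt{ctxob}, the extensions of \texttt{dob}/\texttt{bob}/\texttt{obs}, \texttt{obfault}, \texttt{obets}) rather than on the \texttt{[T]} events, which lets it dispose of several cases in one line by observing the relevant relation is empty under the anti-abstraction assumptions.

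On the obstacle you identify: you are right that the compressed edge \texttt{[CSE];instruction-order} (or \texttt{po;[ERET];instruction-order}) is not a plain-\texttt{ob} clause. The paper does \emph{not} try to turn it into one. Instead, it observes that after applying \textbf{instruction-order-compress} or \textbf{instruction-order-compress-iio}, the resulting edge is still the same \texttt{ctxob} clause of translate-\texttt{ob}, and the cycle has become strictly shorter --- contradicting minimality. So there is no need to track pairs of edges or to produce a plain-\texttt{ob} witness at that point; the minimality hypothesis does the work. Your fallback of ``induction on the number of \texttt{T} events'' is morally the same argument, but phrasing it as minimality of the cycle length (as you already set up in your second paragraph) makes each case a one-line contradiction rather than a recursive elimination.
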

\begin{proof}

  Consider a minimal cycle in translate-\verb!imm(ob)! (that is, the transitive
  closure of the \verb!ob! of the model with translation). Let $n$ be its length.
  \\
  We show that there is a cycle in plain-\verb!ob!.
  \\
  Assume, for contradiction, that the cycle contains an edge that is not in
  plain-\verb!ob! (that is, the \verb!ob! of the model without translation):
  \begin{itemize}

  \item iio
  \\
    by case split:
    \begin{itemize}

\item    \verb![T] ; iio; [M]!:
      by Lemma ob-to-T, the \verb!ob! edge to the left has to be either
      \begin{itemize}
      \item \verb!iio!
        in which case, by transitivity of \verb!iio!, there is a shorter cycle, so we
        have a contradiction.
        \\
        Let us call this Case IIOtrans.
      \item \verb!trfe!, which is from an initial write by the VA abstraction condition,
      \\
        but by Lemma no-cycle-ob-to-init, the cycle cannot exist.
      \item \verb!imm(ob)^(n-2); [T]; iio; [T]; iio; [M]!
      \\
        then we have
        \verb!imm(ob)^(n-2); [T]; iio; [M]!, which involves one fewer translate,
        \\
        so we have a contradiction.
      \item \verb!imm(ob)^(n-2) ; [CSE] ; instruction-order!
      \\
        This is similar to IIOtrans.
      \item \verb!imm(ob)^(n-2) ; po ; [ERET] ; instruction-order ; [T]!
      \\
        This is similar to IIOtrans.
      \end{itemize}

\item    \verb![T] ; iio ; [T]!:
\\
      So the whole cycle looks like
      \verb!imm(ob)^(n-1) ; [T] ; iio ; [T]!

      By Lemma ob-to-T, we have either
      \begin{itemize}
      \item \verb!imm(ob)^(n-2) ; iio ; [T] ; iio ; [T]!
      \\
        See Case IIOtrans.
      \item \verb!imm(ob)^(n-2) ; trfe!
      \\
        the \verb!trfe! is from an initial write by the VA abstraction condition,
        \\
        and by Lemma no-cycle-ob-to-init, the cycle cannot exist.
      \item \verb!imm(ob)^(n-2); [T]; iio; [T]!
      \\
        but we already have \verb!iio! to the second T,
        \\
        so we have a cycle involving one fewer translate,
        \\
        so we have a contradiction.
      \item \verb!imm(ob)^(n-2) ; [CSE] ; instruction-order!
      \\
        This is similar to IIOtrans.
      \item \verb!imm(ob)^(n-2) ; po ; [ERET] ; instruction-order ; [T]!
      \\
        This is similar to IIOtrans.
      \end{itemize}
\end{itemize}

\item \verb!tob! has
     \begin{itemize}
    \item \verb![T_f] ; tfr!
    \\
      which has a fault, so we have a contradiction.

    \item \verb!([T_f] ; tfri) & (po ; [dsb.sy] ; instruction-order)^-1!
    \\
      which has a fault, so we have a contradiction.

    \item \verb!speculative ; trfi!
      which is empty, because of the static page table.
   \end{itemize}

\item \verb!obtlbi!, which is empty by Lemma obtlbi-empty.

\item \verb!ctxob! has
  \begin{itemize}
  \item \verb!speculative ; [MSR TTBR]!
  \\
      by the VA abstraction condition, there is no MSR TTBR
    \item \verb![CSE] ; instruction-order!
      \\
      So the whole cycle looks like
      \\
      \verb![CSE] ; instruction-order ; imm(ob)^(n-1)!
      \\
      Because \verb!instruction-order! is acyclic, $n \geq 1$, so we have
      \\
      \verb![CSE] ; instruction-order ; imm(ob) ; imm(ob)^(n-2)!
      \\
      By Lemma ob-from-T, we have either:
      \begin{itemize}
      \item \verb![CSE] ; instruction-order ; iio ; imm(ob)^(n-2)!
      \\
        which means that by Lemma instruction-order-compress, we have
        \\
        \verb![CSE] ; instruction-order ; imm(ob)^(n-2)!
        \\
        so we have a cycle involving one edge fewer, so we have a contradiction.
      \item \verb![CSE] ; instruction-order ; [T] ; iio ; [M] ; po ; [W] ; imm(ob)^(n-2)!
      \\
        which means that by Lemma instruction-order-compress, we have
        \\
        \verb![CSE] ; instruction-order ; imm(ob)^(n-2)!
        \\
        so we have a cycle involving one edge fewer, so we have a contradiction.
       \end{itemize}

     \item \verb![ContextChange] ; po ; [CSE]!
     \\
      by the VA abstraction condition, there is no \verb!ContextChange!.

    \item \verb!speculative ; [CSE]!
    \\
      The \verb!CSE! has to be an \verb!ISB!, because there are no exceptions,
      and the \verb!speculative! is either
      in \verb!dob! in the plain model, so we have a contradiction,
      or in \verb![T]; instruction-order!.
      \\
      So the whole cycle looks like
      \verb!imm(ob)^(n-1) ; [T] ; iio ; [M] ; po ; [ISB]!

      Because \verb!po | iio! is acyclic, $n-1$ has to be $\geq 1$, so by Lemma ob-to-T, we have either
      \begin{itemize}
      \item \verb!imm(ob)^(n-2); iio; [T]; iio; [M]; po; [ISB]!
      \\
        See Case IIOtrans.
      \item \verb!trfe!, which is from an initial write by the VA abstraction condition,
      \\
        but by Lemma no-cycle-ob-to-init, the cycle cannot exist
      \item \verb!imm(ob)^(n-2); [T]; iio; [T]; iio; [M]; po; [ISB]!
      \\
        but we already have \verb!iio! to the second T,
        \\
        so we have a cycle involving one fewer translate,
        \\
        so we have a contradiction.
      \item \verb!imm(ob)^(n-2); [CSE] ; instruction-order ; [T] ; iio ; [M] ; po ; [ISB]!
      \\
        which means that by Lemma instruction-order-compress, we have
        \\
        \verb!imm(ob)^(n-2); [CSE] ; instruction-order!
        \\
        so we have a cycle involving one edge fewer,
        \\
        so we have a contradiction.
      \item \verb!imm(ob)^(n-2) ; po ; [ERET] ; instruction-order ; [T] ; iio ; [M] ; po ; [ISB]!
      \\
        is similar
      \end{itemize}
      
  \item \verb!po ; [ERET] ; instruction-order ; [T]!
  \\
      So the whole cycle looks like
      \\
      \verb!po ; [ERET] ; instruction-order ; [T] ; imm(ob)^(n-1)!
      \\
      Because instruction-order is acyclic, $n \geq 1$, so we have
      \\
      \verb!po ; [ERET] ; instruction-order ; [T] ; imm(ob) ; imm(ob)^(n-2)!
      \\
      By Lemma ob-from-T, we have either:
      \begin{itemize}
      \item \verb!po ; [ERET] ; instruction-order ; [T] ; iio ; imm(ob)^(n-2)!
      \\
        which means that by Lemma instruction-order-compress-iio, we have
        \\
        \verb!po ; [ERET] ; instruction-order ; imm(ob)^(n-2)!
        \\
        so we have a cycle involving one edge fewer, so we have a contradiction.
      \item \verb!po ; [ERET] ; instruction-order ; [T] ; ([T] ; iio ; [M]; po ;! \verb! [W]) ; imm(ob)^(n-2)!
      \\
        which means that by Lemma instruction-order-compress, we have
        \\
        \verb!po ; [ERET] ; instruction-order ; imm(ob)^(n-2)!
        \\
        so we have a cycle involving one edge fewer, so we have a contradiction.
      \end{itemize}

    \end{itemize}
  \item extended \verb!dob!:
    \begin{itemize}
    \item involving \verb!trfi! from non-initial writes,
    which contradicts our assumption about static translation.
    \item or \verb![T] ; instruction-order ; [W]!,
    \\
    so \verb![T] ; iio ; [M] ; po ; [W]!
\\
      So the whole cycle looks like
      \verb!imm(ob)^(n-1) ; [T] ; iio ; [M] ; po ; [W]!

      Because \verb!po | iio! is acyclic, $n-1$ has to be $\geq 1$, so by Lemma ob-to-T, we have either
      \begin{itemize}
      \item \verb!imm(ob)^(n-2); iio; [T]; iio; [M]; po; [W]!
      \\
        See Case IIOtrans.
      \item \verb!trfe!, which is from an initial write by the VA abstraction condition,
      \\
        but by Lemma no-cycle-ob-to-init, the cycle cannot exist
      \item \verb!imm(ob)^(n-2); [T]; iio; [T]; iio; [M]; po; [W]!
      \\
        but we already have \verb!iio! to the second T,
        \\
        so we have a cycle involving one fewer translate,
        \\
        so we have a contradiction.
      \item \verb!imm(ob)^(n-2); [CSE] ; instruction-order ; [T] ; iio ; [M] ; po ; [W]!
      \\
        which means that by Lemma instruction-order-compress, we have
        \\
        \verb!imm(ob)^(n-2); [CSE] ; instruction-order!
        \\
        so we have a cycle involving one edge fewer,
        \\
        so we have a contradiction.
      \item \verb!imm(ob)^(n-2) ; po ; [ERET] ; instruction-order ; [T] ; iio ; [M] ; po ; [W]!
      \\
        is similar
      \end{itemize}
      \end{itemize}

  \item extended \verb!bob!, but only involving TLBI,
    which contradicts our assumption of no TLBI.

 \item extended \verb!obs!, but only involving \verb!trfe!,
    by the VA abstraction condition, the only writes to page tables are from
    initial writes, and by Lemma no-cycle-ob-to-init, there are no \verb!ob! cycles via
    initial writes, so there is no cycle.
  \item \verb!obfault!, which involves a fault, which contradicts our assumptions.
  \item \verb!obets!, which involves a fault or a TLBI, which contradicts our assumptions.
  \end{itemize}

  All the other edges are in plain-\verb!ob! by definition.
\end{proof}

\begin{theorem}[VA anti-abstraction]
\label{thm:vaantiabstraction}
  If the translation extension condition holds,
  then there exists a \verb!Gtr! that satisfies the VA anti-abstraction condition such that
  \verb!Gtr! is a stitching of \verb!Gabs! with the \verb!pt_initial_state! according to \verb!translate! in \verb!tr_ctxt! and
  \verb!Gtr! is consistent wrt. the model with translation.
\end{theorem}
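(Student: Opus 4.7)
The plan is to take $G_{tr}$ to be literally the translation extension of $G_{abs}$ specified by the preceding definition --- adding the initial page-table writes over $pt\_pa\_space$ populated with $pt\_initial\_state$; inserting, for each explicit memory access of $G_{abs}$, the translate-read events obtained by running $translate$ in $tr\_ctxt$, placed in \texttt{iio} between the fetch and the access; matching \texttt{tdata} to the abstract \texttt{addr}; and setting \texttt{trf} from each corresponding initial page-table write to its translate. By construction $G_{tr}$ is the translation extension of a consistent execution satisfying the VA abstraction subcondition, so it satisfies the VA anti-abstraction condition, and hence by the ``VA abstraction condition for translation extension'' lemma also the VA abstraction condition (static injective page tables, no TLBI, no MSR TTBR, no context-changing writes).

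It then remains to verify the four consistency axioms of the strong model for $G_{tr}$. The atomic axiom is inherited from $G_{abs}$ because the translation extension does not add any exclusives or release/acquire operations and does not modify the event structure of those already present in $G_{abs}$. The internal (per-physical-address coherence) axiom holds because all non-initial writes in $G_{tr}$ are exactly the writes of $G_{abs}$, now addressing the PA space $id\_map\_lifted(va\_space)$, which by hypothesis is disjoint from $pt\_pa\_space$; the only writes to the latter are pairwise-unrelated initial writes, which cannot close a coherence cycle. The translation-internal axiom is precisely the earlier lemma that \texttt{po-pa; [W]; trf} is empty under the VA anti-abstraction condition, since every \texttt{trf} edge in $G_{tr}$ originates from an initial write.

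The external axiom is the main obstacle, and the plan is to reduce it to the plain-model consistency of $G_{abs}$ via the already-proved ``ob-acyclic-preserved'' lemma. That lemma shows that any cycle in translation-\texttt{ob} under the VA anti-abstraction condition forces a cycle in plain-\texttt{ob}, using in turn the ``obtlbi-empty'' and ``no-cycle-ob-to-init'' lemmas, which apply here precisely because the extension introduces no TLBIs and all page-table writes are initial. Contrapositively, the assumed plain-\texttt{ob} acyclicity of $G_{abs}$ yields translation-\texttt{ob} acyclicity of $G_{tr}$. The only remaining care is to check that the constructed $G_{tr}$ is a well-formed pre-execution with respect to the translation model's intra-instruction semantics --- that the inserted translate events, their \texttt{iio} placement, and the \texttt{tdata} and \texttt{trf} edges are consistent with a run of the ASL walker in $tr\_ctxt$ over the stitched page tables --- which is immediate from the translation extension definition and the hypothesis that $translate$ applied to $va\_space$ translate-reads from within $pt\_pa\_space$ and yields the required injective map.
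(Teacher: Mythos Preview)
Your proposal is correct and follows essentially the same approach as the paper: construct $G_{tr}$ via the translation extension, note it satisfies the VA anti-abstraction condition by definition, and obtain consistency from the preceding lemmas (translation-internal acyclic because all \texttt{trf} come from initial writes, and external acyclic via the \texttt{ob}-acyclic-preserved lemma). The paper's own proof is a two-line appeal to the construction and to Lemma ob-acyclic-preserved; you have simply been more explicit than the paper in also spelling out why the \texttt{atomic} and \texttt{internal} axioms carry over, which the paper leaves implicit.
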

\begin{proof}
  \verb!Gtr! exists by the translation extension construction,
  \\
  and it is consistent by Lemma ob-acyclic-preserved.
\end{proof}

%\end{document}

\newpage
\myappendix{Test results}{app:results}%

\subsection{Isla model results}

Here $\checkmark$ and $\times$ indicate whether or not the model allows an execution with satisfying the final-state constraint given in the test.  All these are as intended. 

\begin{longtable}{p{.6\textwidth}@{}c@{}rc@{}r}

\end{longtable}

\newpage
\noindent Below are the pKVM tests. 
The two tests without results timed out after 4 hours.
\begin{longtable}{p{.7\textwidth}@{}cr}
&  \multicolumn{2}{c}{Strong model } \\
Test Name & allow? & \hfill time \\
pKVM.create\_hyp\_mappings.inv.l2\dotfill  & \hfill $\times$ & \hfill 2493ms \\
pKVM.create\_hyp\_mappings.inv.l3\dotfill  & \hfill $\times$ & \hfill 807ms \\
pKVM.host\_handle\_trap.free\_table.toml\dotfill  & \hfill $\times$ & \hfill 15718265ms \\
\begin{tabular}{@{}l@{}}pKVM.host\_handle\_trap \\ \hspace{.2cm} .stage2\_idmap.change\_block\_size \\ \end{tabular}\dotfill  & \hfill $\times$ & \hfill 7010271ms \\
\begin{tabular}{@{}l@{}}pKVM.host\_handle\_trap \\ \hspace{.2cm} .stage2\_idmap.change\_block\_size.change\_permissions \\ \end{tabular}\dotfill  & \hfill $\times$ & \hfill 2370057ms \\
pKVM.host\_handle\_trap.stage2\_idmap.l3\dotfill  & \hfill $\times$ & \hfill 166475ms \\
\begin{tabular}{@{}l@{}}pKVM.host\_handle\_trap \\ \hspace{.2cm} .stage2\_idmap.l3.already\_exists.concurrent \\ \end{tabular}\dotfill  & \hfill --- & \hfill --- \\
\begin{tabular}{@{}l@{}}pKVM.host\_handle\_trap \\ \hspace{.2cm} .stage2\_idmap.l3.already\_exists \\ \end{tabular}\dotfill  & \hfill --- & \hfill --- \\
pKVM.host\_handle\_trap\_twice.stage2\_idmap.l3\dotfill  & \hfill $\times$ & \hfill 149172ms \\
pKVM.switch\_to\_new\_table\dotfill  & \hfill $\times$ & \hfill 1405ms \\
pKVM.vcpu\_run\dotfill  & \hfill $\times$ & \hfill 956ms \\
pKVM.vcpu\_run.same\_vm\dotfill  & \hfill $\times$ & \hfill 3107ms \\
pKVM.vcpu\_run.update\_vmid\dotfill  & \hfill $\times$ & \hfill 1817ms \\
pKVM.vcpu\_run.update\_vmid.concurrent\dotfill  & \hfill $\times$ & \hfill 8604ms \\

\end{longtable}
\newpage
\subsection{Hardware results}

%\newpage
%\myappendix{Hardware Results}\label{app:hw}%
%\mysetheader{Appendix~\ref{app:hw}: Hardware results}

Below is a table of our results from running our hand-written
hardware tests on the various machines we have available:
a Raspberry Pi 4;
a Raspberry Pi 3B+;
and an AWS \cc{m6g-metal} instance (claiming to be an A76).
Our hardware test harness uses a different form of test to our Isla tooling; tests with the same name have manually-checked correspondence.
%\newpage
%\thispagestyle{empty}

{
\tiny
%\vspace*{-20mm}
\hspace*{-25mm}
\scalebox{0.7}{
\begin{minipage}{1.2\textwidth}
% [inline block 1: 1 envs, 31805 chars -> data_tex | \begin{tabular}{l l  | r r l | r r l | r r l l}      \textbf{Type}       & \textbf{Name}                                ...]

\end{minipage}
}
%\TODO{BS: does not fit on page... maybe mess with page size here?}
}

\newpage

\mysetheader{References}
\fancyhead[EOL]{\nouppercase{\leftmark}}

\bibliographystyle{plain}
\bibliography{bib}

\end{document}
\endinput
